\documentclass[11pt]{amsart}
\usepackage{amscd,amssymb,amsmath,latexsym,enumerate}
\usepackage{bbm}
\usepackage[mathscr]{euscript}
\usepackage{tikz}
\usepackage{tikz-cd}
\usepackage{epsfig} 
\usepackage{stackengine}
\textheight 21.5truecm
\textwidth 16truecm
\oddsidemargin -0.5truecm
\evensidemargin 0truecm
\topmargin 2cm
\topskip 0cm
\voffset -1.5cm
\parindent=0cm
\setlength{\parindent}{0cm}
\numberwithin{equation}{section}
\usepackage{hyperref}

\usepackage{color}

\newtheorem{theo}{Theorem}[section]

\newtheorem{defini}[theo]{Definition}
\newtheorem{proposi}[theo]{Proposition}
\newtheorem{lemma}[theo]{Lemma}
\newtheorem{coro}[theo]{Corollary}
\newtheorem{exam}[theo]{Example}

\theoremstyle{remark}
\newtheorem*{rem}{Remark}


\newcommand{\Aa}{{\mathcal A}}

\newcommand{\Ee}{{\mathcal E}}

\newcommand{\Gg}{{\mathcal G}}
\newcommand{\Hh}{{\mathcal H}}

\newcommand{\Jj}{{\mathcal J}}
\newcommand{\Kk}{{\mathcal K}}
\newcommand{\Ll}{{\mathcal L}}

\newcommand{\Tt}{{\mathcal T}}

\newcommand{\Vv}{{\mathcal V}}



\newcommand{\CM}{{\mathbb C}}

\newcommand{\N}{{\mathbb N}}
\newcommand{\QM}{{\mathbb Q}}

\newcommand{\RM}{{\mathbb R}}

\newcommand{\ZM}{{\mathbb Z}}





\newcommand{\bs}{{\mathscr B}}

\newcommand{\js}{{\mathscr J}}

\newcommand{\ws}{{\mathscr W}}






\newcommand{\Orb}{\textrm{Orb}}                 
\newcommand{\tra}{\mbox{\sc t}}                 
\newcommand{\Inv}{\js}                  		
\newcommand{\dic}{\ws}               			
\newcommand{\scf}{\wp}	  						
\newcommand{\csp}{\Aa^\ZM}	  					
\newcommand{\QS}{{\mathfrak S}}                 

\renewcommand{\c}{c}
\renewcommand{\d}{d}
\newcommand{\ca}{c_s}
\newcommand{\cb}{c_l}

\newcommand{\dist}{\textrm{dist}}

\renewcommand{\mu}{\mathrm{Leb}}


\definecolor{grau}{rgb}{0.4,0.4,0.4}

\definecolor{green}{RGB}{0, 180, 0}
\definecolor{cyan}{RGB}{0, 180, 180}
\definecolor{yellow}{RGB}{211,211,0}

\renewcommand{\subset}{\subseteq}

\newcommand{\ol}[1]{\overline{#1}}


\usepackage[
   backend=bibtex,
   maxcitenames=6,
   maxbibnames=6,
   maxalphanames = 5,
   style=alphabetic]{biblatex}
\addbibresource{litbib.bib}

\begin{document}

\title[Spectral regularity and defects]{Spectral regularity and defects for the Kohmoto model}

\author{Siegfried Beckus, Jean Bellissard, Yannik Thomas}

\address{Institut für Mathematik\\
Universit\"at Potsdam\\
Potsdam, Germany}
\email{beckus@uni-potsdam.de}

\address{Georgia Institute of Technology\\
School of Mathematics\\
Atlanta GA, USA}
\email{jeanbel@me.com}

\address{Institut für Mathematik\\
Universit\"at Potsdam\\
Potsdam, Germany}
\email{yathomas@uni-potsdam.de}

\begin{abstract}
We study the Kohmoto model including Sturmian Hamiltonians and the associated Kohmoto butterfly. 
We prove spectral estimates for the operators using Farey numbers. 
In addition, we determine the impurities at rational rotations leading to the spectral defects in the Kohmoto butterfly.
Our results are similar to the ones obtained for the Almost-Mathieu operator and the associated Hofstadter butterfly.
\end{abstract}


\maketitle
\tableofcontents


\section{Introduction and main results}
\label{sect-MainResult}

For $\alpha\in[0,1]$ and a coupling constant $V\neq 0$, consider the self-adjoint bounded operators $H_\alpha:=H_{\alpha,V}:\ell^2(\ZM)\to\ell^2(\ZM)$ defined by
\begin{equation}
\label{eq:KohmOperat}
\big(H_{\alpha,V} \psi\big)(n) 
	= \psi(n-1) + \psi(n+1) + V\, \chi_{[1-\alpha,1)}(n\alpha\mod\, 1) \cdot \psi(n)
	\,,\qquad \psi\in\ell^2(\ZM)\,,\; n\in\ZM.
\end{equation}
This family of operators $H_\alpha$ for all $\alpha\in[0,1]$ was introduced in \cite{KKT83} and is called {\em Kohmoto model}. 
The operators $H_\alpha$ for $\alpha\in[0,1]\setminus\QM$ irrational are also called Sturmian Hamiltonians. 
If $\alpha\in[0,1]$ is rational, $H_\alpha$ is periodic and so its spectral properties can be computed using Floquet-Bloch theory. 
The spectra of these operators with rational rotation $\alpha$ are plotted in Figure~\ref{fig-Kohmoto} -- a graphic we call {\em Kohmoto butterfly}.

In order to get access to the irrational case, rational approximations have been extensively used in the literature where the ground was laid in the works \cite{Cas86,Sut89,BIST89,BIT91,Raym95}. 
The Sturmian Hamiltonians have Cantor spectrum of Lebesgue measure zero \cite{Sut89,BIST89} for all $V>0$. A hierarchical structure of the rational approximations was developed in \cite{Raym95} (see also the review \cite{BaBeBiRaTh24}) for large couplings $V>4$ and recently extended in \cite{BBL24} to $V\neq 0$. This allows to compute the integrated density of states \cite{Raym95,BBL24} and estimate the Hausdorff dimension, see e.g. \cite{Raym95,KiKiLa03,DaEmGoTc08,LiQuWe14,DaGoYe16} and references therein. 
The reader is referred to \cite{Dam07_survey,DaEmGo15-survey,Dam17-Survey,DaFi24-book_2} for a more detailed elaboration. 

\begin{figure}[htb]
 \centering
	\includegraphics[scale=0.71]{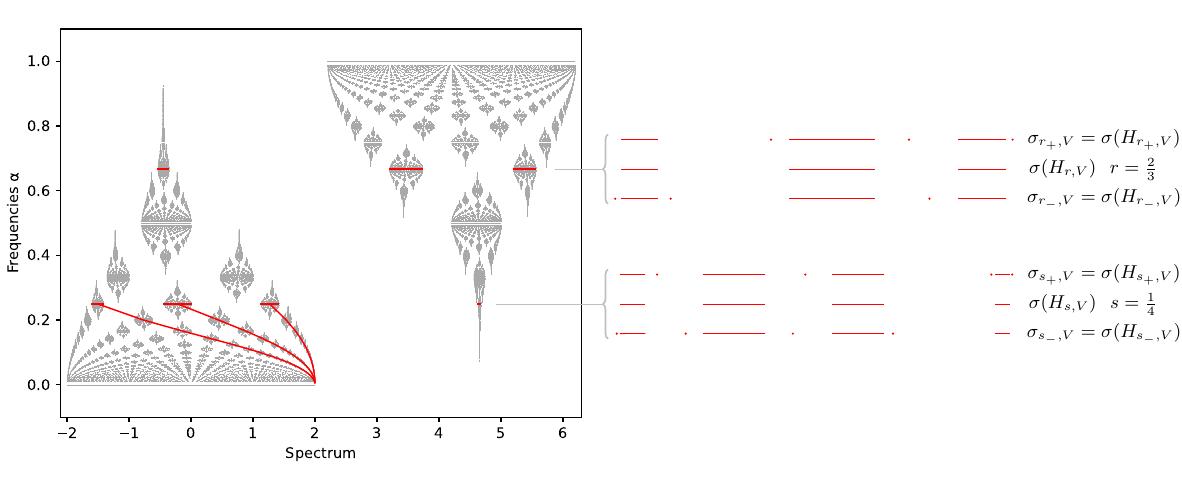}
 \caption{A plot of the Kohmoto butterfly for $V>4$. The spectrum $\sigma(H_{r,V})$ at each $r\in[0,1]\cap\QM$ is split up into three points: the lower limit $r_-$, the upper limit $r_+$ and $r$, confer Proposition~\ref{prop-Defects}. Proposition~\ref{prop:DiscreteSpectrum=q_points} asserts that the upper and lower limiting spectra have exactly one additional (compared to  $\sigma(H_{r,V})$) point in each bounded spectral gap plus one point in one of the unbounded spectral gaps of $\sigma(H_{r,V})$.
This is demonstrated on the right hand side for $r=\frac{2}{3}$ and $s=\frac{1}{4}$.
The red lines indicate the closing of the spectral gaps which is used to prove the optimality of the spectral estimates in Theorem~\ref{theo-SpeCon}.
 }
 \label{fig-Kohmoto}
\end{figure}

A first crucial step is to observe that the spectra $\sigma(H_{\alpha_n,V})$ of rational approximations $\alpha_n$ of $\alpha\in[0,1]\setminus\QM$ converge to the spectrum $\sigma(H_{\alpha,V})$ if $\alpha_n\to \alpha$. 
Here the convergence of the spectrum is measured in the Hausdorff metric $d_H$ on the set $\Kk(\RM)$ of compact subsets of $\RM$. 
In \cite{BIT91}, the authors used $C^\ast$-algebraic methods to prove that the spectral map for $V\neq 0$,
$$
\Sigma_V:[0,1]\to \Kk(\RM), \quad \alpha\mapsto \sigma(H_{\alpha,V}),
$$
is continuous at the irrational points $\alpha$ and discontinuous at the rational points $\alpha$, where $[0,1]$ is equipped with the Euclidean topology.
If $r=\frac{p}{q}\in[0,1]$ with $p,q$ coprime, then $\Sigma_V(r)$ consists of exactly $q$-intervals and the connected components of $\RM\setminus\Sigma_V(r)$ are called {\em spectral gaps}. 
Already from the Kohmoto butterfly in Figure~\ref{fig-Kohmoto}, the discontinuity can be numerically observed if $\alpha$ is rational, see also \cite{BIT91}. 
For instance, at $r=\frac{2}{3}$ or $s=\frac{1}{4}$, we observe $3$ or  $4$ points in the spectral gaps if we approach the rational number from above or below respectively. 
These points obtained in the spectral gaps are called {\em spectral defects}.

Zooming out again, we address the following questions in this work.
\begin{enumerate}[(Q1)]
\item \label{enu:Q1_Lipschitz} What is the right metric on $[0,1]$ making the map $\Sigma_V$ not only continuous but even Lipschitz continuous?  - Theorem~\ref{theo-SpeCon}
\item \label{enu:Q2_Optimal} Are the obtained spectral estimates optimal? - Theorem~\ref{theo-Optimality}
\item \label{enu:Q3_Defects} Can one describe these impurities at rational points explicitly on the level of the underlying dynamical system? - Proposition~\ref{prop-Defects}
\item \label{enu:Q4_OrderingPoints} How many spectral defects are generated in the spectral gaps of $\Sigma_V(r)$ for $r\in[0,1]\cap\QM$, if it is approximated from above or below and what is their relative ordering? - Proposition~\ref{prop:DiscreteSpectrum=q_points}
\end{enumerate}

Our starting point are recent developments \cite{BBdN18,BBC18,BBdN20,BT21} proving that the spectral convergence is tightly connected to the convergence of the underlying dynamical systems.

We prove that the Farey numbers (introduced in Section~\ref{ssect-FareyNumbers}) play a central role here to define a new metric $d_F$ on the interval $[0,1]$. 
Then we show that the spectral map $\Sigma_V$ extends uniquely to the completion $\ol{[0,1]}_F$ of $[0,1]$ with respect to $d_F$. 
In order to do so, we isometrically represent the Farey space $(\ol{[0,1]}_F, d_F)$ as the boundary of the interval Farey tree and as a suitable subspace of dynamical systems - the space of {\em mechanical systems} (see Section~\ref{ssect-subshifts}). These representations allow us to answer the previously raised questions.

Let us shortly explain the dynamical perspective. 
The product space $\{0,1\}^\ZM:=\big\{\omega:\ZM\to\{0,1\}\big\}$ of two-sided infinite sequences with zeros and ones is a compact metric space equipped with a $\ZM$-action by shifting the origin of a configuration $\omega\in\{0,1\}^\ZM$, see details in Section~\ref{ssect-subshifts}. 
The potentials in the Kohmoto model are defined using the configurations 
$$
\omega_\alpha\in\{0,1\}^\ZM, \qquad
	\omega_\alpha(n):=\chi_{[1-\alpha,1)}(n\alpha\mod\, 1),
	\qquad n\in\ZM,\, \alpha\in[0,1].
$$
For rational $r=\frac{p}{q}\in[0,1]$, the configuration $\omega_r$ is $q$-periodic, namely $\omega_r(n+q)=\omega_r(n)$ for all $n\in\ZM$.
We have discussed above that the spectral map $\Sigma_V$ is discontinuous at the rational points $r\in[0,1]\cap\QM$ and in fact the Kohmoto butterfly (Figure~\ref{fig-Kohmoto}) hints that we obtain two different limits in the Hausdorff metric of the spectrum
$$
\sigma_{r_+}(V):=\lim_{\alpha\searrow r} \Sigma_V(\alpha) 
	\qquad\textrm{and}\qquad
\sigma_{r_-}(V):=\lim_{\alpha\nearrow r} \Sigma_V(\alpha).
$$
Using $C^\ast$-algebraic methods, the authors in \cite{BIT91} proved that the difference between $\sigma_{r_\pm}(V)$ and $\Sigma_V(r)$ are finitely many points -- spectral defects generated in the spectral gaps of $\Sigma_V(r)$. 
We will construct for each rational $r\in[0,1]\cap\QM$, two sequences $\omega_{r_+},\omega_{r_-} \in\{0,1\}^\ZM$ such that $\sigma_{r_\pm}(V)=\sigma(H_{\omega_{r_\pm,V}})$ where the self-adjoint operator $H_{\omega,V}:\ell^2(\ZM)\to\ell^2(\ZM)$ for $\omega\in\{0,1\}^\ZM$ and $V\neq 0$ is defined by
\begin{equation}
\label{eq:SchroedOper}
\big(H_{\omega,V} \psi\big)(n) 
	= \psi(n-1) + \psi(n+1) + V\, \omega(n) \cdot \psi(n)
	\,,\qquad \psi\in\ell^2(\ZM)\,,\; n\in\ZM.
\end{equation}
Interestingly these sequences $\omega_{r_+}$ and $\omega_{r_-}$ are perturbations of the sequence $\omega_r$ adding a localized impurity around the origin, which is determined by the corresponding Farey neighbors (defined in Section~\ref{ssect-FareyNumbers}) of the rational number $r$, see Proposition~\ref{prop-Defects}. 
With this at hand, we recover the result of \cite{BIT91} stating that $\sigma_{\mathrm{ess}}(H_{\omega_{r_{\pm}},V})=\sigma_{\mathrm{ess}}(H_{\omega_r,V})$ for the essential spectrum, see Proposition~\ref{prop:EssentialSpectrum}.

On the one hand, this last result holds for a much larger class of operators and provides a different proof of the result obtained in \cite{BIT91}. 
On the other hand, we can additionally describe explicitly the impurities generated in the underlying dynamical system in Proposition~\ref{prop-Defects} and prove in Proposition~\ref{prop:DiscreteSpectrum=q_points} the numerical observations addressed in \cite[rem.~3]{BIT91} - (Q\ref{enu:Q4_OrderingPoints}). 
Specifically, we show that such a spectral defect occurs in each (bounded) spectral gap of $\Sigma(r)$ together with one spectral defect in either $(-\infty,\inf\Sigma_V(r))$ or $(\sup\Sigma_V(r),\infty)$ -- the two unbounded spectral gaps of $\Sigma_V(r)$. 
Therefore, we use the hierarchical structure developed in \cite{Raym95} for $V>4$ (see \cite{BaBeBiRaTh24} for a review) and in \cite{BBL24} for all $V\neq 0$. 
We postpone the formal details and just present here a short example to explain the main ideas.

\begin{exam}
Let $r=\frac{2}{3}$. Then the associated configuration $\omega_r\in\{0,1\}^\ZM$ is given by
$$
\omega_r= \ldots 011 \ 011 \ 011 \cdot \underline{011} \ 011 \ 011 \ldots \,.
$$
Here $\cdot$ marks the origin, namely the first letter to the right of $\cdot$ equals to $\omega_\alpha(0)$. 
The underlined pattern $011$ repeats infinitely often in both directions since $\omega_r$ is $3$-periodic.
The unique ($3$-)Farey neighbors of $r=\frac{2}{3}$ (see Section~\ref{ssect-FareyNumbers} for the definition) are $r_*=\frac{1}{2}$ and $r^*=\frac{1}{1}$. Their associated configurations are given by 
$$
\omega_{r_*}=\omega_{\frac{1}{2}} = \ldots 01 \ 01 \ 01 \cdot\underline{01} \ 01 \ 01 \ldots
\qquad \textrm{and}\qquad
\omega_{r^*}=\omega_{1} = \ldots 1 \ 1 \ 1 \cdot\underline{1} \ 1 \ 1 \ldots \,.
$$
Observe that the corresponding periods (underlined patches) are given by $01$ respectively $1$. 
These two patches determine the impurities (marked with squares) of $\omega_{r_-}$ and $\omega_{r_+}$ with respect to $\omega_r$. 
Specifically, we have 
$$
\omega_{r_+} = \ldots 110 \ 110 \ 110 \ \boxed{1}  \cdot 110 \ 110 \ 110 \ldots
\quad \textrm{and}\quad
\omega_{r_-} = \ldots 101 \ 101 \ 101 \ \boxed{10} \cdot 110 \ 110 \ 110 \ldots \,.
$$
Similarly, the impurities of $\omega_{r_-}$ and $\omega_{r_+}$ for a general rational $r=\frac{p}{q}\in[0,1]$ are uniquely determined by the associated Farey neighbors of $r$, see Proposition~\ref{prop-Defects}.
On the spectral side, we additionally show that for each $V\neq 0$, there are exactly $q=3$ points in the spectral gaps of $\sigma(H_{\frac{2}{3},V})$ as sketched in Figure~\ref{fig-Kohmoto}. 
For $s=\frac{1}{4}=\frac{p'}{q'}$, note also that we have $q'=4$ points in the spectral gaps of $\sigma(H_{\frac{1}{4},V})$.
This is made precise in Proposition~\ref{prop:DiscreteSpectrum=q_points} for all rational rotations.
\end{exam}

The previous discussion hints that when completing $[0,1]$ to the Farey space $\ol{[0,1]}_F$, each rational $r\in[0,1]$ splits up into an upper limit $r_+$, a lower limit $r_-$ in the Farey metric and $r$ itself, see Figure~\ref{fig-FareyTree_FareySpace}. 
With this at hand, we associate to each $x\in\ol{[0,1]}_F$ a sequence $\omega_x\in\{0,1\}^\ZM$ and a corresponding self-adjoint operator $H_{\omega_x,V}$ defined in Equation~\eqref{eq:SchroedOper}.

\begin{theo}
\label{theo-SpeCon}
For all $V\in\RM$, the spectral map
$$\Sigma_V:\overline{[0,1]}_F\to\Kk(\RM)
	\,,\quad x\mapsto\sigma\big(H_{\omega_x,V}\big)\,,
$$
is Lipschitz-continuous. 
In particular, there is a $C=C(V) > 0$ such that 
$$
d_H(\sigma(H_{\omega_x,V}), \sigma(H_{\omega_y,V}))  \leq C \cdot d_F(x,y),
	\qquad x,y \in \ol{[0,1]}_F.
$$
\end{theo}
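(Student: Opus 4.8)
The plan is to translate the statement into one about subshifts and then to invoke the spectral continuity results of \cite{BBdN18,BBC18,BBdN20,BT21}, the only genuinely new ingredient being a combinatorial comparison between the Farey metric $d_F$ and the size of the common factors of the mechanical words $\omega_x$. \textbf{Dynamical reduction.} For $x\in\ol{[0,1]}_F$ let $\Omega_x:=\ol{\{S^n\omega_x:n\in\ZM\}}\subseteq\{0,1\}^\ZM$ be the hull of $\omega_x$ under the shift $S$ and set $\sigma(H_{\Omega_x,V}):=\ol{\bigcup_{\omega\in\Omega_x}\sigma(H_{\omega,V})}$. First I would check that $\sigma(H_{\omega_x,V})=\sigma(H_{\Omega_x,V})$ for every $x$: for $x\in[0,1]$ the hull is minimal (orbit closure of a periodic or a Sturmian sequence), so this is the classical constancy of the spectrum along a minimal hull; for $x=r_\pm$ the hull consists of the shifts of $\omega_{r_\pm}$ together with the periodic orbit generated by $\omega_r$, and since $\omega_{r_\pm}$ coincides with a shift of $\omega_r$ near $\pm\infty$ (Proposition~\ref{prop-Defects}) one has $\sigma(H_{\omega_r,V})=\sigma_{\mathrm{ess}}(H_{\omega_{r_\pm},V})\subseteq\sigma(H_{\omega_{r_\pm},V})$ by Proposition~\ref{prop:EssentialSpectrum}, so the identity persists. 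Hence it suffices to bound $d_H\big(\sigma(H_{\Omega_x,V}),\sigma(H_{\Omega_y,V})\big)$.

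\textbf{Farey metric versus common factors.} For $x\neq y$ let $N(x,y)\in\N$ be the largest $N$ for which $\Omega_x$ and $\Omega_y$ share exactly the same factors of length $\le N$; by uniform recurrence of the mechanical words this equals, up to a bounded factor, the length of the central block on which $\omega_x$ and $\omega_y$ agree. The core step is to show
\[
d_F(x,y)\ \ge\ \frac{c}{N(x,y)}
\]
for a universal $c>0$, from the description of $d_F$ in Section~\ref{ssect-FareyNumbers} as the length scale attached to the smallest Farey interval (equivalently, the depth in the interval Farey tree) separating $x$ and $y$. Concretely, if $x$ and $y$ are first separated by the mediant $\tfrac{p+p'}{q+q'}$ of a Farey interval $[\tfrac pq,\tfrac{p'}{q'}]$, then $d_F(x,y)\asymp\tfrac{1}{q+q'}$, while the standard factorisation of Christoffel/mechanical words attached to Farey neighbours shows that the periodic words of periods $q$ and $q'$ attached to the two endpoints overlap centrally in the maximal block permitted by the Fine--Wilf theorem, of length $\asymp q+q'$; thus $N(x,y)\asymp q+q'$ as well, and the general case follows by running this argument along the separating path in the tree. (If the definition of $d_F$ carries an additional power, the inequality is to be read with the matching power of $N(x,y)$, arranged precisely so that composition with the next step remains Lipschitz.)

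\textbf{Spectral continuity and conclusion.} Next I would invoke \cite{BBdN18,BBC18,BBdN20,BT21}: there is $C=C(V)>0$ such that whenever two subshifts share all factors up to length $N$ one has $d_H\big(\sigma(H_{\Omega,V}),\sigma(H_{\Omega',V})\big)\le C/N$. The mechanism I have in mind, which can be made self-contained if desired, is the usual one: for $E\in\sigma(H_{\Omega,V})=\sigma(H_{\omega,V})$ with $\omega\in\Omega$, a Weyl sequence together with a trapezoidal cut-off of width $\asymp N$ produces a normalised vector $\psi$ supported in a window of length $\le N$ with $\|(H_{\omega,V}-E)\psi\|\le C/N$ (the commutator of the hopping term with the cut-off being of size $1/N$); transplanting $\psi$ via an occurrence in $\Omega'$ of the length-$N$ pattern of $\omega$ on that window yields $\mathrm{dist}(E,\sigma(H_{\Omega',V}))\le C/N$, and symmetry in $\Omega,\Omega'$ finishes it -- the role of \cite{BBdN18,BBC18,BBdN20,BT21} being precisely to guarantee such a window of size $O(N)$ uniformly in $E$ and in the rotation number (in particular for Liouville numbers). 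Combining the three steps, for $x\neq y$ in $\ol{[0,1]}_F$,
\[
d_H\big(\sigma(H_{\omega_x,V}),\sigma(H_{\omega_y,V})\big)=d_H\big(\sigma(H_{\Omega_x,V}),\sigma(H_{\Omega_y,V})\big)\le\frac{C(V)}{N(x,y)}\le\frac{C(V)}{c}\,d_F(x,y),
\]
which is the asserted Lipschitz bound with $C:=C(V)/c$; for $x=y$ it is trivial.

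\textbf{Main obstacle.} The difficulty is concentrated in the middle step, that is, in identifying the combinatorially defined metric $d_F$, up to constants, with the reciprocal common-factor length of the mechanical words $\omega_x$ -- equivalently, in the announced isometric representation of $(\ol{[0,1]}_F,d_F)$ as a subspace of the space of mechanical systems. This requires a careful analysis of how the Christoffel/mechanical words attached to Farey neighbours overlap, including the extremal behaviour at the singular phases, and of how that overlap degrades as one descends the interval Farey tree. By contrast, the dynamical reduction and the final assembly are formal once this identification and Proposition~\ref{prop-Defects} are available, and the quantitative spectral continuity estimate is imported from the literature.
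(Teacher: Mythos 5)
Your overall architecture is in fact the paper's: pass from $\omega_x$ to the hull $\Omega_x$, control the Hausdorff distance of the spectra by the distance $d_\Inv$ of the subshifts via the quantitative continuity results of \cite{BBC18,BT21} (this is exactly how Theorem~\ref{theo-SpeCon_PatternEquiv} is proved), and then compare $d_\Inv$ with $d_F$. The reduction step and the spectral-continuity step are fine. The genuine gap is the middle step, which you yourself flag as the main obstacle: the paper's proof rests on the exact identity $d_\Inv(\Omega_x,\Omega_y)=d_F(x,y)$ (Proposition~\ref{prop-FarDic}, Theorem~\ref{theo-Homeo_Farey-Subshifts}, Corollary~\ref{cor:BijectionFarey_Subshifts}), whose content is Mignosi's theorem that two mechanical words have the same factors of length $m$ precisely when their slopes lie in a common interval $(r,r^*)$ with $r\in F_m$. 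Your Fine--Wilf sketch does not deliver this: by Proposition~\ref{prop-TopHau}, $d_\Inv$ is governed by agreement of the \emph{full dictionaries} up to length $N$, whereas a Fine--Wilf central overlap of the two Christoffel/periodic words attached to Farey neighbours only produces one long common factor; equality of the sets of \emph{all} factors of length $\le N$ is exactly the nontrivial input (\cite[cor.~9, cor.~10, thm.~16]{Mi91}, used in Lemma~\ref{lem:DifferentDictionaries} and Proposition~\ref{prop-FarDic}) and is not a consequence of that overlap. Relatedly, identifying $N(x,y)$ ``up to a bounded factor'' with the length of the central block on which $\omega_x$ and $\omega_y$ agree is wrong as stated: $d_\Inv$ is phase-independent, and at $x=r_\pm$ the configuration carries an impurity at the origin while its dictionary still agrees to great length with those of nearby rationals.

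Moreover, your metric comparison is formulated only for slopes in $[0,1]$; the ideal points $r_\pm$ of the completion are not covered by the Christoffel-word argument. The paper handles them by extending the isometry $\alpha\mapsto\Omega_\alpha$ to the completion and then identifying $\Phi(r_\pm)$ with the hull of the explicit defect word (Proposition~\ref{prop-Defects}); without this, or a direct computation of the dictionaries of the defect configurations, your chain of inequalities is not established for all $x,y\in\ol{[0,1]}_F$. Finally, the hedge about ``an additional power'' in $d_F$ shows the comparison was never pinned down; in fact, if $x,y$ are first separated by the mediant of the Farey neighbours $\frac pq<\frac{p'}{q'}$, then $d_F(x,y)=\frac{1}{q+q'}$ exactly and the dictionaries agree exactly up to length $q+q'-1$, so $x\mapsto\Omega_x$ is an isometry and not merely bi-Lipschitz --- but proving that is precisely the work your proposal defers.
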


\begin{proof}
This statement is a special case of Theorem~\ref{theo-SpeCon_PatternEquiv}.
\end{proof}

Note that our proof is not based on $C^\ast$-algebraic techniques since we use \cite{BBC18,BT21}. 
In addition, we obtain explicit spectral estimates leading to the following for Diophantine approximations.

\begin{coro}
\label{cor-LipCon_preview}
For each $V\in\RM$, there is a constant $C=C(V)>0$ such that for all  $\alpha,\frac{p}{q}\in[0,1]$ with  $0<| \alpha-\frac{p}{q} | < \frac{1}{q^2}$,
$$
d_H\big(\sigma(H_{\alpha,V}),\sigma(H_{\frac{p}{q},V})\big) 
	\leq   \frac{C}{q}.
$$
\end{coro}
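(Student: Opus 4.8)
The plan is to deduce Corollary~\ref{cor-LipCon_preview} from the Lipschitz estimate in Theorem~\ref{theo-SpeCon} by controlling the Farey distance $d_F$ in terms of the Diophantine quality of the approximation. Concretely, I would fix $V$, take the constant $C=C(V)$ from Theorem~\ref{theo-SpeCon}, and embed both $\alpha$ and $r=\frac{p}{q}$ into the Farey space $\ol{[0,1]}_F$. Since $\alpha\mapsto\sigma(H_{\alpha,V})$ for the original operators $H_{\alpha,V}$ in \eqref{eq:KohmOperat} agrees with $\Sigma_V$ on the image of $[0,1]$ (the configurations $\omega_\alpha$ are the ones used to build $H_{\alpha,V}$), the left-hand side of the claimed inequality is exactly $d_H(\sigma(H_{\omega_\alpha,V}),\sigma(H_{\omega_r,V}))$, so Theorem~\ref{theo-SpeCon} gives the bound $C\cdot d_F(\alpha,r)$. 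Thus the corollary reduces to the purely combinatorial/number-theoretic estimate
$$
d_F\Big(\alpha,\tfrac{p}{q}\Big)\ \leq\ \frac{C'}{q}
\qquad\text{whenever}\qquad 0<\Big|\alpha-\tfrac{p}{q}\Big|<\frac{1}{q^2},
$$
after which one absorbs $C'$ into $C$.

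The main work is therefore this estimate on $d_F$, and here I would use the description of the Farey metric via the interval Farey tree (promised in Section~\ref{ssect-FareyNumbers}). The key elementary fact is that if $0<|\alpha-\frac{p}{q}|<\frac{1}{q^2}$ with $p,q$ coprime, then $\frac{p}{q}$ is a convergent (or at least a mediant-ancestor) of $\alpha$ in the Stern--Brocot/Farey sense; equivalently, $\frac{p}{q}$ lies on the branch of the Farey tree leading to $\alpha$, and $\alpha$ sits in the Farey interval immediately below the node $\frac{p}{q}$ whose two endpoints are the Farey neighbors of $\frac{p}{q}$. One then needs to know how $d_F$ decays along a branch: the diameter in the $d_F$-metric of the Farey interval attached to a node $\frac pq$ should be comparable to $\frac1q$ (this is essentially the defining normalization of $d_F$, chosen precisely so that the spectral map becomes Lipschitz, matching Corollary~\ref{cor-LipCon_preview}). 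Granting that, since both $\alpha$ and $\frac pq$ lie in (the closure of) that Farey interval, $d_F(\alpha,\frac pq)\le \operatorname{diam}_{d_F}(\text{interval at }\tfrac pq)\le \frac{C'}{q}$, which is what is needed.

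In slightly more detail, the steps I would carry out are: (i) recall that $\alpha$ irrational has a unique infinite path in the Farey tree and that the rationals appearing on that path are exactly the "best approximations from alternating sides," i.e. the semiconvergents of the continued fraction expansion of $\alpha$; (ii) invoke the classical lemma that $|\alpha-\frac pq|<\frac1{q^2}$ forces $\frac pq$ to be such a semiconvergent — so $\frac pq$ is an ancestor of (the path to) $\alpha$ in the tree — being slightly careful with the boundary cases $\frac01$ and $\frac11$ and with $\alpha$ itself rational, where one uses the analogous statement for the two copies $\alpha_\pm$; (iii) use the construction of $d_F$ from Section~\ref{ssect-FareyNumbers} to bound the $d_F$-length of the subtree hanging below $\frac pq$ (equivalently the $d_F$-diameter of the corresponding Farey interval) by a constant times $\frac1q$; (iv) conclude $d_F(\alpha,\frac pq)\le \frac{C'}{q}$ and combine with Theorem~\ref{theo-SpeCon}. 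The main obstacle is step (iii): making the estimate $\operatorname{diam}_{d_F}(\text{node }\tfrac pq)\lesssim \frac1q$ precise requires knowing the exact quantitative definition of $d_F$ from Section~\ref{ssect-FareyNumbers} (in particular how the weights assigned to successive Farey generations are summed), and verifying that the geometric-type decay along every branch of the tree is uniform; everything else is standard continued-fraction bookkeeping and a direct application of the already-proven Lipschitz theorem.
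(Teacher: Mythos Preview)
Your high-level reduction is exactly what the paper does: apply Theorem~\ref{theo-SpeCon} (more precisely its general form, Theorem~\ref{theo-SpeCon_PatternEquiv}) and reduce everything to the purely number-theoretic bound $d_F\big(\alpha,\tfrac{p}{q}\big)\le \tfrac{C'}{q}$. That part is correct and matches the paper.

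Where you diverge is in how you obtain the $d_F$ bound, and here you are working much harder than necessary. The Farey metric is defined directly (Section~\ref{ssect-FareyTopoRep}, not via the tree): $d_F(\alpha,\beta)\le \tfrac{1}{m+1}$ exactly when $\alpha,\beta$ lie in a common open Farey interval $(s,s^\ast)$ with $s\in F_m$. Now $r=\tfrac{p}{q}$ (coprime) satisfies $r\in F_q\setminus F_{q-1}$, so its $q$-Farey neighbours $r_\ast,r^\ast$ are already $(q-1)$-Farey neighbours with $r\in(r_\ast,r^\ast)$. By Lemma~\ref{lem-FareyDist}, distinct elements of $F_q$ are at Euclidean distance $\ge\tfrac{1}{q^2}$, so $0<|\alpha-r|<\tfrac{1}{q^2}$ forces $\alpha\notin F_q$ and hence $\alpha\in(r_\ast,r^\ast)$ as well. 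This gives $d_F(\alpha,r)=\tfrac{1}{q}$ on the nose, with $C'=1$; no tree representation, no diameter estimate along branches, and no continued-fraction bookkeeping is needed. This is exactly the two-line argument in the proof of Corollary~\ref{cor-LipCon_PatternEquiv}~(b).

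One caution about your step~(ii): the ``classical lemma'' that $|\alpha-\tfrac{p}{q}|<\tfrac{1}{q^2}$ forces $\tfrac{p}{q}$ to be a semiconvergent is not the textbook statement (the standard Legendre-type result uses $\tfrac{1}{2q^2}$ and yields a full convergent). What is actually true, and what you need, is precisely the containment $\alpha\in(r_\ast,r^\ast)$ established above; phrasing it via semiconvergents and the Stern--Brocot path is an equivalent but more circuitous way to say the same thing.
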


\begin{proof}
This is a special case of Corollary~\ref{cor-LipCon_PatternEquiv}.
\end{proof}

This result recovers the continuity of $\Sigma_V$ with respect to the Euclidean metric at the irrational points proven in \cite{BIT91} but provides also explicit spectra estimates.
Moreover, Corollary~\ref{cor-LipCon_preview} should be compared with the recent works \cite{BaBePoTe24,Ten24}. 
There the authors prove similar spectral estimates for systems defined by substitutions. 
Note that if $\alpha$ has an eventually periodic continued fraction expansion, then it can be described by a substitution. 

Corollary~\ref{cor-LipCon_preview} indicates a square root behavior of the spectrum if measured with respect to the Euclidean distance. 
Like in the Almost-Mathieu operator, this behavior is in general sharp, see Theorem~\ref{theo-Optimality}. 
The Almost-Mathieu operator is a closely related model with the so-called Hofstadter butterfly. 
The corresponding discussion of sharp spectral estimates for the Almost-Mathieu operator can be found in \cite{BeRa90,HaKaTa16}, confer also a discussion in \cite{BT21}.

Let us shortly explain idea behind Theorem~\ref{theo-Optimality} along an example here and postpone the general case to Section~\ref{sect-SpectralMapContOpt}. 
We prove there that the spectral estimates obtained in Theorem~\ref{theo-SpeCon} are in general optimal, i.e. cannot be improved for all points in $\ol{[0,1]}_F$.
For instance, let $0_+\in\ol{[0,1]}_F$ be the upper limit in the Farey metric of the sequence $(\frac{1}{n})_{n\in\N}$. 
As a consequence of Theorem~\ref{theo-Optimality}, there are constants $C_2>C_1>0$ and a subsequence $(\frac{1}{n_j})_{j\in\N}$ of $(\frac{1}{n})_{n\in\N}$ such that
$$
C_1 d_F\left(0_+,\frac{1}{n_j}\right)
	\leq d_H\big(\sigma(H_{0_+,V}),\sigma(H_{\frac{1}{n_j},V})\big) 
	\leq   C_2 d_F\left(0_+,\frac{1}{n_j}\right)
$$
where $d_F\big(0_+,\frac{1}{n}\big)=\frac{1}{n}$ for all $n\in\N$.

This optimality is obtained when spectral gaps are closing when approaching $0_+$, as indicated in Figure~\ref{fig-Kohmoto} with the red lines. 
Similar results are obtained at each $r\in[0,1]\cap\QM$ and its corresponding limit points $r_-,r_+\in\ol{[0,1]}_F$. 
It is still an open question if the estimates in Theorem~\ref{theo-SpeCon} are also optimal at the irrational points $\alpha\in[0,1]\setminus\QM$, confer a similar discussion for the Almost-Mathieu operator \cite{JiKr19}.

The paper is organized as follows. In Section~\ref{sect-Background} a short introduction to the mechanical words and the Farey numbers is provided. The Farey space is introduced in Section~\ref{sect-RepresentationsFarey}. In addition, its representation as the boundary of a tree and the dynamical representation of the Farey space are established. This is applied to determine the defects from the dynamical perspective in Section~\ref{sect-defects}. Then in Section~\ref{sect-SpectralApplic} we use the previously developed theory to prove Theorem~\ref{theo-SpeCon} and its consequences on the essential spectrum and some structure of the Kohmoto butterfly. Finally, we show that the spectral estimates obtained in Theorem~\ref{theo-SpeCon} are optimal in Section~\ref{sect-SpectralMapContOpt}.

\subsection*{Acknowledgment} S.B. and Y.T. are grateful to Ram Band for various interesting discussions on the Kohmoto butterfly and its presentation in this work. 
S.B. and Y.T. are thankful to Lior Tenenbaum for inspiring discussions on the Lebesgue measure of approximations and for pointing out flaws and typos in an earlier versions.
In addition, S.B. would like to thank Daniel Lenz for an interesting discussion on Sturmian systems and their factorization through the torus. 
This research was supported through the program ``Research in Pairs" by the Mathematisches Forschungsinstitut Oberwolfach in 2018 hosting S.B. and J.B..
This work was partially supported by the Deutsche Forschungsgemeinschaft [BE 6789/1-1 to S.B.].

\section{A short introduction to mechanical words and Farey numbers}
\label{sect-Background}

The basic notions and properties of the space of subshifts, mechanical words and Farey numbers are introduced.


\subsection{The spaces of mechanical systems}
\label{ssect-subshifts}

\noindent The product space $\csp:=\prod_{n\in\ZM} \Aa=\{\omega:\ZM\to\Aa\}$ over the finite alphabet $\Aa:=\{0,1\}$ is compact and metrizable. The group $\ZM$ acts on $\csp$ via the homeomorphism $\tra:\csp\to\csp,\,\tra(\omega)(n):=\omega(n-1),$ which is called the {\em shift}. In view of our applications, we choose the ultrametric $d$ on $\csp$ defined by 
$$
d(\omega,\rho)	
	:= \inf \left\{ \frac{1}{r+1}  \,:\, r\in \N_0 \; \textrm{such that} \; \omega(j)= \rho(j) \textrm{ for all } |j|<r \right\},
	\qquad \omega,\rho\in\Aa^\ZM,
$$

\noindent where $\N_0:=\N\cup\{0\}$. A subset $\Omega\subseteq\csp$ is called {\em ($\tra$-)invariant} if $\tra(\Omega)\subseteq\Omega$.
We denote by  $\Inv$ the set of all closed, $\tra$-invariant and non-empty subsets of $\csp$. 
Elements of $\Inv$ are called {\em subshifts}. Then $\Inv$ can be naturally equipped with the Hausdorff metric $d_\Inv$, which is defined by
$$
d_\Inv(\Omega_1, \Omega_2)
    := \max \left\{ \sup_{\omega\in \Omega_1} \inf_{\rho \in \Omega_2} d(\omega, \rho),\sup_{\rho \in \Omega_2} \inf_{\omega\in \Omega_1} d(\omega, \rho)  \right\}\,,\qquad
	\Omega_1,\Omega_2 \in\Inv.
$$

\noindent The pair $(\Inv, d_\Inv)$ is a compact metric space \cite{Bec16,BBdN18}. 
For every $\omega\in\csp$, the closure of the {\em orbit} $\Orb(\omega):=\{\tra^m(\omega) : m\in\ZM\}\subseteq\csp$ is an element of $\Inv$, i.e. $\ol{\Orb(\omega)} \in \Inv$. 
A subshift $\Omega\in\Inv$ is called {\em topological transitive} if there is a $\omega\in\Omega$ such that $\Omega=\overline{\Orb(\omega)}$. 
Furthermore, $\Omega$ is called {\em minimal}, if $\Omega=\overline{\Orb(\rho)}$ for all $\rho\in\Omega$.

\noindent For $n\in\N$, $\Aa^n$ denotes the set of all maps $v:\{1,\ldots,n\}\to\Aa$ that are identified with a word (of length $n$) $v=v(1) \ldots v(n)$ where $v(i)\in\Aa$. 
Formally the {\em empty word} of length $0$ is denoted \o. 
A word $v\in\Aa^n$ is called a {\em subword} of $\omega\in\csp$ if there is an $m\in\ZM$ such that $\tra^m(\omega)|_{\{1,\ldots,n\}}=v$. 
The empty word  \o ~ is by definition a subword of any $\omega\in\csp$.  
The set $\dic(\omega)$ of all subwords of $\omega$ is called the {\em dictionary} of $\omega$. 
The intersection $\dic(\omega)\cap\Aa^n$ denotes the set of subwords of $\omega$ of length $n\in\N$. The dictionary associated with a subshift $\Omega \in \Inv$ is defined by $\dic(\Omega) := \bigcup_{\omega \in \Omega} \dic(\omega)$. Note that for all $\omega \in \csp$, we have $\dic(\omega) = \dic(\ol{\Orb(\omega)})$.
For this work the following fact is crucial and follows directly from the definition of the metric.

\begin{proposi}[{\cite[cor.~5.5]{BBdN20}}]
\label{prop-TopHau}
For each $\Omega_1, \Omega_2 \in \Inv$, we have 
$$ 
d_\Inv(\Omega_1, \Omega_2) 
	= \min\left\{ \frac{1}{m+1} \,:\, m\in\N_0 \textrm{ such that } \dic(\Omega_1) \cap \Aa^{m} = \dic(\Omega_2) \cap \Aa^{m} \right\}
$$
where $\dic(\Omega_i) \cap \Aa^{0} = \{\text{\o} \}, \, i = 1,2$ holds always. In particular, $(\Omega_n)_{n\in\N}$ converges to $\Omega$ in $\Jj$, if for all $m\in\N$, there is an $n_m\in\N$ such that $\dic(\Omega_n)\cap \Aa^m = \dic(\Omega)\cap\Aa^m$ for all $n\geq n_m$.
\end{proposi}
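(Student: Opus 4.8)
The plan is to establish, for each fixed $m\in\N_0$, the single equivalence
$$
d_\Inv(\Omega_1,\Omega_2)\le\frac{1}{m+1}
\quad\Longleftrightarrow\quad
\dic(\Omega_1)\cap\Aa^{m}=\dic(\Omega_2)\cap\Aa^{m},
$$
and then to read off the displayed formula from the family of these equivalences. The packaging step is soft: since $\Omega_1$ and $\Omega_2$ are compact and $d(\omega,\cdot)$ is $1$-Lipschitz, every infimum and supremum entering the definition of $d_\Inv$ is attained, so $d_\Inv(\Omega_1,\Omega_2)$ is itself a value of $d$ and therefore lies in $\{0\}\cup\{\tfrac1{k+1}:k\in\N_0\}$; moreover the set of $m$ with $\dic(\Omega_1)\cap\Aa^m=\dic(\Omega_2)\cap\Aa^m$ is an initial segment of $\N_0$, since the dictionary of a subshift is closed under passing to subwords. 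Granting the equivalence, minimising $\tfrac1{m+1}$ over that initial segment then returns $d_\Inv(\Omega_1,\Omega_2)$, the only degenerate case being that the segment is all of $\N_0$, which forces $\Omega_1=\Omega_2$ and both sides to vanish.

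The content is the equivalence, which I would prove in three steps. \emph{Step 1 (localising $d$).} Unwinding the ultrametric, $d(\omega,\rho)\le\tfrac1{m+1}$ holds precisely when $\omega$ and $\rho$ agree on the central block $\{j\in\ZM:|j|<m\}$ around the origin; for $m=0$ this block is empty, which matches the convention $\dic(\Omega_i)\cap\Aa^0=\{\text{\o}\}$. \emph{Step 2 (distance versus subwords).} As $\Omega_2$ is compact, the infimum $\inf_{\rho\in\Omega_2}d(\omega,\rho)$ is attained, so the inequality $\sup_{\omega\in\Omega_1}\inf_{\rho\in\Omega_2}d(\omega,\rho)\le\tfrac1{m+1}$ is equivalent to the statement that every $\omega\in\Omega_1$ agrees, on the central block, with some $\rho\in\Omega_2$. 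Using $\tra$-invariance of $\Omega_1$ (translate $\omega$ so that a prescribed subword of $\Omega_1$ comes to sit on the central block) in one direction, and $\tra$-invariance of $\Omega_2$ (translate a witnessing configuration so that the needed word sits on the central block of a genuine element of $\Omega_2$) in the other, one sees that this in turn is equivalent to the inclusion $\dic(\Omega_1)\cap\Aa^{m}\subseteq\dic(\Omega_2)$. \emph{Step 3 (symmetrise).} Running Step 2 with $\Omega_1$ and $\Omega_2$ interchanged identifies the second term in the maximum defining $d_\Inv$ with $\dic(\Omega_2)\cap\Aa^m\subseteq\dic(\Omega_1)$; taking the maximum, $d_\Inv(\Omega_1,\Omega_2)\le\tfrac1{m+1}$ holds if and only if both inclusions hold, that is, if and only if $\dic(\Omega_1)\cap\Aa^m=\dic(\Omega_2)\cap\Aa^m$. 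The ``in particular'' clause is then immediate: if the dictionaries of $\Omega_n$ and $\Omega$ agree up to every fixed length $m$ for all large $n$, then $d_\Inv(\Omega_n,\Omega)\le\tfrac1{m+1}$ eventually, so $d_\Inv(\Omega_n,\Omega)\to0$.

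I expect the crux to be Step 2: proximity in $d$ only records agreement on a fixed window at the origin, whereas the dictionary condition is translation invariant, so bridging the two forces one to use $\tra$-invariance of \emph{both} subshifts — of $\Omega_1$ to upgrade ``the central block of every element is matched'' to ``\emph{every} subword of the relevant length is matched'', and of $\Omega_2$ to re-center an occurrence of a word as the origin-block of an honest element of $\Omega_2$ — and to invoke compactness at the right moment so that the approximant genuinely lies in $\Omega_2$ (here automatic, as $\Omega_2$ is closed). A final bookkeeping check, matching the size of the agreement window in the definition of $d$ against the length of the subwords counted in $\dic(\Omega_i)\cap\Aa^{\bullet}$, pins down the exact index $m$ in the dictionary condition.
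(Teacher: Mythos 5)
Your strategy --- reducing to a fixed-$m$ equivalence, using the discreteness of the values of $d$ (or compactness) to see that $d_\Inv(\Omega_1,\Omega_2)$ is itself of the form $\frac{1}{k+1}$, using shift-invariance of \emph{both} subshifts to move words into the observation window, and factoriality/extendability of dictionaries to see that the agreeing levels form an initial segment --- is the natural one, and it is essentially all there is to the statement (the paper gives no proof of its own, it quotes \cite{BBdN20}). The genuine problem sits exactly at the point you postpone as a ``final bookkeeping check''. With the metric $d$ as defined in this paper, $d(\omega,\rho)\le\frac{1}{m+1}$ means that $\omega$ and $\rho$ agree on $\{j\in\ZM : |j|<m\}$, a window of length $2m-1$, not $m$. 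Consequently the inclusion $\dic(\Omega_1)\cap\Aa^{m}\subseteq\dic(\Omega_2)$ does not allow you to match, for a given $\omega\in\Omega_1$, its central block of length $2m-1$ by an element of $\Omega_2$; what your Step 2 actually yields is that $\sup_{\omega\in\Omega_1}\inf_{\rho\in\Omega_2}d(\omega,\rho)\le\frac{1}{m+1}$ if and only if $\dic(\Omega_1)\cap\Aa^{2m-1}\subseteq\dic(\Omega_2)$ (for $m\ge 1$), and symmetrically. So the backward implication of your Steps 2--3, namely that dictionary agreement at level $m$ forces $d_\Inv\le\frac{1}{m+1}$, is false as stated.

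Concretely, take $\Omega_1=\Orb\big((011)^\infty\big)$ and $\Omega_2=\Orb\big((0111)^\infty\big)$, the finite orbits of the $3$- and $4$-periodic configurations, which lie in $\Inv$. Both have dictionary $\{0,1\}$ in length $1$ and $\{01,11,10\}$ in length $2$, so your equivalence (and the displayed formula, read literally with the paper's $d$) would give $d_\Inv(\Omega_1,\Omega_2)\le\frac13$. But some $\rho\in\Omega_2$ satisfies $\rho(-1)\rho(0)\rho(1)=111$, while no element of $\Omega_1$ contains $111$ at all; hence $d(\omega,\rho)\ge\frac12$ for every $\omega\in\Omega_1$, and $d_\Inv(\Omega_1,\Omega_2)\ge\frac12>\frac13$. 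This shows the index mismatch cannot be absorbed by the rest of your argument: it reflects a discrepancy between the metric $d$ chosen here and the convention underlying the quoted corollary of \cite{BBdN20}, in which a ball of radius $\frac{1}{m+1}$ corresponds to agreement on a window of length $m$. With such a convention (or after replacing $\Aa^{m}$ by $\Aa^{2m-1}$ in the displayed formula) your Steps 1--3 do go through verbatim, the admissible $m$ form an initial segment as you say, agreement at all levels forces $\Omega_1=\Omega_2$, and the ``in particular'' convergence claim is unaffected by the re-indexing. But as a proof of the proposition as literally stated, the bookkeeping you waved at is the gap --- indeed it is where the statement itself conflicts with the paper's definition of $d$, so you should either adjust the metric convention or prove the $2m-1$ version.
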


For $\alpha\in [0,1]$, define the {\em mechanical word} $\omega_\alpha\in\csp$ by
$$\omega_\alpha(n)
	\; := \;\chi_{[1-\alpha,1)}(n\alpha \mod 1)
	\,,\qquad n\in\ZM\,.
$$
If $\alpha\in [0,1]\setminus\QM$, then $\omega_\alpha$ is also called a {\em Sturmian sequence}. 
For $\alpha\in[0,1]$, set $\Omega_\alpha:=\overline{\Orb(\omega_\alpha)}$ and let $\dic_\alpha:=\dic(\Omega_\alpha)=\dic(\omega_\alpha)$ be the associated dictionary. 
For instance, if $\alpha = \frac{\sqrt{5}-1}{2}$, then $\omega_\alpha$ is a Fibonacci sequence, see e.g. \cite{BG13}.  
We define  
$$
\QS:=\{\Omega_\alpha : \alpha\in [0,1]\}\subseteq \Inv
$$ 
and call its closure $\ol{\QS}$ in the compact space $(\Inv,d_\Inv)$ the {\em space of mechanical systems}. 
Note that $\QS$ includes all Sturmian subshifts $\Omega_\alpha$ for $\alpha \in [0,1]\setminus\QM$. 

The study of the metric space $(\Inv, d_\Inv)$ is linked via \cite{BBdN18,BBC18,BT21} to the convergence of the associated operators. 
This link will be discussed  and used in Section~\ref{sect-SpectralApplic}.
It also allows us to focus mainly on the analysis of this metric space $(\Inv, d_\Inv)$.


\subsection{The Farey numbers}
\label{ssect-FareyNumbers}

In the following, we identify a rational number $r\in\QM$ with the unique irreducible fraction $\frac{p}{q}$, i.e. $p$ and $q$ are coprime. 
We use the convention $0=\frac{0}{1}$ and $1=\frac{1}{1}$. 
For $m \in \N$, we call $\frac{p}{q} \in [0,1]\cap \QM$ an {\em $m$-Farey number} if $q\leq m$. The finite set of $m$-Farey numbers is denoted by $F_m$.  For example, we have
$$ 
F_1 = \{0,1\}, \qquad F_2 = \left\{0, \frac12 ,1 \right\}, \qquad F_3 = \left\{0, \frac13, \frac12, \frac23 ,1 \right\}, \qquad
F_4 = \left\{0, \frac14, \frac13, \frac12, \frac23, \frac34 ,1 \right\}.
$$

\begin{defini}
Let $m\in\N$ and $r,s\in F_m$ with $r<s$. Then $r$ and $s$ are called {\em ($m$-)Farey neighbors} or {\em consecutive ($m$-)Farey numbers} if $(r,s)\cap F_m=\emptyset$. For $r\in F_m$, we denote by $r_*,r^*\in F_m$ the unique ($m$-)Farey neighbors satisfying $r_*<r< r^*$ whenever they exist.
We call $r_*$ and $r^*$ the {\em lower} and {\em upper ($m$-)Farey neighbor}  of $r$ respectively.
\end{defini}

In fact, the Farey neighbors exist for all $r\in F_m\setminus\{0,1\}$. 
Only $0,1 \in F_m$ have just one $m$-Farey neighbor, namely $0^* = \frac{1}{m}$ and $1_* = \frac{m-1}{m}$.
Note that the Farey neighbors $r_*$ and $r^*$ depend on $m$.
If for example $r = \frac{1}{2}$, then  $r_* = 0$ and $r^* = 1$ in $F_2$, while $r_* = \frac{1}{3}$ and $r^* = \frac{2}{3}$ in $F_3$. 
If not further elaborated, we always assume $r_*,r^* \in F_m$ whenever $r \in F_m$ was fixed. 
This is why we drop the $m$ dependence in our notation for the Farey neighbours.

Given $\frac{p}{q},\frac{p'}{q'}\in \QM$ reduced fractions, then we define their {\em mediant} by setting
$$
\frac{p}{q} \oplus \frac{p'}{q'} := \frac{p+p'}{q+q'}.
$$
With this we get the following description for Farey neighbours.

\begin{lemma}
\label{lem-FareyNgb}
Let $m\in\N$ and $r\in F_m$.
\begin{enumerate}[(a)]
\item If $r\not\in\{0,1\}$, then $r=r_* \oplus r^*$.
\item Suppose $r\neq 1$ and let $r^* \in F_m$ be the upper $m-$Farey neighbor of $r$.
Then there is a unique $k > m$ such that $(r,r^*)\cap F_{k-1} = \emptyset$ and $(r,r^*)\cap F_{k} = \{r\oplus r^*\}.$ 
In particular, $r \oplus r^* \notin F_j$ holds for $j<k$.
\end{enumerate}
\end{lemma}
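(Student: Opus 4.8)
For part (a), the plan is to use the classical characterization of consecutive Farey fractions via unimodularity. I would first recall (or quickly re-derive by induction on $m$) that if $\frac{p}{q} < \frac{p'}{q'}$ are consecutive in $F_m$, then $p'q - pq' = 1$. This is the standard Stern--Brocot / Farey fact: it holds in $F_1$ since $1\cdot 1 - 0\cdot 1 = 1$, and passing from $F_{m}$ to $F_{m+1}$ only inserts mediants $\frac{p+p'}{q+q'}$ between old neighbors $\frac{p}{q},\frac{p'}{q'}$, and one checks $(p+p')q - p(q+q') = p'q-pq' = 1$ and similarly on the other side, so unimodularity is preserved. Given this, for $r = \frac{p}{q} \in F_m\setminus\{0,1\}$ with neighbors $r_* = \frac{p_*}{q_*}$ and $r^* = \frac{p^*}{q^*}$, we have $pq_* - p_*q = 1$ and $p^*q - pq^* = 1$. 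Adding these, $p(q_*+q^*) - (p_*+p^*)q = 2$... so instead I would argue that $r_*\oplus r^* = \frac{p_*+p^*}{q_*+q^*}$ lies strictly between $r_*$ and $r^*$ (a mediant always does), hence must equal $r$ since $(r_*,r^*)\cap F_m = \{r\}$ — wait, that is only true if $r_*\oplus r^* \in F_m$. The cleaner route: from $pq_* - p_*q = 1 = p^*q - pq^*$ one gets $p(q_*+q^*) = q(p_*+p^*) + (q^* - q_*) \cdot 0$... Let me instead just solve the two unimodular equations: viewing them as a linear system in $(p,q)$, Cramer's rule gives $p = p_* + p^*$ and $q = q_* + q^*$ up to the common denominator, and since $\gcd(p,q)=1$ the scaling factor is $1$; thus $r = r_*\oplus r^*$ exactly. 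I expect the only subtlety is making the "solve the system" step rigorous, i.e. checking the determinant $q_*p^* - p_*q^*$ equals $1$ as well (which again follows from unimodularity of the pair $(r_*, r^*)$ as elements that become consecutive in some $F_k$, or directly since $r$ sits between them).

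For part (b), fix $r = \frac{p}{q} \neq 1$ with upper $m$-Farey neighbor $r^* = \frac{p^*}{q^*} \in F_m$. The mediant $r\oplus r^* = \frac{p+p^*}{q+q^*}$ is a reduced fraction (its numerator and denominator are coprime because $p^*q - pq^* = 1$ by the unimodularity established above, and $\gcd(p+p^*, q+q^*)$ divides $p^*(q+q^*) - q^*(p+p^*) = p^*q - pq^* = 1$), lying strictly between $r$ and $r^*$. Set $k := q + q^*$, the denominator of this reduced fraction. Then $r\oplus r^* \in F_k \setminus F_{k-1}$ by definition of the denominator, and $k > m$ since $q^* \le m$ and $q \ge 1$ — indeed $k = q+q^* > q^* $, but I need $k > m$: we have $q^*\le m$ and also $q\le m$, and since $r,r^*$ are \emph{distinct} both $\le m$... that only gives $k \le 2m$. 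For the lower bound $k>m$: if $k = q+q^* \le m$ then $r\oplus r^*$ would already be an element of $F_m$ lying strictly between the consecutive Farey neighbors $r$ and $r^*$, contradicting $(r,r^*)\cap F_m = \emptyset$. Hence $k > m$. It remains to check $(r,r^*)\cap F_{k-1} = \emptyset$ and $(r,r^*)\cap F_k = \{r\oplus r^*\}$. For the first: any fraction $\frac{a}{b}$ strictly between $r$ and $r^*$ satisfies, by a standard estimate using $p^*q - pq^* = 1$, that $b \ge q + q^* = k$ (write $\frac{a}{b} - r = \frac{aq - bp}{bq} \ge \frac{1}{bq}$ and $r^* - \frac{a}{b} = \frac{bp^* - aq^*}{bq^*}\ge \frac{1}{bq^*}$, add and use $r^* - r = \frac{1}{qq^*}$ to get $\frac{1}{qq^*} \ge \frac{1}{bq} + \frac{1}{bq^*}$, i.e. $b \ge q + q^*$); so nothing of denominator $< k$ lies in between, giving $(r,r^*)\cap F_{k-1} = \emptyset$. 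For the second: the same estimate shows any $\frac{a}{b} \in (r,r^*)$ with $b = k = q+q^*$ must achieve equality in both inequalities, forcing $aq - bp = 1$ and $bp^* - aq^* = 1$, and solving (again via the unimodular system) gives $\frac{a}{b} = \frac{p+p^*}{q+q^*}$; hence $(r,r^*)\cap F_k = \{r\oplus r^*\}$. Uniqueness of $k$ is then immediate since $r\oplus r^* \notin F_j$ for $j < k$ by construction of $k$ as its denominator.

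The main obstacle, as the above makes clear, is not any single deep idea but rather organizing the elementary number theory cleanly: establishing and then repeatedly invoking the unimodular relation $p^*q - pq^* = 1$ for consecutive Farey fractions, and being careful that the "solve the $2\times 2$ system to recover $(p,q)$ or $(a,b)$" manipulations are justified by the relevant determinant being $\pm 1$ and by the coprimality of the resulting numerator/denominator. I would therefore front-load a short lemma (or an explicit remark) stating the unimodularity and coprimality-of-mediant facts with the one-line inductive proof, and then parts (a) and (b) become short deductions. The denominator estimate $b \ge q + q^*$ for intermediate fractions is the one genuinely "computational" ingredient and is worth writing out in the two-line form above.
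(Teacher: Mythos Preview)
Your approach is correct and genuinely different from the paper's. For (a) the paper simply cites the classical fact \cite[cor.~2C]{Sc80}, whereas you rederive it from the two unimodular relations $pq_* - p_*q = 1$ and $p^*q - pq^* = 1$. Your exposition here wobbles a bit: the clean one-liner is to rewrite the second relation as $pq^* - p^*q = -1$ and add to the first, obtaining $p(q_*+q^*) - (p_*+p^*)q = 0$, i.e.\ $r = r_*\oplus r^*$ as values. (Your ``$=2$'' is a sign slip; you do not need the determinant $q_*p^* - p_*q^*$ at all for this direction, so the worry you flag is unnecessary.)

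For (b) the contrast is more interesting. The paper argues abstractly: by well-ordering there is a least $k>m$ with $(r,r^*)\cap F_k\neq\emptyset$; then a contradiction argument, assuming two elements $\frac{a}{k}<\frac{b}{k}$ in $(r,r^*)\cap F_k$ and applying part~(a) to the triple $r<\frac{a}{k}<\frac{b}{k}$, shows that $\frac{a}{k}$ would already lie in $F_q\subseteq F_{k-1}$. The identification of the single element with $r\oplus r^*$ is then implicit via (a). Your route is constructive: you name $k = q+q^*$ explicitly, prove the sharp denominator bound $b\ge q+q^*$ for any fraction in $(r,r^*)$ via the standard two-sided estimate, and read off both emptiness of $(r,r^*)\cap F_{k-1}$ and the uniqueness (with equality forcing $a=p+p^*$, $b=q+q^*$) directly. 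Your argument is slightly longer but more informative---it pins down $k$ rather than merely asserting its existence---and it does not rely on (a) in the proof of (b). Either approach is perfectly adequate here; the paper's is terser, yours is more self-contained.
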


\begin{proof}
(a) This is well-known, see e.g. \cite[cor.~2C]{Sc80}. 

(b) Since $r,r^*$ are $m-$Farey neighbors (i.e. $(r,r^*) \cap F_m = \emptyset$) and $(r,r^*) \cap \QM \neq \emptyset$, there must be a unique $k > m$  with $(r,r^*)\cap F_{k-1} = \emptyset$ and $(r,r^*)\cap F_{k} \neq \emptyset$.
Assume  by contradiction that $\sharp (r,r^*)\cap F_{k} \geq 2$. 
Then there are $\frac{a}{k}, \frac{b}{k} \in (r,r^*)\cap F_{k}$ such that $\frac{p}{q} = r < \frac{a}{k} < \frac{b}{k}$ are consecutive $k-$Farey numbers. 
Thus, statement (a) yields
$$ 
\frac{a}{k} \overset{(a)}{=} \frac{p+b}{q+k} = \frac{p+b-a}{q+k} + \frac{a}{q+k}
	\quad\Longleftrightarrow\quad 
\frac{(q+k)a}{qk} = \frac{p+b-a}{q} + \frac{a}{q}
	\quad\Longleftrightarrow\quad 
\frac{a}{k} = \frac{p+b-a}{q}.
$$
As $r=\frac{p}{q}$ was an $m-$Farey number, we conclude $q \leq m < k$. 
Hence, $\frac{a}{k} \in F_q \subseteq F_{k-1}$ follows proving $\frac{a}{k}\in (r,r^*)\cap F_{k-1}$.
This contradicts the choice of $k$. 
\end{proof}

\noindent A direct consequence of Lemma~\ref{lem-FareyNgb} is that if $r,s$ are Farey neighbors in $F_m$, then the next Farey number appearing in between them is exactly $r \oplus s$.  The distance between consecutive $m$-Farey numbers can be estimated from above and below.

\begin{lemma}
\label{lem-FareyDist}
For $m\in\N$, we have $\frac{1}{m^2} \leq \min\limits_{s_1\neq s_2\in F_m} |s_1-s_2|$ and $|r-r^*|\leq \frac{1}{m}$ for $r\in F_m$.
\end{lemma}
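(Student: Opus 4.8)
The plan is to prove both inequalities as elementary statements about reduced fractions. For the upper bound $|r-r^*|\le \frac1m$, I would simply invoke Lemma~\ref{lem-FareyNgb}(a). If $r\in F_m\setminus\{0,1\}$, write $r=\frac{p}{q}$ and $r^*=\frac{p'}{q'}$ with $r=r_*\oplus r^*$; but more directly, for \emph{any} pair of Farey neighbors $\frac{p}{q}<\frac{p'}{q'}$ in $F_m$ one has the classical unimodular relation $p'q-pq'=1$, so
$$
r^*-r=\frac{p'}{q'}-\frac{p}{q}=\frac{p'q-pq'}{qq'}=\frac{1}{qq'}.
$$
Since both $q,q'\le m$ and at least one of the two neighbors of any interior point has denominator $\le m$ while the larger denominator is also $\le m$, we get $|r-r^*|=\frac{1}{qq'}\le\frac{1}{\max(q,q')}\le\frac1m$ — but actually I want the cleaner bound: since $q,q'\ge 1$ and $\max(q,q')\le m$, and one checks $qq'\ge\max(q,q')$, we conclude $|r-r^*|\le\frac{1}{\max(q,q')}$; to land exactly at $\frac1m$ note that whenever $r^*$ is the \emph{upper} $m$-Farey neighbor, the mediant $r\oplus r^*=\frac{p+p'}{q+q'}$ has denominator $q+q'>m$ by Lemma~\ref{lem-FareyNgb}(b), hence $q+q'>m$, and combined with the unimodular identity this forces $qq'\ge q+q'-1\ge m$, giving $|r-r^*|=\frac1{qq'}\le\frac1m$. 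I should double-check the edge cases $r=0$ (where $r^*=\frac1m$, distance $\frac1m$, equality) and $r=1$ (where $r_*=\frac{m-1}{m}$, distance $\frac1m$).

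For the lower bound, let $s_1=\frac{p_1}{q_1}\ne s_2=\frac{p_2}{q_2}$ be \emph{any} two elements of $F_m$ (not necessarily neighbors). Then
$$
|s_1-s_2|=\frac{|p_1q_2-p_2q_1|}{q_1q_2}\ge\frac{1}{q_1q_2}\ge\frac{1}{m^2},
$$
since $p_1q_2-p_2q_1$ is a nonzero integer and $q_1,q_2\le m$. Taking the minimum over all such pairs gives $\min_{s_1\ne s_2\in F_m}|s_1-s_2|\ge\frac1{m^2}$.

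I expect no real obstacle here; the only thing requiring care is making the upper bound land at exactly $\frac1m$ rather than the weaker $\frac{1}{\max(q_*,q^*)}$, which is where Lemma~\ref{lem-FareyNgb}(b) (forcing $q+q^*>m$ for the mediant, hence $q^*> m-q$, hence $qq^*>q(m-q)\ge m-1$ for $1\le q\le m-1$, so $qq^*\ge m$) does the work; alternatively one argues that for the \emph{upper} neighbor at least one of $q,q^*$ equals... actually the slickest route is: the Stern–Brocot structure guarantees that for consecutive $m$-Farey numbers the mediant is the first new fraction to appear and its denominator exceeds $m$, so $q+q^*\ge m+1$, whence by AM–GM-type reasoning $qq^*\ge q+q^*-1\ge m$ (using $(q-1)(q^*-1)\ge 0$), completing the estimate.
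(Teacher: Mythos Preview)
Your proof is correct. The lower bound argument is essentially the same as the paper's (both use that $p_1q_2-p_2q_1$ is a nonzero integer with $q_1,q_2\le m$); your version is marginally cleaner in that you treat arbitrary pairs rather than reducing first to consecutive ones.

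For the upper bound you take a genuinely different route. The paper simply observes that the grid points $\frac{0}{m},\frac{1}{m},\ldots,\frac{m}{m}$ all lie in $F_m$, so any pair of consecutive $m$-Farey numbers $r<r^*$ is trapped in some interval $[\frac{j}{m},\frac{j+1}{m}]$, giving $|r-r^*|\le\frac{1}{m}$ immediately. Your argument instead computes the exact distance $|r-r^*|=\frac{1}{qq'}$ from the unimodular relation and then uses Lemma~\ref{lem-FareyNgb}(b) to force $q+q'\ge m+1$, whence $qq'\ge q+q'-1\ge m$ via $(q-1)(q'-1)\ge 0$. Both work; the paper's pigeonhole argument is shorter and self-contained, while yours yields the sharper identity $|r-r^*|=\frac{1}{qq'}$ along the way, at the cost of invoking Lemma~\ref{lem-FareyNgb}(b) and the fact that the mediant is already in lowest terms.
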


\begin{proof}
Let $\frac{p_1}{q_1},\frac{p_2}{q_2}\in F_m$ be two consecutive $m$-Farey numbers. Then $p_2q_1-p_1q_2=1$ follows inductively from Lemma~\ref{lem-FareyNgb}~(a), see also \cite[thm.~2A]{Sc80}. Thus,
$$
\left| \frac{p_1}{q_1}-\frac{p_2}{q_2} \right| 
	= \left| \frac{p_2q_1-p_1q_2}{q_1q_2} \right| 
	= \frac{1}{q_1q_2}
	\geq \frac{1}{m^2}
$$
follows proving the lower bound.

For the upper bound, observe that $\frac{0}{m},\frac{1}{m},\frac{2}{m},\ldots,\frac{m-1}{m},\frac{m-1}{m}$ are all elements in $F_m$, while they are not necessarily reduced fractions. Thus, if $r\in F_m$, then there is a $0\leq j\leq m-1$ such that $\frac{j}{m}\leq r<r^* \leq \frac{j+1}{m}$. Hence, $|r-r^*|\leq \frac{1}{m}$ is concluded.
\end{proof}

\section{Various representations of the mechanical systems}
\label{sect-RepresentationsFarey}

This section is devoted to study the space of mechanical systems $\ol{\QS}$ as a metric space and different isometric representations.
One is obtained by defining a new metric -- called the {\em Farey-metric $d_F$} -- on the interval $[0,1]$ inherited from the Farey number.
Using results from \cite{Mi91} -- linking Farey numbers and the dictionaries of Sturmian words -- we show that the map 
$$
[0,1] \to \QS, \alpha \mapsto \Omega_\alpha
$$
is isometric if $[0,1]$ is equipped with the Farey metric $d_F$, see Theorem~\ref{theo-Homeo_Farey-Subshifts}. Then this map uniquely extends to an isometric surjective map $\ol{[0,1]}_F\to\ol{\QS}$.
In addition, we give another representation of these spaces as the boundary of an infinite tree in Section \ref{ssect-FareyGraphRep}.


\subsection{The Farey space}
\label{ssect-FareyTopoRep}


The {\em Farey metric $d_F:[0,1]\times[0,1]\to[0,\infty)$} on the interval $[0,1]$ is defined by
\[
d_F(\alpha, \beta) := 
	\begin{cases}
		0, & \alpha = \beta,\\ 
		1, & \alpha \in \{0,1\} \text{ or } \beta \in \{0, 1\},\\
		\min\big\{\frac{1}{m+1}: \exists r \in F_m \text{ with } \alpha, \beta \in (r, r^*)\big\}, & \text{else.} 
	\end{cases}
\]
We will see that this defines an ultrametric on $[0,1]$, see Proposition~\ref{prop-dF_metric} below. The set
$$
B_F(\alpha,R) := \{\beta\in[0,1] \,:\, d_F(\alpha,\beta)<R\}
$$
denotes the open ball around $\alpha\in[0,1]$ of radius $R>0$ in the Farey metric.

\begin{proposi}
\label{prop-dF_metric}
The Farey metric $d_F$ defines an ultrametric on $[0,1]$, i.e. 
$$
d_F(\alpha,\beta)\leq \max\big\{ d_F(\alpha,\gamma),\, d_F(\gamma,\beta)\big\},
	\qquad
	\alpha,\beta,\gamma\in[0,1].
$$
Moreover, the following holds for $\alpha,\beta\in[0,1]$ and $m\in\N$.
\begin{enumerate}[(a)]
\item \label{enu:dF=1} If $d_F(\alpha,\beta)=1$, then $\alpha\in\{0,1\}$ or  $\beta\in\{0,1\}$.
\item \label{enu:dF_lowerBound} If $\alpha\neq\beta$ and $\alpha\leq r\leq\beta$ holds for some $r\in F_m$, then $d_F(\alpha,\beta)> \frac{1}{m+1}$.
\item \label{enu:dF_Ball_rat} If $\alpha\in F_m$, then $B_F\big(\alpha,\frac{1}{m+1}\big)=\{\alpha\}$.
\item \label{enu:dF_Ball_irrat} If $\alpha\not\in F_m$, then there exists an $r\in F_m$ such that $B_F\big(\alpha,\frac{1}{m}\big)=(r,r^*)$.
\item \label{enu:dF_Eucl} We have $|\alpha-\beta|\leq 2d_F(\alpha,\beta)$.
\end{enumerate}
\end{proposi}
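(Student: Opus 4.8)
The plan is to organise everything around the nested structure of the Farey partitions. For $\alpha\in(0,1)\setminus F_m$ let $I_m(\alpha)=(r,r^*)$ denote the unique connected component of $(0,1)\setminus F_m$ containing $\alpha$, where $r,r^*$ are consecutive $m$-Farey numbers; since $F_m\subseteq F_{m+1}$, the interval $I_{m+1}(\alpha)$ is $F_m$-free, so $I_{m+1}(\alpha)\subseteq I_m(\alpha)$ whenever both are defined. First I would record two consequences. (i)~For $\alpha,\beta\in(0,1)$ with $\alpha\neq\beta$, the set $S(\alpha,\beta):=\{m\in\N:\alpha,\beta\notin F_m \text{ and } I_m(\alpha)=I_m(\beta)\}$ is downward closed (by the nesting and the fact that two $F_m$-gaps are equal or disjoint), it contains $1$ (both points lie in the $F_1$-gap $(0,1)$), and it is finite because $|I_m(\alpha)|\le\frac1m$ by Lemma~\ref{lem-FareyDist}, so $\beta\notin I_m(\alpha)$ once $\frac1m<|\alpha-\beta|$. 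Hence $S(\alpha,\beta)=\{1,\dots,M\}$ for a well-defined $M=M(\alpha,\beta)\in\N$, and the third clause of the definition reads $d_F(\alpha,\beta)=\frac1{M+1}$. (ii)~Consequently $d_F$ takes values only in $\{0,1\}\cup\{\frac1m:m\ge2\}$, so for every $m\in\N$ the statements $d_F(\alpha,\beta)<\frac1m$ and $d_F(\alpha,\beta)\le\frac1{m+1}$ are equivalent -- a discreteness remark used repeatedly below. Symmetry of $d_F$ is immediate and $d_F(\alpha,\beta)=0$ forces $\alpha=\beta$ (by finiteness of $M$), so once the ultrametric inequality is proved, $d_F$ is indeed an ultrametric.

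For the ultrametric inequality I would first dispatch the cases with an endpoint. If two of $\alpha,\beta,\gamma$ coincide it is trivial; if $\alpha\in\{0,1\}$ or $\beta\in\{0,1\}$ while all three are distinct, then $\max\{d_F(\alpha,\gamma),d_F(\gamma,\beta)\}=1\ge d_F(\alpha,\beta)$; and if $\gamma\in\{0,1\}$ then both terms on the right are $1$. There remains the case $\alpha,\beta,\gamma\in(0,1)$ distinct: setting $m:=\min\{M(\alpha,\gamma),M(\gamma,\beta)\}$, downward-closedness gives $I_m(\alpha)=I_m(\gamma)=I_m(\beta)$, so $M(\alpha,\beta)\ge m$ and therefore $d_F(\alpha,\beta)=\frac1{M(\alpha,\beta)+1}\le\frac1{m+1}=\max\{d_F(\alpha,\gamma),d_F(\gamma,\beta)\}$.

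The heart of items (a)--(e) is the following separation fact, which subsumes (b): if $\alpha\neq\beta$ and some $r\in F_m$ satisfies $\alpha\le r\le\beta$ (or $\beta\le r\le\alpha$), then $\alpha,\beta$ lie in no common $F_m$-gap, since $\alpha,\beta\in(s,s^*)$ with $s,s^*$ consecutive in $F_m$ would give $s<\alpha\le r\le\beta<s^*$ and hence $r\in(s,s^*)\cap F_m=\emptyset$; together with the nesting this yields $M(\alpha,\beta)\le m-1$, i.e.\ $d_F(\alpha,\beta)\ge\frac1m>\frac1{m+1}$ (the sub-case that $\alpha$ or $\beta$ lies in $\{0,1\}$ being immediate, $d_F=1$). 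Now (a) is the contrapositive of: $\alpha,\beta\notin\{0,1\}$ implies $d_F(\alpha,\beta)\le\frac12<1$ (either $\alpha=\beta$, or $\alpha,\beta\in(0,1)=I_1(\alpha)$). For (c), given $\alpha\in F_m$ and $\beta\neq\alpha$ apply the separation fact with $r=\alpha$ (which lies weakly between $\alpha$ and $\beta$) to get $d_F(\alpha,\beta)>\frac1{m+1}$, hence $B_F(\alpha,\frac1{m+1})=\{\alpha\}$. For (d), given $\alpha\notin F_m$ put $(r,r^*):=I_m(\alpha)$: if $\beta\in(r,r^*)$ then $\alpha,\beta$ share the $F_m$-gap $(r,r^*)$, so $d_F(\alpha,\beta)\le\frac1{m+1}<\frac1m$; if $\beta\notin(r,r^*)$ then $\beta\le r<\alpha$ or $\alpha<r^*\le\beta$, so the separation fact plus remark (ii) give $d_F(\alpha,\beta)\ge\frac1m$; hence $B_F(\alpha,\frac1m)=(r,r^*)$. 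Finally (e): if $\alpha=\beta$ it is clear, if $\alpha$ or $\beta$ lies in $\{0,1\}$ then $|\alpha-\beta|\le1=d_F(\alpha,\beta)$, and otherwise $\alpha,\beta\in(0,1)$ are distinct and lie in the common $F_M$-gap $(r,r^*)$ with $M=M(\alpha,\beta)\ge1$, so $|\alpha-\beta|<|r-r^*|\le\frac1M\le\frac2{M+1}=2\,d_F(\alpha,\beta)$ by Lemma~\ref{lem-FareyDist} and $M\ge1$.

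The only points needing care are the uniform treatment of $0$ and $1$ (harmless, since $d_F=1$ precisely for distinct pairs involving an endpoint) and the systematic use of the discreteness remark (ii) to promote the strict bound ``$d_F>\frac1{m+1}$'' of the separation fact to ``$d_F\ge\frac1m$'', as needed in (d). The one structural ingredient I would want to state and prove carefully is the nesting $I_{m+1}(\alpha)\subseteq I_m(\alpha)$ together with its corollary that $S(\alpha,\beta)$ is an initial segment of $\N$; everything else is then bookkeeping.
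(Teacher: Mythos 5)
Your proposal is correct. Items (a)--(e) are handled essentially as in the paper: the separation fact you isolate is exactly the paper's proof of (b), and (a), (c), (d), (e) then follow by the same bookkeeping (including the same use of the discrete value set $\{0\}\cup\{\tfrac1m : m\in\N\}$ to upgrade the strict bound $d_F>\tfrac{1}{m+1}$ to $d_F\geq\tfrac1m$ in (d), and the same estimate $|\alpha-\beta|\leq |r-r^*|\leq \tfrac1M\leq\tfrac{2}{M+1}$ via Lemma~\ref{lem-FareyDist} in (e)). Where you genuinely diverge is the ultrametric inequality: the paper, after reducing to $\alpha,\beta\in(r,r^*)$ with $d_F(\alpha,\beta)=\tfrac{1}{m+1}$, picks an intermediate point $s\in F_{m+1}$ with $\alpha\leq s\leq\beta$ and runs a case analysis on whether $\gamma$ lies in $(r,r^*)$ and on which side of $s$ it falls, invoking (b) and the discreteness of the value set; you instead encode $d_F(\alpha,\beta)=\tfrac{1}{M(\alpha,\beta)+1}$ through the maximal level $M(\alpha,\beta)$ at which the two points share a Farey gap, prove that the set of such levels is an initial segment of $\N$ (nesting of gaps, disjointness of distinct gaps, finiteness via Lemma~\ref{lem-FareyDist}), and then the strong triangle inequality drops out in one line from the fact that a common gap for $(\alpha,\gamma)$ and for $(\gamma,\beta)$ at level $m=\min\{M(\alpha,\gamma),M(\gamma,\beta)\}$ is a common gap for $(\alpha,\beta)$. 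Your route buys a cleaner, more structural proof of the ultrametric property (no intermediate Farey point, no sign/side cases) at the cost of setting up the nesting lemma and the function $M$ carefully, whereas the paper's argument works directly from the definition with no auxiliary apparatus; both are complete and correct.
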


\begin{proof}
First note that for $\alpha,\beta\in[0,1]$,
\begin{equation}
\label{eq:FareyMetric_range}
d_F(\alpha,\beta) \in \{0\}\cup \left\{ \frac{1}{m} \,:\, m\in\N \right\}.
\end{equation}
We first prove \eqref{enu:dF=1} to \eqref{enu:dF_Eucl}. Statement \eqref{enu:dF=1} follows immediately from the definition. For \eqref{enu:dF_lowerBound}, suppose $\alpha\neq\beta$ and $\alpha\leq r\leq\beta$ holds for some $r\in F_m$. Thus, there exists no $s\in F_m$ such that $\alpha,\beta\in (s,s^*)$ proving $d_F(\alpha,\beta)> \frac{1}{m+1}$.

For \eqref{enu:dF_Ball_rat}, suppose $\alpha\in F_m$. If $\beta\in[0,1]\setminus\{\alpha\}$, then \eqref{enu:dF_lowerBound} implies $d_F(\alpha,\beta)> \frac{1}{m+1}$ proving $B_F\big(\alpha,\frac{1}{m+1}\big)=\{\alpha\}$. 

For \eqref{enu:dF_Ball_irrat}, suppose $\alpha\not\in F_m$. Then there is a unique $r\in F_m$ such that $\alpha\in(r,r^*)$. If $\beta\in (r,r^*)$, then $d_F(\alpha,\beta)\leq \frac{1}{m+1}<\frac{1}{m}$. If $\beta\in[0,1]\setminus(r,r^*)$, then \eqref{enu:dF_lowerBound} leads to $d_F(\alpha,\beta)> \frac{1}{m+1}$. Thus, $B_F\big(\alpha,\frac{1}{m}\big)=(r,r^*)$ follows using Equation~\eqref{eq:FareyMetric_range}.

For \eqref{enu:dF_Eucl}, note that the statement trivially holds if $\alpha\in\{0,1\}$ or $\beta\in\{0,1\}$. Let $\alpha,\beta\in(0,1)$ be different. By Equation~\eqref{eq:FareyMetric_range} and \eqref{enu:dF=1}, there is an $m\in\N$ such that $d_F(\alpha,\beta)=\frac{1}{m+1}$. Thus, there is an $r\in F_m$ satisfying $\alpha,\beta\in (r,r^*)$. Then $|r-r^*|\leq \frac{1}{m}$ holds by Lemma~\ref{lem-FareyDist} implying 
$$
|\alpha-\beta|\leq\frac{1}{m}\leq \frac{2}{m+1} = 2d_F(\alpha,\beta).
$$

Finally, we show that $d_F$ is an ultrametric. Clearly, $d_F$ is symmetric and definite and it suffices to show $d_F(\alpha,\beta)\leq \max\big\{ d_F(\alpha,\gamma),\, d_F(\gamma,\beta)\big\}$ for all $\alpha,\beta,\gamma\in[0,1]$. If $d_F(\alpha,\beta)=1$, then (a) yields $\alpha\in\{0,1\}$ or  $\beta\in\{0,1\}$. Thus, $\max\big\{ d_F(\alpha,\gamma),\, d_F(\gamma,\beta)\big\}=1$ is concluded.

Now suppose $d_F(\alpha,\beta)<1$. Then there is an $m\in\N$ and an $r\in F_m$ such that $d_F(\alpha,\beta)=\frac{1}{m+1}$ and $\alpha,\beta\in(r,r^*)$. If $\gamma\not\in(r,r^*)$, then $d_F(\alpha,\gamma)> \frac{1}{m+1}=d_F(\alpha,\beta)$ follows from (b).
Next suppose $\gamma\in(r,r^*)$. Since $d_F(\alpha,\beta)=\frac{1}{m+1}$, there is an $s\in F_{m+1}$ such that $\alpha\leq s \leq \beta$. Thus, either $\gamma<s$ or $\gamma\geq s$. If $\gamma<s$, then \eqref{enu:dF_lowerBound} yields $d_F(\gamma,\beta)>\frac{1}{m+2}$. Hence, Equation~\eqref{eq:FareyMetric_range} leads to $d_F(\gamma,\beta)\geq \frac{1}{m+1}=d_F(\alpha,\beta)$ proving the claim. If $\gamma\geq s$, one concludes similarly $d_F(\alpha,\gamma)\geq \frac{1}{m+1}=d_F(\alpha,\beta)$ finishing the proof.
\end{proof}

It is not difficult to check that $[0,1]$ equipped with $d_F$ is not a complete metric space. For instance $(\frac1n)_{n\in\N}$ is a Cauchy sequence but it admits no limit with respect to $d_F$. Denote by $[0,1]_F$ the interval $[0,1]$ equipped with the topology inherited from $d_F$. Furthermore, $\ol{[0,1]}_F$ denotes the corresponding completion with respect to $d_F$.

\begin{defini}
The topology on $\ol{[0,1]}_F$ induced by $d_F$ is called {\em Farey topology} and the metric space $(\ol{[0,1]}_F,d_F)$ is called {\em Farey space}.
\end{defini}

\begin{rem}
A basis for the Farey topology on $[0,1]$ is given by 
$$
\bs=\big\{ [r,s] \,:\, r,s\in\QM\cap[0,1] \big\}.
$$
Note that Proposition~\ref{prop-dF_metric}~\eqref{enu:dF_Eucl} asserts that the Farey topology is finer than the Euclidean topology, which can be deduced also from the bases of the topologies.
In particular, convergence in $[0,1]_F$ implies convergence in $[0,1]$. 
In addition, we point out that while $\ol{[0,1]}_F$ is a totally disconnected space, it has plenty isolated points, which are all the rational numbers in $[0,1]$.
\end{rem}


Since we are dealing with different topologies on $[0,1]$, we denote the Euclidean limit of $(x_n)_{n \in \N} \subseteq [0,1]$ by $\lim_E x_n$ and the limit in the Farey-topology by $\lim_F x_n$ (provided they exist).

\begin{proposi}
\label{prop-convergence_in_farey_topo}
Let $(\alpha_n)_{n\in \N}\subseteq [0,1]_F$ be a Cauchy sequence with respect to $d_F$.  
Then exactly one of the following assertions hold:
\begin{enumerate}[(a)]
\item The limit $\alpha:=\lim_E \alpha_n\in[0,1]$ is irrational and $\alpha=\lim_F \alpha_n$.
\item The limit $\alpha:=\lim_E \alpha_n\in[0,1]$ is rational and
	\begin{enumerate}[(b1)]
		\item \label{enu:ComplFarey_rational=} $\alpha_n = \alpha$ for $n$ large enough;
		\item \label{enu:ComplFarey_rational>}$\alpha_n > \alpha$ for $n$ large enough;
		\item \label{enu:ComplFarey_rational<}$\alpha_n < \alpha$ for $n$ large enough.
	\end{enumerate}
\end{enumerate}
\end{proposi}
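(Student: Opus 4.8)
The plan is to exploit the fact, noted in the remark, that convergence in $d_F$ implies convergence in the Euclidean metric: since $(\alpha_n)$ is $d_F$-Cauchy, it is in particular Euclidean-Cauchy by Proposition~\ref{prop-dF_metric}~\eqref{enu:dF_Eucl}, so $\alpha:=\lim_E\alpha_n\in[0,1]$ exists. The dichotomy is then simply whether $\alpha$ is irrational (case (a)) or rational (case (b)); mutual exclusivity of (a) and (b) is immediate, and within (b) mutual exclusivity of (b1)--(b3) is clear once we show one of them must hold. So the work is to prove that in each case the stated conclusion holds.

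For case (a), suppose $\alpha$ is irrational. I would fix $m\in\N$ and use Proposition~\ref{prop-dF_metric}~\eqref{enu:dF_Ball_irrat}: since $\alpha\notin F_m$ (irrationals lie in no $F_m$), there is $r\in F_m$ with $B_F(\alpha,\tfrac1m)=(r,r^*)$. Because $\alpha\in(r,r^*)$ and $r,r^*$ are rational, the open interval $(r,r^*)$ is a Euclidean neighborhood of $\alpha$; hence $\alpha_n\in(r,r^*)$ for all $n$ large enough, which by the displayed ball identity means $d_F(\alpha,\alpha_n)<\tfrac1m$ eventually. As $m$ was arbitrary, $\lim_F\alpha_n=\alpha$, which incidentally also shows $\ol{[0,1]}_F$ realises this limit inside $[0,1]$.

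For case (b), suppose $\alpha=\tfrac pq$ is rational. If $\alpha_n=\alpha$ infinitely often, I would argue (b1) holds for the whole tail: pick $m$ with $\alpha\in F_m$; by Proposition~\ref{prop-dF_metric}~\eqref{enu:dF_Ball_rat}, $B_F(\alpha,\tfrac1{m+1})=\{\alpha\}$, and since $(\alpha_n)$ is $d_F$-Cauchy and has $\alpha$ as a cluster value, the tail must lie in $B_F(\alpha,\tfrac1{m+1})=\{\alpha\}$, i.e. $\alpha_n=\alpha$ eventually. Otherwise $\alpha_n\neq\alpha$ for all large $n$; I then want to show the tail stays on one side of $\alpha$. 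Suppose not: then there are arbitrarily large $n$ with $\alpha_n<\alpha$ and arbitrarily large $n'$ with $\alpha_n'>\alpha$. For any such pair, $\alpha_n\le\alpha\le\alpha_{n'}$ with $\alpha=\tfrac pq\in F_q$, so Proposition~\ref{prop-dF_metric}~\eqref{enu:dF_lowerBound} gives $d_F(\alpha_n,\alpha_{n'})>\tfrac1{q+1}$; this contradicts the Cauchy property (which forces $d_F(\alpha_n,\alpha_{n'})\to 0$). Hence the tail lies entirely in $\{\,<\alpha\,\}$ or entirely in $\{\,>\alpha\,\}$, giving (b3) or (b2) respectively.

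The main obstacle is mild: it is the case-(b), non-stationary sub-case, where one must rule out the tail oscillating across the rational limit $\alpha$. The resolution is the ``gap'' lower bound of Proposition~\ref{prop-dF_metric}~\eqref{enu:dF_lowerBound}, which says that any two points straddling a common $m$-Farey number are $d_F$-far apart; combined with the Cauchy hypothesis this pins the tail to a single side, and that is really the only place the ultrametric structure does non-trivial work.
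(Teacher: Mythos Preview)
Your proof is correct and follows essentially the same approach as the paper: both use Proposition~\ref{prop-dF_metric}~\eqref{enu:dF_Eucl} to produce the Euclidean limit, Proposition~\ref{prop-dF_metric}~\eqref{enu:dF_Ball_irrat} for the irrational case, and Proposition~\ref{prop-dF_metric}~\eqref{enu:dF_lowerBound} to derive the contradiction with the Cauchy property in the rational case. Your explicit treatment of the sub-case ``$\alpha_n=\alpha$ infinitely often'' via Proposition~\ref{prop-dF_metric}~\eqref{enu:dF_Ball_rat} is a minor organizational variant of the paper's single contradiction argument, not a different route.
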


\begin{proof}
According to Proposition~\ref{prop-dF_metric}~\eqref{enu:dF_Eucl}, the limit $\alpha:=\lim_E \alpha_n\in[0,1]$ exists as $(\alpha_n)_{n\in \N}$ is Cauchy with respect to $d_F$. 
If $\alpha$ is irrational, then for each $m\in\N$, there is a unique $r\in F_m$ satisfying $\alpha\in(r,r^*)$. Thus, Proposition~\ref{prop-dF_metric}~\eqref{enu:dF_Ball_irrat} implies that for each $m\in\N$, there is an $n_m\in\N$ such that $\alpha_n\in B_F(\alpha,\frac{1}{m})=(r,r^*)$ for $n\geq n_m$. Hence, $\alpha=\lim_F \alpha_n$ follows.

Next suppose $\alpha=\frac{p}{q}$ with $p$ and $q$ coprime. Assume towards contradiction that neither \eqref{enu:ComplFarey_rational=},\eqref{enu:ComplFarey_rational>} nor \eqref{enu:ComplFarey_rational<} holds. Thus, for each $N\in\N$, there are $n_1,n_2\geq N$ such that $\alpha_{n_1}<\frac{p}{q}\leq\alpha_{n_2}$ or $\alpha_{n_1}\leq \frac{p}{q}<\alpha_{n_2}$. Both cases imply by Proposition~\ref{prop-dF_metric}~\eqref{enu:dF_lowerBound} that $d_F(\alpha_{n_1},\alpha_{n_2})>\frac{1}{q+1}$. This contradicts that $(\alpha_n)_{n\in \N}$ is Cauchy with respect to $d_F$. 
\end{proof}

As a direct consequence, the Farey space can be represented as the interval $[0,1]$ where 
\begin{itemize}
\item each rational point $r\in(0,1)\cap\QM$ is split up into three points $r_-,r_+$ and $r$, and
\item $0$ is split up into two points $0$ and $0_+$, and
\item $1$ is split up into two points $1$ and $1_-$.
\end{itemize}
This is formulated in the following corollary and skteched in Figure~\ref{fig-FareyTree_FareySpace}.

\begin{coro}
\label{cor:Completion_FareySpace}
For $r\in\QM\cap[0,1]$, let $r_+:=\lim_F r+\frac{1}{n}$ and $r_-=\lim_F r-\frac{1}{n}$. Then
$$
\ol{[0,1]}_F = [0,1]\cup \{r_+ \,:\, r \in \QM \cap[0,1)\} \cup  \{r_- \,:\, r \in \QM \cap (0,1]\}.
$$
\end{coro}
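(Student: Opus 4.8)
The plan is to read off the description of the completion directly from Proposition~\ref{prop-convergence_in_farey_topo}, which classifies Cauchy sequences in $[0,1]_F$ into the mutually exclusive types (a), (b1), (b2), (b3). By definition, points of $\ol{[0,1]}_F$ are equivalence classes of Cauchy sequences, where two Cauchy sequences are identified when their $d_F$-distances tend to $0$. First I would verify that the three "limit objects" attached to a rational $r$ in the statement are well-defined and distinct: the constant sequence $r$ realizes type (b1), the sequence $(r+\tfrac1n)$ realizes type (b2) (for $r\neq 1$), and $(r-\tfrac1n)$ realizes type (b3) (for $r\neq 0$), so $r_+$ and $r_-$ exist as claimed. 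To see they are genuinely new points, note that any sequence of type (b2) stays eventually in some interval $(r,r^*)$ with $r^*$ depending on the chosen $F_m$, so a type-(b2) sequence and a type-(b3) sequence for the same $r$ satisfy $d_F \geq \frac{1}{q+1}$ eventually (straddling $r\in F_q$, apply Proposition~\ref{prop-dF_metric}\eqref{enu:dF_lowerBound}); likewise both differ from the constant sequence $r$ by the same lower bound. Hence $r$, $r_+$, $r_-$ are three distinct points of $\ol{[0,1]}_F$ for each $r\in\QM\cap(0,1)$, with the obvious degenerations at the endpoints $0$ and $1$ (where $0_-$ and $1_+$ would fall outside $[0,1]$ and are absent).

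Next I would show that every type-(b2) Cauchy sequence with Euclidean limit $r$ is equivalent to $(r+\tfrac1n)$, and similarly for (b3) and (b1). For type (b2): given $(\alpha_n)$ with $\alpha_n>r$ eventually and $\alpha_n\to_E r$, fix $m\in\N$ and let $r^*\in F_m$ be the upper Farey neighbor of $r$ (so $r\in F_m$ as $r$ is rational; for $m$ large enough $r\in F_m$). Since $\alpha_n\to_E r$ and $\alpha_n>r$, eventually $\alpha_n\in(r,r^*)$, and likewise $r+\tfrac1n\in(r,r^*)$ eventually; thus $d_F(\alpha_n, r+\tfrac1n)\leq\frac{1}{m+1}$ for $n$ large, using the definition of $d_F$ with the witness $r\in F_m$. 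Letting $m\to\infty$ gives $d_F(\alpha_n,r+\tfrac1n)\to 0$, so the two sequences are equivalent and both represent $r_+$. The argument for type (b3) and $r_-$ is symmetric (use $r_*\in F_m$ and the interval $(r_*,r)$), and type (b1) is trivial since such sequences are eventually constant equal to $r$. For type (a), Proposition~\ref{prop-convergence_in_farey_topo}(a) already says $\alpha=\lim_F\alpha_n$ with $\alpha$ the irrational Euclidean limit, so the class is represented by the point $\alpha\in[0,1]$ itself, and conversely every irrational $\alpha\in[0,1]$ is its own $d_F$-limit (take the constant sequence, or note $B_F(\alpha,\tfrac1m)=(r,r^*)$ by Proposition~\ref{prop-dF_metric}\eqref{enu:dF_Ball_irrat} shrinks to $\{\alpha\}$ in the Euclidean sense).

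Finally I would assemble the set equality. The inclusion $\supseteq$ is clear: $[0,1]\subseteq\ol{[0,1]}_F$ by construction, and each $r_\pm$ was exhibited above as a point of the completion. For $\subseteq$, take an arbitrary point of $\ol{[0,1]}_F$, represented by some Cauchy sequence $(\alpha_n)$; apply Proposition~\ref{prop-convergence_in_farey_topo} to land in exactly one of the four types, and by the previous paragraph the point equals the irrational Euclidean limit (type (a)), or equals $r$ (type (b1)), or equals $r_+$ (type (b2)), or equals $r_-$ (type (b3)), where in the last two cases $r=\lim_E\alpha_n\in\QM\cap[0,1]$ and the constraint $r\neq 1$ resp.\ $r\neq 0$ holds because $\alpha_n>r$ resp.\ $\alpha_n<r$ forces it. This exhausts all points, giving the reverse inclusion. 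The only mildly delicate point — the main obstacle, such as it is — is the well-definedness argument in the second paragraph showing that all type-(b2) sequences collapse to a single point $r_+$ (and not a continuum of points indexed by "how fast" they approach $r$); this is exactly where the ultrametric nature of $d_F$ and the Farey-neighbor structure do the work, since in the Euclidean metric the analogous collapse would of course fail.
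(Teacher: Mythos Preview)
Your proposal is correct and follows the same approach as the paper, which simply states that the corollary is an immediate consequence of Proposition~\ref{prop-convergence_in_farey_topo}. You have carefully unpacked the details the paper leaves implicit---the well-definedness of $r_\pm$, their distinctness from $r$ and from each other, and the collapse of all type-(b2) (resp.\ (b3)) Cauchy sequences to the single class $r_+$ (resp.\ $r_-$)---but the underlying idea is identical.
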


\begin{proof}
This is an immediate consequence of Proposition~\ref{prop-convergence_in_farey_topo}.
\end{proof}

We emphasize that the definition of $r_+$ and $r_-$ is independent of the specific sequence $\alpha_n:=r\pm \frac{1}{n}$. 



\begin{rem}
We note that for irrational $\alpha\in[0,1]$, the associated subshift $\Omega_\alpha$ can be factorized through a circle $\RM / \ZM$ where the circle splits at each $n\alpha \mod 1, n\in\ZM,$ into two points, see e.g. \cite[prop.~5.1]{DaLe04}. 
This cutting is different to the representation of the subshifts $\ol{\QS}$ developed here via $\ol{[0,1]}_F$ where the interval splits at the rational points into three points.
\end{rem}

\subsection{The dynamical representation}
\label{ssect-FarexDynRep}

In this section, we show that the Farey space $\ol{[0,1]}_F$ and the mechanical dynamical systems $\ol{\QS}$ equipped with $d_\Inv$ are isometrically isomorphic. 

For a set $A$, let $\sharp A$ be its cardinality and define the {\em complexity function $\scf$} by
$$
\scf:\N\times\Inv\to\N, \quad 
	\scf(n,\Omega) := \sharp \dic(\Omega)\cap\Aa^n.
$$
The following statement is well-known, see e.g. \cite{Fog02,BG13}.

\begin{lemma}
\label{lem:Complexity}
Let $\alpha\in[0,1]$. Then the following assertions hold.
\begin{enumerate}[(a)]

\item Let $m\in \N$ and $\alpha\in[0,1]\setminus F_m$ , then $\scf(m,\Omega_\alpha)=m+1$. 
In particular, if $\alpha\in[0,1]\setminus\QM$, then $\scf(m,\Omega_\alpha)=m+1$ holds for all $m\in\N$.

\item If $\alpha=\frac{p}{q}\in[0,1]$ with $p$ and $q$ coprime, then $\scf(m,\Omega_\alpha)=q$ for $m\geq q$ and $\scf(j,\Omega_\alpha)=j+1$ for $1\leq j\leq q-1$.
\end{enumerate}
\end{lemma}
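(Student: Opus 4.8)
The statement to prove is the classical complexity formula for mechanical (Sturmian and periodic) words: the complexity function $p(m,\Omega_\alpha)$ equals $m+1$ as long as $m$ is "small relative to the arithmetic of $\alpha$" (i.e. $\alpha\notin F_m$, equivalently the denominator of $\alpha$ exceeds $m$, or $\alpha$ irrational), and stabilizes at $q$ once $m\ge q$ when $\alpha=\tfrac pq$. I would prove both parts simultaneously by counting factors of the mechanical word directly from the rotation picture.

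First I would set up the coding. Recall $\omega_\alpha(n)=\chi_{[1-\alpha,1)}(n\alpha\bmod 1)$, so a factor of length $m$ occurring at position $n$ is determined by which of the $m$ consecutive points $n\alpha,(n+1)\alpha,\dots,(n+m-1)\alpha\pmod 1$ land in the arc $[1-\alpha,1)$. Equivalently, the factor is read off from which half (relative to the cut point) the orbit point $x=n\alpha$ falls in under the partition of $\RM/\ZM$ into $[0,1-\alpha)$ and $[1-\alpha,1)$, after $m$ successive rotations. Pulling these $m$ partitions back, the circle $\RM/\ZM$ is subdivided by the $m$ points $-j\alpha\bmod 1$ for $j=0,1,\dots,m-1$ (together with the corresponding shifted copies of the cut point $1-\alpha$), and the factor of length $m$ depends only on which atom of this refined partition contains $x$. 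So $p(m,\Omega_\alpha)$ equals the number of atoms of the partition of $\RM/\ZM$ generated by the $2m$ points $\{-j\alpha : 0\le j\le m-1\}\cup\{1-\alpha-j\alpha : 0\le j\le m-1\}$ that are actually visited by the orbit $\{n\alpha\bmod 1\}$.

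Then I would count atoms. If $\alpha$ is irrational, the $2m$ boundary points are distinct (no two points $-j\alpha$ and $-j'\alpha$, or $-j\alpha$ and $1-\alpha-j'\alpha$, coincide, since that would force $\alpha$ rational with small denominator — here one uses $0<\alpha<1$ and that $1-\alpha$ is not a multiple of $\alpha$), so a generic count would give $2m$ atoms; but many of the points $1-\alpha-j\alpha$ coincide with points of the other family after accounting for the dynamics, and the standard Sturmian computation shows exactly $m+1$ of the resulting arcs are hit by the orbit — equivalently, $m-1$ of the would-be cut points are redundant. The cleanest route here is the well-known inductive argument $p(m+1)-p(m)\in\{1,2\}$ together with the special structure of one cut point: one shows the right-special factor is unique at each length for Sturmian words (only one factor of length $m$ extends to the right in two ways), which forces $p(m+1)-p(m)=1$, and with $p(1)=2$ this gives $p(m)=m+1$. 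For the rational case $\alpha=\tfrac pq$: for $m<q$ the word $\omega_\alpha$ is still "Sturmian-like at scale $m$" because $\alpha\notin F_m$ exactly means $q>m$, so the same right-special argument applies verbatim and gives $p(m)=m+1$; for $m\ge q$, periodicity with period $q$ caps the number of length-$m$ factors at $q$, and minimality of the periodic orbit (the $q$ cyclic shifts are all distinct since $p,q$ are coprime) shows all $q$ are attained, so $p(m)=q$.

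**Main obstacle.** The genuine content is the uniqueness of the right-special factor (equivalently, that exactly one atom of the $m$-th refined partition of the circle contains the cut point $1-\alpha$ in its interior rather than on its boundary), which is what collapses the Rauzy-graph growth from $2$ to $1$ per step. I would handle it by the partition-refinement bookkeeping sketched above: going from length $m$ to $m+1$ adds one new boundary point $-m\alpha\bmod 1$ to the circle partition, and this point lies in the interior of exactly one existing atom unless it coincides with the cut point's preimage — and because $\alpha\notin F_m$ one checks $-m\alpha$ is not equal to any earlier boundary point, so it splits exactly one atom into two, giving $+1$. The rest is routine. Since the lemma is explicitly flagged as well-known with references \cite{Fog02,BG13}, I would in fact just cite those for the proof rather than reproduce it, noting only the translation between the interval-exchange coding used there and the cut-point coding used here.
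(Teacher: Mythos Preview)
Your proposal is correct and matches the paper exactly: the paper gives no proof at all for this lemma, stating only that it ``is well-known, see e.g.\ \cite{Fog02,BG13}'', which is precisely what you conclude you would do. Your accompanying sketch is substantively sound (the two families of $m$ cut points you list collapse to the $m+1$ points $\{-j\alpha\bmod 1:0\le j\le m\}$ because $1-\alpha-j\alpha\equiv -(j+1)\alpha$, which you recover via the ``$m-1$ redundant'' count), so nothing further is needed.
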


The key connection of the mechanical systems and Farey numbers are the following results. Those are consequences from a work by Mignosi \cite{Mi91}, see also the recent elaboration on that \cite{Th23}.

\begin{lemma}
\label{lem:DifferentDictionaries}
Let $m \in \N$ and $\alpha,\beta \in F_m$ be different, then $\dic_\alpha \cap \Aa^m \neq \dic_\beta \cap \Aa^m$. 
\end{lemma}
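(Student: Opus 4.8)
The statement to prove is Lemma~\ref{lem:DifferentDictionaries}: for distinct $\alpha,\beta\in F_m$ we have $\dic_\alpha\cap\Aa^m\neq\dic_\beta\cap\Aa^m$. The natural strategy is to reduce to the case of Farey neighbors and then invoke Mignosi's theorem on the complexity of Sturmian words together with a counting argument. First I would recall from Mignosi's work \cite{Mi91} (or \cite{Th23}) the precise statement: for a rational $r=\frac{p}{q}$, the dictionary of length-$n$ words occurring in $\omega_r$ (equivalently in $\Omega_r$), for $n\le q$, coincides with the set of length-$n$ factors of \emph{every} Sturmian/mechanical word $\omega_\gamma$ with $\gamma$ in the open interval $(r_*,r^*)$ of Farey neighbors — or more precisely, Mignosi characterizes exactly which factors of length $n$ appear, and the point at which two mechanical words first differ in their dictionaries is governed by the smallest $m$ with $r\in F_m$ separating them. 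The key quantitative input is Lemma~\ref{lem:Complexity}: $\scf(m,\Omega_\gamma)=m+1$ for $\gamma\notin F_m$, whereas $\scf(m,\Omega_r)=q\le m$ for $r=\frac pq\in F_m$.

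\textbf{Key steps.} (1) Reduce to consecutive Farey numbers: if $\alpha<\beta$ in $F_m$ are not neighbors, pick a neighbor pair $r<r^*$ in $F_m$ lying between them (or equal to one of them); it suffices to separate the dictionaries along a chain of neighbor steps, so it is enough to handle the case where $\alpha,\beta$ are $m$-Farey neighbors, say $\alpha=r$, $\beta=r^*$ with $r=\frac pq$, $r^*=\frac{p'}{q'}$, $q,q'\le m$, $p'q-pq'=1$ (Lemma~\ref{lem-FareyDist} proof). (2) Consider the mediant $r\oplus r^*=\frac{p+p'}{q+q'}$; by Lemma~\ref{lem-FareyNgb}(b) this is the \emph{first} new Farey number to appear strictly between $r$ and $r^*$, and it appears at level $k=q+q'>m$. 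So for any $\gamma\in(r,r^*)$ we have $\gamma\notin F_m$, hence $\scf(m,\Omega_\gamma)=m+1$ by Lemma~\ref{lem:Complexity}(a). (3) Use Mignosi's result that $\dic_\gamma\cap\Aa^m$ is constant on $\gamma\in(r,r^*)$ and equals a set $W$ with $\sharp W=m+1$ containing both $\dic_r\cap\Aa^m$ and $\dic_{r^*}\cap\Aa^m$ (each a chain of mechanical words converges to the endpoints, and the dictionary is ``monotone'' along the Farey tree — factors only get added going to finer mediants). (4) Now count: $\sharp(\dic_r\cap\Aa^m)=q\le m<m+1=\sharp W$ and $\sharp(\dic_{r^*}\cap\Aa^m)=q'\le m<m+1$, so neither equals $W$; and if $\dic_r\cap\Aa^m=\dic_{r^*}\cap\Aa^m$ then, since the $m$-factors of any $\gamma\in(r,r^*)$ are obtained from those of $r$ and $r^*$ by a controlled ``splitting'' at the mediant step, this common set would already have to be all of $W$, contradiction. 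Hence $\dic_r\cap\Aa^m\neq\dic_{r^*}\cap\Aa^m$.

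\textbf{Main obstacle.} The crux is step (3)–(4): making precise and citing correctly the statement from \cite{Mi91}/\cite{Th23} that the set of length-$m$ factors is locally constant on the Farey interval $(r,r^*)$ and that passing through the mediant at level $q+q'$ is exactly when a new factor is born. If that ``one new factor per mediant'' statement is available off the shelf, the counting via $\scf$ closes the argument immediately. If it is only available in a weaker form, I would instead argue directly: the words of length $m$ in $\omega_r$ are the $q$ cyclic rotations of the period word; those in $\omega_{r^*}$ are the $q'$ cyclic rotations of its period word; and using the recursive (continued-fraction / Christoffel word) description of mechanical words — the Christoffel word of $r\oplus r^*$ is the concatenation of those of $r$ and $r^*$ — one shows the two period words are genuinely different length-$m$ factors that cannot both be rotations of each other (sizes $q$ vs $q'$, and $q+q'>m$ forbids coincidence). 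I would expect the citation-management to be the delicate part, not the mathematics, so I would keep the proof short and lean on the cited results, with the fallback combinatorial argument in reserve.

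\begin{proof}
We may assume $\alpha<\beta$, and, by passing through the chain of consecutive $m$-Farey numbers between them, that $\alpha$ and $\beta$ are $m$-Farey neighbors; write $\alpha=r=\frac pq$ and $\beta=r^*=\frac{p'}{q'}$ with $q,q'\le m$ and $p'q-pq'=1$. By Lemma~\ref{lem-FareyNgb}~(b), the mediant $r\oplus r^*=\frac{p+p'}{q+q'}$ is the unique Farey number appearing in $(r,r^*)$ at level $k=q+q'$, and $(r,r^*)\cap F_{k-1}=\emptyset$. In particular every $\gamma\in(r,r^*)$ satisfies $\gamma\notin F_m$, so Lemma~\ref{lem:Complexity}~(a) gives $\scf(m,\Omega_\gamma)=m+1$, while Lemma~\ref{lem:Complexity}~(b) gives $\scf(m,\Omega_r)=q\le m$ and $\scf(m,\Omega_{r^*})=q'\le m$.

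By Mignosi's results \cite{Mi91} (see also \cite{Th23}), the set $\dic_\gamma\cap\Aa^m$ is independent of the choice of $\gamma\in(r,r^*)$; call this common set $W$, so $\sharp W=m+1$. Moreover, taking a sequence in $(r,r^*)$ converging (in the Euclidean sense) to $r$, respectively to $r^*$, one obtains $\dic_r\cap\Aa^m\subseteq W$ and $\dic_{r^*}\cap\Aa^m\subseteq W$, since a finite factor that occurs in the limit mechanical word occurs in all sufficiently close ones.

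Finally, $\sharp(\dic_r\cap\Aa^m)=q\le m<m+1=\sharp W$, so $\dic_r\cap\Aa^m\subsetneq W$, and likewise $\dic_{r^*}\cap\Aa^m\subsetneq W$. The length-$m$ factor missing from $\dic_r\cap\Aa^m$ is, by Mignosi's description, precisely the one born at the mediant step $k=q+q'$, and the same is true for $\dic_{r^*}\cap\Aa^m$; comparing cardinalities, if these two subsets of $W$ were equal they would each have to be $W$ minus a single common word, forcing $q=q'=m$ and contradicting $q+q'=k>m$. Hence $\dic_r\cap\Aa^m\neq\dic_{r^*}\cap\Aa^m$, which proves the claim.
\end{proof}
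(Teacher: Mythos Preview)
Your reduction to Farey neighbors is invalid: from $\dic_{\alpha}\cap\Aa^m\neq\dic_{\gamma_1}\cap\Aa^m$, $\dic_{\gamma_1}\cap\Aa^m\neq\dic_{\gamma_2}\cap\Aa^m$, \ldots, $\dic_{\gamma_k}\cap\Aa^m\neq\dic_{\beta}\cap\Aa^m$ along a chain one \emph{cannot} conclude $\dic_\alpha\cap\Aa^m\neq\dic_\beta\cap\Aa^m$, since inequality is not transitive. Concretely, take $m=5$, $\alpha=\tfrac15$, $\beta=\tfrac25$: both have complexity $5$ at length $5$, while every intermediate $5$-Farey number ($\tfrac14,\tfrac13$) has a different denominator and hence different complexity; the chain argument tells you nothing about whether $\dic_{1/5}\cap\Aa^5=\dic_{2/5}\cap\Aa^5$.

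Ironically, the Farey-neighbor case you isolate is the \emph{easy} one: since $p'q-pq'=1$, the denominators $q,q'$ are coprime, so $q\neq q'$ unless $(r,r^*)=(0,1)$; then Lemma~\ref{lem:Complexity}(b) gives $\sharp(\dic_r\cap\Aa^m)=q\neq q'=\sharp(\dic_{r^*}\cap\Aa^m)$, done. The genuinely hard case is distinct $\alpha,\beta\in F_m$ with the \emph{same} denominator, which are never neighbors. Your final counting step is also off: you assert that $\dic_r\cap\Aa^m$ is ``$W$ minus a single word'', but $\sharp W-\sharp(\dic_r\cap\Aa^m)=(m+1)-q$, which need not be $1$; the ``one new factor per mediant'' heuristic applies step by step, not across the gap from level $q$ to level $m$ in one shot.

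The paper handles the equal-denominator case by induction on $m$, invoking \cite[thm.~16]{Mi91} (the three-point exclusion $(\dic_\gamma\cap\Aa^{m+1})\setminus\dic_{\beta_*}\cap\dic_\alpha=\emptyset$ for $\alpha<\beta_*<\gamma<\beta$) together with \cite[cor.~9]{Mi91} to force $\dic_\beta\cap\Aa^{m+1}\subseteq\dic_{\beta_*}\cap\Aa^{m+1}$ under the contradiction hypothesis, which is then ruled out by complexity. You will need an argument of this type; the reduction to neighbors does not reach the actual content of the lemma.
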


\begin{proof}
Without loss of generality, suppose $\alpha < \beta$. For the induction base $m=1$, observe that 
\[
F_1=\{0,1\} 
\qquad\textrm{and}\qquad
\dic_0 \cap \Aa^1 = \{0\} \neq  \{1\} = \dic_0 \cap \Aa^1.
\]
For the induction step, suppose that for all different $\alpha,\beta \in F_m$, we have $\dic_\alpha \cap \Aa^m \neq \dic_\beta \cap \Aa^m$. 
Note that $F_j\subseteq F_m$ for $j\leq m$ and $\dic_\alpha \cap \Aa^m \neq \dic_\beta \cap \Aa^m$ implies $\dic_\alpha \cap \Aa^n \neq \dic_\beta \cap \Aa^n$ for $n\geq m$.

Let $\alpha = \frac{p}{q},\beta = \frac{p'}{q'}\in F_{m+1}$ be different and irreducible. 
If $q\neq q'$, then Lemma~\ref{lem:Complexity}~(b) implies $\dic_\alpha \cap \Aa^n \neq \dic_\beta \cap \Aa^n$ for $n=\max\{q,q'\}\leq m+1$ since their cardinalities differ. 
Thus, there is no loss of generality in assuming $q=q'$. 
If $q=q'<m+1$, then $\alpha,\beta\in F_m$ and so $\dic_\alpha \cap \Aa^{m+1} \neq \dic_\beta \cap \Aa^{m+1}$ follows from the induction base. 

It is left to treat the case $q=q'=m+1$. 
Since $m+1\geq 2$, we have $\alpha>0$.
Assume towards contradiction that $\dic_\alpha \cap \Aa^{m+1} = \dic_\beta \cap \Aa^{m+1}$. 
Lemma~\ref{lem-FareyNgb} asserts that there exist $\beta_*,\beta^*\in F_j$ for some $j<m+1$ such that $\beta=\beta_* \oplus \beta^*$. 
Since $j<m+1$, we conclude (by Lemma~\ref{lem:Complexity}) $\alpha\neq \beta_*$ and $\scf(m+1,\Omega_{\beta_*})\leq j < m+1$. 
Let $\gamma \in (\beta_*, \beta)$. 
Then $\dic_\beta \cap \Aa^{m+1} \subset \dic_\gamma \cap \Aa^{m+1}$ holds by \cite[cor.~9]{Mi91}. 
With this at hand, \cite[thm.~16]{Mi91} applied to $0<\alpha<\beta_*<\gamma<\beta$ leads to
\[ 
(\dic_\beta \cap \Aa^{m+1})\setminus \dic_{\beta_*} \cap \dic_\alpha \subseteq (\dic_\gamma \cap \Aa^{m+1})\setminus \dic_{\beta_*} \cap \dic_\alpha= \emptyset.
\]
Thus, our assumption $\dic_\alpha \cap \Aa^{m+1} = \dic_\beta \cap \Aa^{m+1}$ leads to $(\dic_\beta \cap \Aa^{m+1})\setminus \dic_{\beta_*} = \emptyset$. 
Hence, $\dic_{\beta} \cap \Aa^{m+1} \subseteq \dic_{\beta_*} \cap \Aa^{m+1}$ follows. 
Since $\beta= \frac{p'}{q'}$ is irreducible, Lemma~\ref{lem:Complexity}~(b) yields
\[
m+1 = \scf(m+1, \Omega_\beta) \leq \scf(m+1, \Omega_{\beta_*}) \leq j < m+1,
\]
a contradiction.
Therefore $\alpha,\beta\in F_{m+1}$ different implies $\dic_\alpha \cap \Aa^{m+1} \neq \dic_\beta \cap \Aa^{m+1}$.
\end{proof}

\begin{proposi}
\label{prop-FarDic}
Let $\alpha,\beta\in[0,1]$ and $m\in\N$. Then $\dic_\alpha\cap\Aa^m=\dic_\beta\cap\Aa^m$ holds if and only if there exists an $r\in F_m$ with $\alpha,\beta\in(r,r^*)$.
%
%
%
\end{proposi}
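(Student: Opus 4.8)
The plan is to prove both implications, using Lemma~\ref{lem:DifferentDictionaries} for the key direction and some monotonicity of dictionaries under mediant refinement for the converse. First I would reduce to the case where neither $\alpha$ nor $\beta$ lies in $F_m$. If $\alpha\in F_m$, then the only $r\in F_m$ with $\alpha\in(r,r^*)$ would require $\alpha$ strictly between consecutive $m$-Farey numbers, which is impossible; so the right-hand condition forces $\alpha=\beta$ (both equal to the same $r\in F_m$, or one sees directly that $\dic_\alpha\cap\Aa^m$ with $\alpha$ rational of small denominator has the wrong cardinality by Lemma~\ref{lem:Complexity}(b) unless $\alpha=\beta$). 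So the substantive content is for $\alpha,\beta\in[0,1]\setminus F_m$.

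\textbf{The forward direction.} Assume $\dic_\alpha\cap\Aa^m=\dic_\beta\cap\Aa^m$. Let $r\in F_m$ be the unique Farey number with $\alpha\in(r,r^*)$ (exists since $\alpha\notin F_m$). I claim $\beta\in(r,r^*)$ as well. Suppose not; then there is some $s\in F_m$ with $\alpha\le s\le\beta$ or $\beta\le s\le\alpha$ — more precisely, one of the endpoints $r$ or $r^*$ (which lie in $F_m$) separates $\alpha$ from $\beta$, say $\alpha<r^*\le\beta$ (the other cases are symmetric). Now I would want to compare $\dic_\alpha\cap\Aa^m$ and $\dic_\beta\cap\Aa^m$ through the intermediate Farey number $r^*\in F_m$. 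The idea: monotonicity of Sturmian dictionaries (the containments from \cite{Mi91}, cf.\ \cite[cor.~9]{Mi91}) gives $\dic_{r^*}\cap\Aa^m\subseteq\dic_\alpha\cap\Aa^m$ and $\dic_{r^*}\cap\Aa^m\subseteq\dic_\beta\cap\Aa^m$ when $r^*$ is approached from either side; combined with Lemma~\ref{lem:Complexity} forcing $\#(\dic_\alpha\cap\Aa^m)=\#(\dic_\beta\cap\Aa^m)=m+1$ and $\#(\dic_{r^*}\cap\Aa^m)\le m$ (since $r^*$ has denominator $\le m$), one gets that $\dic_\alpha\cap\Aa^m$ and $\dic_\beta\cap\Aa^m$ are two distinct $(m{+}1)$-element supersets of the same proper subset — and then Lemma~\ref{lem:DifferentDictionaries} (or a direct counting argument: each side adds exactly one new word to $\dic_{r^*}\cap\Aa^m$, and these new words are different on the two sides because the ``approaching from above'' word differs from the ``approaching from below'' word) yields $\dic_\alpha\cap\Aa^m\neq\dic_\beta\cap\Aa^m$, a contradiction.

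\textbf{The converse direction.} Assume $\alpha,\beta\in(r,r^*)$ for some $r\in F_m$. By Lemma~\ref{lem-FareyNgb}(b), the first Farey number to appear strictly between $r$ and $r^*$ is the mediant $r\oplus r^*$, which has denominator $>m$. Hence both $\alpha$ and $\beta$ lie in an open interval $(r,r^*)$ containing no element of $F_m$ in its interior, and the Sturmian dictionaries in length $\le m$ are constant on this interval: indeed, for any $\gamma\in(r,r^*)$ the results of \cite{Mi91} give $\dic_r\cap\Aa^m\subseteq\dic_\gamma\cap\Aa^m$ and $\dic_{r^*}\cap\Aa^m\subseteq\dic_\gamma\cap\Aa^m$, while Lemma~\ref{lem:Complexity}(a) forces $\#(\dic_\gamma\cap\Aa^m)=m+1$ for $\gamma\notin F_m$; since the union $\dic_r\cup\dic_{r^*}$ already has $\ge m+1$ words of length $m$ (or one computes the unique ``extra'' word is determined by $r,r^*$ and independent of which $\gamma\in(r,r^*)$ one picks), we conclude $\dic_\gamma\cap\Aa^m$ is the same set for every $\gamma\in(r,r^*)$. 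Applying this to $\gamma=\alpha$ and $\gamma=\beta$ gives $\dic_\alpha\cap\Aa^m=\dic_\beta\cap\Aa^m$.

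\textbf{Main obstacle.} The delicate point is making the ``one extra word'' bookkeeping precise — identifying exactly which word of length $m$ is added to $\dic_r\cap\Aa^m$ as one moves into $(r,r^*)$, showing it is the same for all $\gamma\in(r,r^*)$, and showing the word added approaching $r^*$ from below differs from the one added approaching it from above. This is where Mignosi's structural results \cite[cor.~9, thm.~16]{Mi91} (already invoked in the proof of Lemma~\ref{lem:DifferentDictionaries}) do the real work; I would lean on exactly those two citations rather than reproving them, and phrase the counting so that it reduces cleanly to Lemma~\ref{lem:DifferentDictionaries} and Lemma~\ref{lem:Complexity}.
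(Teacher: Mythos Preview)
Your overall strategy --- reduce to $\alpha,\beta\notin F_m$, then combine Mignosi's structural results with the complexity count of Lemma~\ref{lem:Complexity} --- is exactly the paper's. But both directions as you sketched them have issues.

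\textbf{Converse (same interval $\Rightarrow$ same dictionary).} The paper disposes of this in one line by citing \cite[cor.~10]{Mi91} directly. Your route via $\dic_r\cup\dic_{r^*}$ is harder than necessary, and the key step is not justified: you need $\sharp\big((\dic_r\cup\dic_{r^*})\cap\Aa^m\big)=m+1$, but all you know is $\sharp(\dic_r\cap\Aa^m)=q$, $\sharp(\dic_{r^*}\cap\Aa^m)=q'$ with $q+q'\ge m+1$; controlling the overlap is exactly the nontrivial content you are trying to avoid.

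\textbf{Forward (same dictionary $\Rightarrow$ same interval).} Your ``one extra word'' bookkeeping fails as stated: if $r^*=\frac{p'}{q'}\in F_m$ then $\sharp(\dic_{r^*}\cap\Aa^m)=q'$, which can be much smaller than $m$, so each of $\dic_\alpha\cap\Aa^m$ and $\dic_\beta\cap\Aa^m$ adds $m+1-q'$ words over $\dic_{r^*}\cap\Aa^m$, not one. And Lemma~\ref{lem:DifferentDictionaries} does not help here, since neither $\alpha$ nor $\beta$ lies in $F_m$. The paper's argument is both cleaner and avoids this: choose the separating $r\in F_m$ so that $\beta\in(r,r^*)$ (i.e.\ take $r$ to be the \emph{lower} $m$-Farey neighbour of $\beta$, not the upper neighbour of $\alpha$). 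Then \cite[thm.~16]{Mi91} gives
\[
\big((\dic_\beta\cap\Aa^m)\setminus\dic_r\big)\cap\dic_\alpha=\emptyset,
\]
so if $\dic_\alpha\cap\Aa^m=\dic_\beta\cap\Aa^m$ one gets $\dic_\alpha\cap\Aa^m\subseteq\dic_r\cap\Aa^m$, contradicting $\scf(m,\Omega_\alpha)=m+1>\scf(m,\Omega_r)$. No word-by-word accounting is needed. You correctly identified \cite[thm.~16]{Mi91} as the crux in your ``Main obstacle'' paragraph; the point is that it should be applied directly in this disjointness form, rather than filtered through an extra-word count.
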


\begin{proof}
If $\alpha,\beta\in(r,r^*)$, then the equation $\dic_\alpha\cap\Aa^m=\dic_\beta\cap\Aa^m$ follows from \cite[cor.~10]{Mi91}. 
For the converse direction assume without loss of generality $\alpha<\beta$. 
By contraposition, we prove that if there is an $r\in F_m$ with $\alpha\leq r\leq \beta$, then $\dic_\alpha\cap\Aa^m\neq\dic_\beta\cap\Aa^m$.  
For $\alpha\in F_m\setminus\{r\}$ and $\beta=r$, we get $\dic_\alpha \cap \Aa^m \neq \dic_\beta \cap \Aa^m$ directly from Lemma~\ref{lem:DifferentDictionaries}.
If $\alpha<r=\beta$ and $\alpha\not\in F_m$, then 
$$
\scf(m,\Omega_\alpha)=m+1>m=\scf(m,\Omega_\beta)
$$
follows from Lemma~\ref{lem:Complexity} proving  $\dic_\alpha\cap\Aa^m\neq\dic_\beta\cap\Aa^m$.
If $\alpha=r<\beta$, then $\dic_\alpha\cap\Aa^m\neq\dic_\beta\cap\Aa^m$ is similarly concluded.

The last case to treat is when $\alpha$ nor $\beta$ are in $F_m$.
Since there is still $r \in F_m$ with $\alpha < r < \beta$, we can choose $r$ such that $\beta \in (r,r^*)$. 
For this choice \cite[thm.~16]{Mi91} applies and yields
$$
\emptyset = \big((\dic_\beta\cap\Aa^m) \setminus \dic_{r}\big)\cap\dic_\alpha 
	= \big((\dic_\beta\cap\Aa^m) \setminus \dic_{r}\big)\cap(\dic_\alpha\cap\Aa^m).
$$
Assume by contradiction that $\dic_\alpha\cap\Aa^m=\dic_\beta\cap\Aa^m$. Then the previous considerations imply $\dic_\alpha\cap\Aa^m\subseteq \dic_r\cap\Aa^m$. On the other hand, 
$$
\scf(m,\Omega_\alpha)=m+1>m=\scf(m,\Omega_r)
$$
holds by Lemma~\ref{lem:Complexity}, a contradiction.
\end{proof}

The latter statement is the curcial ingredient to prove that the following map is isometric. 

\begin{theo}
\label{theo-Homeo_Farey-Subshifts}
There is a unique surjective isometry $\Phi:(\ol{[0,1]}_F,d_F) \to (\ol{\QS},d_{\Inv})$ such that $\Phi(\alpha)=\Omega_\alpha$ for $\alpha\in[0,1]$. In particular, $d_F(x,y)=d_\Inv(\Phi(x),\Phi(y))$ for all $x,y\in\ol{[0,1]}_F$.
\end{theo}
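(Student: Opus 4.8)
The plan is to first show that the assignment $\varphi:[0,1]_F\to\QS$, $\alpha\mapsto\Omega_\alpha$, is a surjective isometry, and then to obtain $\Phi$ as the unique continuous extension of $\varphi$ to the completions. Thus the statement splits into a substantive combinatorial step and a soft metric-space step.

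For the isometry step, surjectivity onto $\QS$ is the very definition of $\QS$, so the heart is the identity $d_\Inv(\Omega_\alpha,\Omega_\beta)=d_F(\alpha,\beta)$ for all $\alpha,\beta\in[0,1]$, which I would prove by combining Proposition~\ref{prop-TopHau} with Proposition~\ref{prop-FarDic}. For $\alpha=\beta$ both sides vanish, so assume $\alpha\neq\beta$. If $\alpha,\beta\in(0,1)$, then Proposition~\ref{prop-TopHau} expresses $d_\Inv(\Omega_\alpha,\Omega_\beta)$ as $\min\{\frac1{m+1}:m\in\N_0,\ \dic_\alpha\cap\Aa^m=\dic_\beta\cap\Aa^m\}$, and for $m\in\N$ Proposition~\ref{prop-FarDic} rewrites the condition $\dic_\alpha\cap\Aa^m=\dic_\beta\cap\Aa^m$ as the existence of an $r\in F_m$ with $\alpha,\beta\in(r,r^*)$ — precisely the condition in the definition of $d_F$. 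The only discrepancy is the extra index $m=0$ on the dynamical side, which always contributes the value $1$; but since $0\in F_1$ and $(0,0^*)=(0,1)$ contains $\alpha$ and $\beta$, the set defining $d_F(\alpha,\beta)$ is nonempty with minimum $\leq\frac12$, so the $m=0$ term is never minimal and the two minima coincide. The remaining cases involve an endpoint $0$ or $1$, where $d_F\equiv 1$ by definition; here I would simply inspect length-one dictionaries, using $\omega_0\equiv 0$, $\omega_1\equiv 1$, and the fact that every other mechanical word $\omega_\gamma$ with $\gamma\in(0,1)$ contains both letters $0$ and $1$, to conclude $\dic_\alpha\cap\Aa^1\neq\dic_\beta\cap\Aa^1$, so that the minimum in Proposition~\ref{prop-TopHau} is attained at $m=0$ and $d_\Inv(\Omega_\alpha,\Omega_\beta)=1$ as well. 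In particular $\varphi$ is injective.

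Once $\varphi$ is a surjective isometry, the rest is standard. Since $(\Inv,d_\Inv)$ is compact, hence complete, the closure $\ol{\QS}$ is complete and contains $\QS$ densely, so $\ol{\QS}$ is a completion of $\QS$; and $\ol{[0,1]}_F$ is by definition the completion of $[0,1]_F$. An isometry is uniformly continuous, so $\varphi$ extends uniquely to a continuous map $\Phi:\ol{[0,1]}_F\to\ol{\QS}$, and taking limits in $d_\Inv(\Phi(x_n),\Phi(y_n))=d_F(x_n,y_n)$ along sequences from $[0,1]$ shows $\Phi$ is an isometry; its image is complete, hence closed in $\ol{\QS}$, and contains the dense subset $\QS=\varphi([0,1])$, so $\Phi$ is onto. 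Uniqueness is immediate: any isometry $\Psi$ with $\Psi|_{[0,1]}=\varphi$ agrees with $\Phi$ on the dense set $[0,1]$, hence everywhere.

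I expect the only genuinely delicate point to be the bookkeeping in the isometry step — reconciling the index range $\N_0$ in Proposition~\ref{prop-TopHau} with the range $\N$ implicit in the definition of $d_F$, and separately handling the two endpoints where both metrics are prescribed directly rather than through a dictionary comparison. Everything else is either a direct citation of Proposition~\ref{prop-FarDic} (which already absorbs the hard Mignosi-type combinatorics via Lemma~\ref{lem:DifferentDictionaries}) or routine metric-completion formalism.
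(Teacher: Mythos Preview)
Your proposal is correct and follows essentially the same route as the paper: both reduce to proving that $\alpha\mapsto\Omega_\alpha$ is an isometry on $[0,1]_F$ by combining Proposition~\ref{prop-TopHau} with Proposition~\ref{prop-FarDic}, handle the endpoints $\{0,1\}$ separately via the length-one dictionaries, and then pass to the completion. You are somewhat more explicit than the paper about the $m=0$ bookkeeping and the completion step, but the substance is identical.
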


\begin{proof}
First note that it suffices to show that the map $[0,1]_F\to\QS,\alpha\mapsto\Omega_\alpha,$ is bijective and isometric. By definition of $\QS$, this map is surjective. Hence, it is sufficient to prove $d_F(\alpha,\beta)=d_\Inv(\Omega_\alpha,\Omega_\beta)$ for all different $\alpha,\beta\in [0,1]_F$. 

If $\alpha\in\{0,1\}$ or $\beta\in\{0,1\}$, then $d_F(\alpha,\beta)=1$. Since  $\alpha\neq \beta$, $\dic_1\cap\Aa^1=\{1\}$ and $\dic_0\cap\Aa^1=\{0\}$, we conclude $d_\Inv(\Omega_\alpha,\Omega_\beta)=1=d_F(\alpha,\beta)$.

Next, let $\alpha,\beta\in(0,1)$ and $m\in\N$ be such that $\alpha<\beta$ and $d_F(\alpha,\beta)=\frac{1}{m+1}$. Hence, there is an $r\in F_m$ and an $s\in F_{m+1}$ such that $\alpha,\beta\in (r,r^*)$ and $\alpha\leq s\leq \beta$. Thus, Proposition~\ref{prop-FarDic} leads to 
$$
\dic_\alpha\cap\Aa^m = \dic_\beta\cap\Aa^m 
	\qquad\textrm{and}\qquad
\dic_\alpha\cap\Aa^{m+1} \neq \dic_\beta\cap\Aa^{m+1}.
$$
Hence, $d_\Inv(\Omega_\alpha,\Omega_\beta)=\frac{1}{m+1}$ is concluded from Proposition~\ref{prop-TopHau}, proving $d_\Inv(\Omega_\alpha,\Omega_\beta)=d_F(\alpha,\beta)$.
\end{proof}

\begin{coro}
\label{cor-FareyCpt}
The Farey space $\ol{[0,1]}_F$ is compact.
\end{coro}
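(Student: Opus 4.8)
The plan is to deduce compactness of $\ol{[0,1]}_F$ directly from Theorem~\ref{theo-Homeo_Farey-Subshifts}. That theorem provides a surjective isometry $\Phi:(\ol{[0,1]}_F,d_F)\to(\ol{\QS},d_\Inv)$. Since $\QS\subseteq\Inv$ and $(\Inv,d_\Inv)$ is a compact metric space (as recalled in Section~\ref{ssect-subshifts}, citing \cite{Bec16,BBdN18}), the closure $\ol{\QS}$ is a closed subset of the compact space $(\Inv,d_\Inv)$, hence itself compact. An isometry is in particular a homeomorphism onto its image, and the continuous image — here the isometric preimage under a surjective isometry — of a compact space is compact; equivalently, $\ol{[0,1]}_F$ is isometric to the compact space $\ol{\QS}$, and compactness is an isometric (indeed topological) invariant. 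Therefore $\ol{[0,1]}_F$ is compact.

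Alternatively, if one prefers an argument internal to the Farey space, one could argue directly that $\ol{[0,1]}_F$ is complete (by construction, as a completion) and totally bounded: for each $m\in\N$, Proposition~\ref{prop-dF_metric}~\eqref{enu:dF_Ball_rat} and \eqref{enu:dF_Ball_irrat} show that the finitely many balls $B_F(r,\tfrac{1}{m+1})=\{r\}$ for $r\in F_m$ together with the balls $B_F(\alpha,\tfrac1m)=(r,r^*)$ for the finitely many Farey intervals $(r,r^*)$ with $r\in F_m$ cover $[0,1]$ with finitely many sets of $d_F$-radius at most $\tfrac1m$; passing to the completion preserves total boundedness. A complete, totally bounded metric space is compact. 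I would, however, use the first (shorter) route and simply invoke Theorem~\ref{theo-Homeo_Farey-Subshifts} together with the compactness of $(\Inv,d_\Inv)$.

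The only point requiring care — and the main (minor) obstacle — is making sure $\ol{\QS}$ is genuinely compact: this needs the closedness of $\ol{\QS}$ inside the ambient compact space $(\Inv,d_\Inv)$, which is automatic since $\ol{\QS}$ is by definition the closure of $\QS$ in $(\Inv,d_\Inv)$, and a closed subset of a compact metric space is compact. Everything else is a formality once Theorem~\ref{theo-Homeo_Farey-Subshifts} is in place.
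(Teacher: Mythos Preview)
Your primary argument is correct and matches the paper's proof essentially verbatim: the paper also invokes compactness of $\Inv$, deduces compactness of the closed subset $\ol{\QS}$, and then transfers this to $\ol{[0,1]}_F$ via the homeomorphism $\Phi$ from Theorem~\ref{theo-Homeo_Farey-Subshifts}. Your alternative total-boundedness argument is a nice self-contained addition but is not in the paper.
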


\begin{proof}
The space $\Inv$ is compact \cite{Bec16,BBdN18}. Therefore $\ol{\QS} \subseteq \Inv$ is compact in the induced topology. The statement then follows from Theorem~\ref{theo-Homeo_Farey-Subshifts}, as $\Phi$ is a homeomorphism. 
\end{proof}

\begin{coro}
\label{cor:complexity}
Let $\Phi:(\ol{[0,1]}_F,d_F) \to (\ol{\QS},d_{\Inv})$ be the surjective isometry from Theorem~\ref{theo-Homeo_Farey-Subshifts}.
\begin{enumerate}[(a)]
\item If $x\in\ol{[0,1]}_F\setminus([0,1]\cap\QM)$, then $\scf(n,\Phi(x))=n+1$ for all $n\in\N$.
\item If $x\in [0,1]\cap\QM$, then there exists an $n_x\in\N$ such that $\scf(n,\Phi(x))=n$ for all $n\geq n_x$.
\end{enumerate}
\end{coro}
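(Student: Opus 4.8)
The plan is, in both parts, to compute $\scf(n,\Phi(x))$ by comparing $\Phi(x)$ with a genuine mechanical word $\Omega_\alpha$ and then quoting Lemma~\ref{lem:Complexity}; the bridge is the isometry $\Phi$ of Theorem~\ref{theo-Homeo_Farey-Subshifts} together with Proposition~\ref{prop-TopHau}, which lets one read off equalities of length-$n$ dictionaries, hence of $\scf(n,\cdot)$, from closeness in $d_\Inv$ or from convergence of subshifts.

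For part (a) I would distinguish the cases given by Corollary~\ref{cor:Completion_FareySpace}. If $x=\alpha\in[0,1]\setminus\QM$ is irrational, then $\Phi(x)=\Omega_\alpha$ and $\scf(n,\Phi(x))=n+1$ for all $n$ is immediate from Lemma~\ref{lem:Complexity}(a). If instead $x=r_\pm$ for some $r\in\QM\cap[0,1]$, say $x=r_+$ (the case $r_-$ being symmetric), I would fix a sequence of \emph{irrational} numbers $\alpha_k\in(0,1)$ with $\alpha_k\searrow r$. One first checks that $(\alpha_k)_k$ is $d_F$-Cauchy with $\lim_F\alpha_k=r_+$: for every $m$ all sufficiently large $\alpha_k$ lie in one common open $m$-Farey interval -- the interval immediately to the right of $r$ if $r\in F_m$, the interval containing $r$ otherwise -- so $d_F(\alpha_k,\alpha_\ell)\le\frac{1}{m+1}$ for $k,\ell$ large; since moreover $\lim_E\alpha_k=r\in\QM$ and $\alpha_k>r$, Proposition~\ref{prop-convergence_in_farey_topo} and the remark after Corollary~\ref{cor:Completion_FareySpace} identify the $d_F$-limit as $r_+$. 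As $\Phi$ is an isometry, and hence continuous, $\Phi(r_+)=\lim_\Inv\Omega_{\alpha_k}$; so Proposition~\ref{prop-TopHau} gives, for each fixed $n$, that $\dic(\Phi(r_+))\cap\Aa^n=\dic_{\alpha_k}\cap\Aa^n$ for all large $k$, and taking cardinalities yields $\scf(n,\Phi(r_+))=\scf(n,\Omega_{\alpha_k})=n+1$ by Lemma~\ref{lem:Complexity}(a). As $n$ was arbitrary, part (a) follows. (Alternatively one may approximate $r_+$ by the rationals $r+\frac{1}{k}$: the reduced denominator of $r+\frac{1}{k}=\frac{pk+q}{qk}$ is at least $k/q$, as any common factor of numerator and denominator divides $q^2$, so $r+\frac{1}{k}\notin F_n$ for $k$ large, and Lemma~\ref{lem:Complexity}(a) again gives complexity $n+1$.)

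For part (b), let $x=\frac{p}{q}\in[0,1]\cap\QM$ with $p,q$ coprime. Since $\Phi$ restricts to $\alpha\mapsto\Omega_\alpha$ on $[0,1]$, we have $\Phi(x)=\Omega_{p/q}$, and Lemma~\ref{lem:Complexity}(b) gives $\scf(n,\Omega_{p/q})=q$ for every $n\ge q$. Putting $n_x:=q$, we obtain $\scf(n,\Phi(x))=n_x$ for all $n\ge n_x$; equivalently, the complexity function $n\mapsto\scf(n,\Phi(x))$ is eventually constant, its stable value being the denominator of $x$.

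The only step carrying real content is the verification, in part (a), that a sequence of irrationals decreasing to a rational $r$ is $d_F$-Cauchy with $d_F$-limit $r_+$ -- i.e.\ that it is eventually trapped in a single $m$-Farey interval for every $m$; this is exactly where the finer structure of the Farey topology, through Proposition~\ref{prop-convergence_in_farey_topo}, enters. Everything else is bookkeeping: transport through the isometry $\Phi$ and the dictionary description of $d_\Inv$ in Proposition~\ref{prop-TopHau}, followed by Lemma~\ref{lem:Complexity}.
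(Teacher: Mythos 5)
Your proof is correct and follows essentially the same route as the paper: Lemma~\ref{lem:Complexity} handles every $x\in[0,1]$ via $\Phi(\alpha)=\Omega_\alpha$, and for $x=r_\pm$ one approximates by irrationals in $d_F$ and uses the continuity of $n\mapsto\scf(n,\cdot)$ supplied by Proposition~\ref{prop-TopHau} through the isometry $\Phi$ --- you merely spell out the existence of an irrational $d_F$-approximating sequence, which the paper asserts without detail. One remark: in part~(b) you prove $\scf(n,\Phi(x))=n_x$ for $n\geq n_x$ with $n_x=q$, whereas the corollary as printed reads $\scf(n,\Phi(x))=n$, which cannot hold for a $q$-periodic word whose complexity is bounded by $q$; this is evidently a typo in the statement and your reading is the intended one.
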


\begin{proof}
If $x\in[0,1]$, then the statement follows from Lemma~\ref{lem:Complexity}. Let $x\in\ol{[0,1]}_F\setminus[0,1]$ and $n\in\N$. Then there is a sequence $(x_k)_{k\in\N}\subseteq[0,1]\setminus\QM$ such that $\lim_F x_k=x$. Since $x_k$ is irrational, we have $\scf(n,\Phi(x_k))=n+1$. Then Proposition~\ref{prop-TopHau} implies that the complexity function is continuous in $\Inv$, namely $\scf(n,\Phi(x))=\lim_{k\to\infty}\scf(n,\Phi(x_k))=n+1$.
\end{proof}


\subsection{The graph representation}
\label{ssect-FareyGraphRep}
In this section, we isometrically represent the Farey space as the boundary of a tree which is closely related to the Farey tree.

A tuple $\Gg = (\Vv, \Ee)$ is called a {\em directed graph} where $\Vv$ is a countable set and $\Ee\subseteq \Vv\times\Vv$. 
We call $\Vv$ the {\em vertex set} and $\Ee$ the  {\em edge set} of $\Gg$. 
Then $u\in\Vv$ {\em is connected to} $v\in\Vv$ if $(u,v)\in\Ee$. 
A {\em finite path} is a tuple $(v_0, \dots v_n) \in \Vv^{n+1}$ if $(v_{i-1},v_i)\in\Ee$ for $1\leq i\leq n$. 
If $v_0=v_n$ and $n\in\N$, we call the path a {\em cycle}. 
Finite paths define a (possibly strict, i.e. irreflexive) partial order relation on the vertex set $\Vv$, namely $u\to v$ holds for $u,v\in \Vv$, if there is a finite path $(v_0,\ldots,v_n)$ with $v_0=u$ and $v_n=v$. 
A directed graph $\Gg$ is called a {\em rooted directed tree}, if $\Gg$ has no cycles and there is a vertex $o\in\Vv$ such that for each $v\in\Vv\setminus\{o\}$, we have $o\to v$. 
Note that this vertex $o$ is unique (as the graph admits no cycles) and it is called the {\em root}.

We recursively define a rooted directed tree $\Tt_F=(\Vv_F,\Ee_F)$, where each vertex gets a label assigned, namely, we have a map $L:\Vv_F\to\Ll$ where
$$
\Ll:= \big\{ \{r\} \,:\, r\in\QM\cap[0,1]\big\} \cup \bigcup_{m\in\N} \big\{ (r,r^*) \,:\, r\in F_m\big\} \cup \{[0,1]\}.
$$
For convenience, the building blocks for this recursive definition are plotted in Figure~\ref{fig-Building_FareyTree}.
The root $o\in \Vv_F$ gets the label $[0,1]\in\Ll$ and it belongs to level $0$. The root $o$ is connected to exactly three vertices $u_1,u_2,u_3$ in level $1$, i.e. $(o,u_i)\in\Ee_F$, such that $L(u_1):=\{0\}$, $L(u_2):=(0,1)$ and $L(u_3):=\{1\}$. We continue recursively defining the tree $\Tt_F$. Let $u\in \Vv_F$ be in level $k\geq 1$.
\begin{itemize}
\item If $L(u)=(r,r^*)\in\Ll$ for some $r\in F_m$, set $s:=r\oplus r^*\in F_n$ where $n>m$ by Lemma~\ref{lem-FareyNgb}.
Then $u$ is connected to exactly three vertices $v_1,v_2,v_3$ in level $k+1$, i.e. $(u,v_i)\in\Ee_F$, such that $L(v_1):=(r,s)\in\Ll$, $L(v_2):=\{s\}\in\Ll$ and $L(v_3):=(s,r^*)\in\Ll$.
\item If $L(u)=\{r\}\in\Ll$ for some $r\in F_m$, then there is exactly one edge $(u,v)\in\Ee_F$ where $v\in \Vv_F$ is in level $k+1$ and $L(v):=\{r\}$.
\end{itemize}

\begin{figure}[htb]
 \centering
	\includegraphics[scale=0.8]{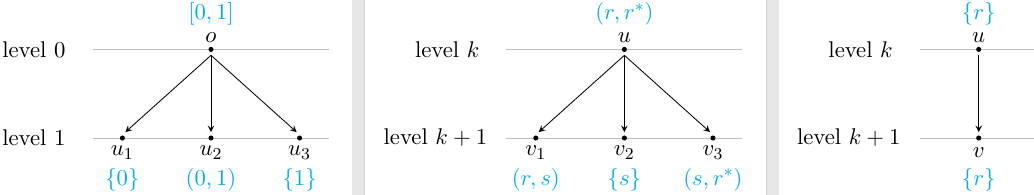}
 \caption{
 The building blocks of the interval-Farey tree are plotted.
 }
 \label{fig-Building_FareyTree}
\end{figure}

\begin{defini}
The previously defined rooted directed tree $\Tt_F=(\Vv_F,\Ee_F)$ is called {\em interval-Farey tree} and $L:\Vv_F\to\Ll$ is called the {\em label map}.
\end{defini}

Note that the label map is surjective but not injective, see Figure~\ref{fig-FareyTree_FareySpace}. 
We emphasize that the interval-Farey tree is closely related to the classical Farey tree. 
The Farey tree was used for the Kohmoto model in \cite{OK85}. 
There the authors construct a renormalization group formulation using the Farey address defined by the Farey tree and not the continued fraction expansion.

\begin{figure}[hb]
 \centering
 	\includegraphics[scale=0.8]{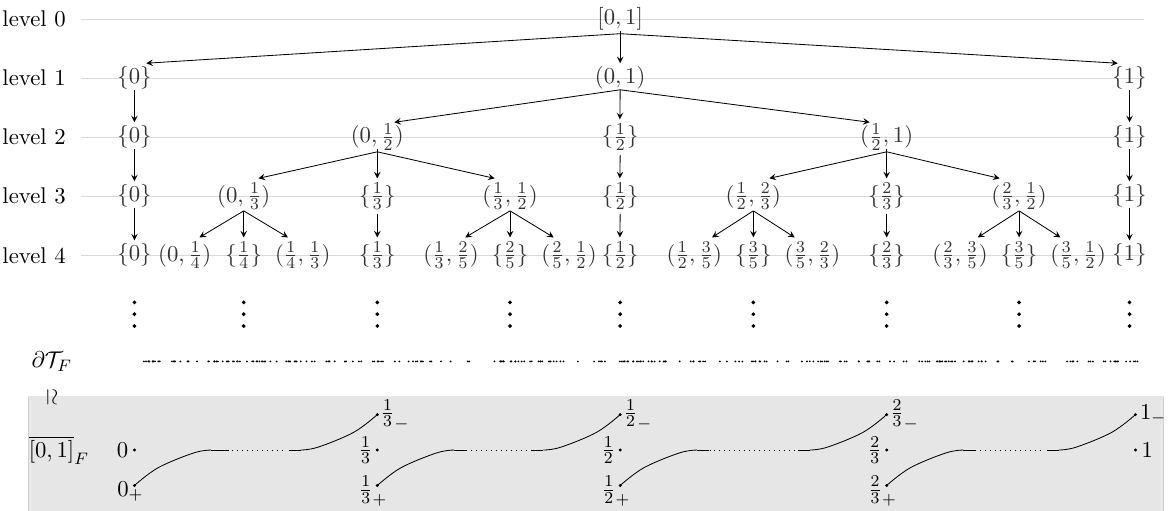}
 \caption{The first levels of the Farey tree $\Tt_F$, its boundary $\partial \Tt_F$ and (in the gray box) the Farey space $\ol{[0,1]}_F$ are sketched.
 Each rational point is splits up into three points (except $0$ and $1$), confer Proposition~\ref{prop-convergence_in_farey_topo}.
 }
 \label{fig-FareyTree_FareySpace}
\end{figure}

All vertices in level $k$ have graph combinatorial distance $k$ to the root $o$, namely these are the vertices in the sphere of radius $k$ around $o$. For convenience, we define the $k$-sphere by
$$
\Vv_F(k) := \{v\in \Vv_F \,:\, v \textrm{ is in level } k\},\qquad k\in\N_0.
$$

\begin{lemma}
\label{lem-BasicFareyTree}
Let $\Tt_F= (\Vv_F,\Ee_F)$ be the interval-Farey tree.
\begin{enumerate}[(a)]
\item If $u\to v$ for $u,v\in\Vv_F$, then $L(v)\subseteq L(u)$.
\item We have $[0,1]=\bigsqcup_{v\in \Vv_F(k)} L(v)$ for all $k \in \N_0$.
\end{enumerate}
\end{lemma}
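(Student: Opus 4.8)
The plan is to prove both statements by induction on the level, using the recursive definition of the interval-Farey tree directly. For part (a), the key observation is that it suffices to treat the case where $(u,v)\in\Ee_F$ is a single edge; the general case $u\to v$ then follows by composing a finite path $u=v_0,\ldots,v_n=v$ and using transitivity of $\subseteq$. So first I would go through the two bullet points in the recursive construction. If $L(u)=\{r\}$, then its unique child $v$ also has $L(v)=\{r\}\subseteq\{r\}=L(u)$, trivially. If $L(u)=(r,r^*)$ for some $r\in F_m$, then setting $s:=r\oplus r^*$, the three children have labels $(r,s)$, $\{s\}$ and $(s,r^*)$. By Lemma~\ref{lem-FareyNgb}, $s=r\oplus r^*\in(r,r^*)$, so each of $(r,s)$, $\{s\}$, $(s,r^*)$ is a subset of $(r,r^*)=L(u)$. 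The case of the root $o$ with $L(o)=[0,1]$ and children labelled $\{0\}$, $(0,1)$, $\{1\}$ is likewise immediate since all of these sit inside $[0,1]$. This exhausts all edge types and proves the single-edge case, hence (a).

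For part (b), I would again induct on $k$. The base case $k=0$ is $\Vv_F(0)=\{o\}$ and $L(o)=[0,1]$, so $[0,1]=L(o)$ is the (trivial) disjoint union. For the inductive step, assume $[0,1]=\bigsqcup_{v\in\Vv_F(k)}L(v)$. Every vertex in $\Vv_F(k+1)$ is a child of a unique vertex in $\Vv_F(k)$ (uniqueness of the parent follows because the graph is a rooted tree with no cycles), so it suffices to show that for each fixed $u\in\Vv_F(k)$ the labels of the children of $u$ partition $L(u)$; summing over $u\in\Vv_F(k)$ and invoking the inductive hypothesis then gives the claim for level $k+1$. If $L(u)=\{r\}$, the unique child has label $\{r\}=L(u)$, which is a one-block partition. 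If $L(u)=(r,r^*)$ with $s=r\oplus r^*\in(r,r^*)$, then $(r,r^*)=(r,s)\sqcup\{s\}\sqcup(s,r^*)$, exactly the labels of the three children, and these three sets are pairwise disjoint. The root case is $[0,1]=\{0\}\sqcup(0,1)\sqcup\{1\}$, matching the three children of $o$. This completes the induction.

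The only mild subtlety — and the step I would be most careful about — is the bookkeeping in (b): one must check that distinct vertices in $\Vv_F(k+1)$ genuinely receive disjoint labels. Within the children of a single parent this is the explicit partition above; across two different parents $u\neq u'$ in $\Vv_F(k)$ it follows because $L(u)\cap L(u')=\emptyset$ by the inductive hypothesis and each child's label is contained in its parent's label by part (a). Since the label map is not injective (for instance every $\{r\}$ label reappears at every deeper level), one should phrase (b) as an equality of sets together with the disjointness of the specific decomposition indexed by $\Vv_F(k)$, rather than as a statement about the image of $L$; with that understood the argument is routine.
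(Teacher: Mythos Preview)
Your proposal is correct and follows exactly the approach the paper indicates: the paper's own proof is the single sentence ``Both assertions follow inductively from the definition of the interval-Farey tree,'' and you have simply written out that induction in full detail. There is nothing to add.
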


\begin{proof}
Both assertion follow inductively from the definition of the interval-Farey tree.
\end{proof}

Recall that we seek to isometrically identify the Farey space with the boundary of the interval-Farey tree.
Here the boundary of the the tree $\Tt_F$ is defined by
$$
\partial\Tt_F
	:= \big\{ \gamma:\N_0\to \Vv_F \,:\, \big(\gamma(i-1),\gamma(i)\big)\in\Ee_F \textrm{ and } \gamma(0)=o \big\}.
$$
Then the boundary $\partial\Tt_F$ is a compact space if equipped with the product topology. 
We turn this compact space into a metric space following the Michon-correspondence \cite{Mi85, PB09}. 
A map $\kappa: \Vv_F \to (0,\infty)$, such that $v\to w$ implies $\kappa(w) < \kappa (v)$ and $\lim_{n \to \infty}\kappa(\gamma(n))=0$ for all $\gamma \in \partial\Tt$, is called a {\em weight} on $\Tt_F$.
Moreover, given $\gamma, \eta \in \partial \Tt_F$ different, we define $\gamma \land \eta:=\gamma(n)$, where $n\in\N_0$ is unique number such that $\gamma(n)=\eta(n)$ and $\gamma(n+1)\neq \eta(n+1)$. 
Note that either $L(\gamma(n))=[0,1]$ or $L(\gamma(n))=(r,r^*)$ must hold for some $r\in F_m$ with $m\in\N$. 
This is because if $L(\gamma(n))=\{r\}$, then $\gamma(m)=\eta(m)$ follows for all $m\geq n$ from the construction of the interval-Farey tree.

For the interval-Farey tree, define $\kappa_F: \Vv_F \to [0,\infty)$ by
$$
\kappa_F(v) 
	:= \begin{cases}
		1,\qquad &v=o,\\
		\frac{1}{m+1},\qquad &L(v)=(r,r^*), r\in F_m \textrm{ and } r\oplus r^*\in F_{m+1}\setminus F_m,\\
		\frac{1}{2}\kappa_F(u),\qquad &L(v)=\{r\} \textrm{ and } u\in \Vv_F \textrm{ satisfies } (u,v)\in\Ee_F.
	\end{cases}
$$
Note that $m\in\N$ in the previous definition is chosen to be the largest integer so that $r$ and $r^*$ are Farey neighbors in $F_m$ but not in $F_{m+1}$, compare with Lemma~\ref{lem-FareyNgb}~(b).

\begin{exam}
If the label of $v\in\Vv_F$ is $(\frac{1}{3},\frac{1}{2})$, then $\frac{1}{3}\oplus\frac{1}{2}=\frac{2}{5}$ and so $\kappa_F(v)=\frac{1}{5}$.
If $v\in\Vv_F$ has label $(0,\frac{1}{7})$, then $0\oplus\frac{1}{7}=\frac{1}{8}$ and so $\kappa_F(v)=\frac{1}{8}$.
Also if $(u,v,w)$ is the path labelled by $L(u) =(0,1), L(v) = L(w)= \{\frac{1}{2}\}$, then we have $\kappa_F(u)= \frac{1}{2},\kappa_F(v) = \frac{1}{4}$ and $\kappa_F(w) = \frac{1}{8}$.
\end{exam}

\begin{proposi}
\label{prop-BoundaryMetric}
The map $\kappa_F: \Vv_F \to [0,\infty)$ is a weight on the interval-Farey tree $\Tt_F$ such that $\kappa_F(v)\leq \frac{1}{k+1}$ for all $v\in\Vv_F(k)$ and $k\in\N$. The map $d_{\Tt_F}: \partial\Tt_F \times \partial\Tt_F \to [0,\infty)$ defined by
$$
d_{\Tt_F}(\gamma, \eta) 
	:= \begin{cases}
		0,\qquad &\gamma=\eta,\\
		\kappa_F(\gamma \land \eta),\qquad &\gamma\neq \eta\\
	\end{cases}
$$ 
is an ultra metric inducing the product topology and $(\partial\Tt_F,d_{\Tt_F})$ is a compact metric space.
\end{proposi}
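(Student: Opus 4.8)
The plan is to establish, in order: (i) that $\kappa_F$ is well defined, is a weight, and satisfies $\kappa_F(v)\le\frac1{k+1}$ for $v\in\Vv_F(k)$; (ii) that $d_{\Tt_F}$ is an ultrametric; (iii) that the metric topology coincides with the product topology; and (iv) conclude compactness. Only step~(i) uses the combinatorics of the interval-Farey tree; steps (ii)--(iv) are the standard Michon-correspondence bookkeeping (cf.\ \cite{Mi85,PB09}). First I would note that $\kappa_F$ is well defined: a vertex $v$ with $L(v)=\{r\}$ has a unique predecessor $u$ with $(u,v)\in\Ee_F$ lying one level lower, so the third clause refers to an already-defined value, and one makes this precise by recursion on the level.

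For (i), I would prove by simultaneous induction on $k\in\N_0$ the two statements: \emph{(a)} if $(v,w)\in\Ee_F$ with $v\in\Vv_F(k)$ then $\kappa_F(w)<\kappa_F(v)$; \emph{(b)} $\kappa_F(v)\le\frac1{k+1}$ for all $v\in\Vv_F(k)$. The cases $k=0,1$ are a direct computation on the root and its three children (for the child with label $(0,1)$ one has $0\oplus1=\tfrac12\in F_2\setminus F_1$, hence weight $\tfrac12$; the children with labels $\{0\},\{1\}$ get weight $\tfrac12\kappa_F(o)=\tfrac12$). For the step, let $u\in\Vv_F(k)$ and $(u,v)\in\Ee_F$. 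If $L(u)=\{r\}$, then $L(v)=\{r\}$, so $\kappa_F(v)=\tfrac12\kappa_F(u)<\kappa_F(u)\le\tfrac1{k+1}$ and $\tfrac1{2(k+1)}\le\tfrac1{k+2}$. If $L(u)=(r,r^*)$ with $\kappa_F(u)=\tfrac1{m+1}$ (so $s:=r\oplus r^*\in F_{m+1}\setminus F_m$ by the definition of the weight, and $m\ge k$ by (b)), then the child with label $\{s\}$ has weight $\tfrac1{2(m+1)}<\tfrac1{m+1}$, which is $\le\tfrac1{k+2}$; and the children with labels $(r,s)$ and $(s,r^*)$ have weight $\tfrac1{m'+1}$, where by Lemma~\ref{lem-FareyNgb} $m'$ is the largest integer with $(r,s)\cap F_{m'}=\emptyset$ (resp.\ $(s,r^*)\cap F_{m'}=\emptyset$); since $r\in F_m\subseteq F_{m+1}$ and $s\in F_{m+1}$ are distinct and consecutive in $F_{m+1}$, one gets $m'\ge m+1$, hence their weight is $\le\tfrac1{m+2}<\tfrac1{m+1}=\kappa_F(u)$ and $\le\tfrac1{k+2}$. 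This proves (a) for edges, hence --- composing strict inequalities along finite paths --- for the relation $\to$, and proves (b). Since $\gamma(k)\in\Vv_F(k)$ for every $\gamma\in\partial\Tt_F$, (b) gives $\kappa_F(\gamma(k))\le\tfrac1{k+1}\to0$, so with (a) we conclude $\kappa_F$ is a weight.

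For (ii), symmetry and positive-definiteness of $d_{\Tt_F}$ are immediate from $\kappa_F>0$ and $\gamma\land\eta=\eta\land\gamma$. For the strong triangle inequality, given pairwise distinct $\gamma,\eta,\zeta$, write $n_{xy}$ for the level of $x\land y$; since $\gamma$ and $\zeta$ both agree with $\eta$ along the first $\min\{n_{\gamma\eta},n_{\eta\zeta}\}$ vertices, they agree with each other there, so $n_{\gamma\zeta}\ge\min\{n_{\gamma\eta},n_{\eta\zeta}\}$; combining with the strict monotonicity of $j\mapsto\kappa_F(\gamma(j))$ (from (a)) yields $d_{\Tt_F}(\gamma,\zeta)=\kappa_F(\gamma\land\zeta)\le\kappa_F\big(\gamma(\min\{n_{\gamma\eta},n_{\eta\zeta}\})\big)=\max\{d_{\Tt_F}(\gamma,\eta),d_{\Tt_F}(\eta,\zeta)\}$. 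For (iii), I would use that the cylinders $[v]:=\{\gamma\in\partial\Tt_F:\gamma(k)=v\}$, $v\in\Vv_F(k)$, form a basis of the product topology. If $k\ge1$ and $\gamma\in[v]$, then $d_{\Tt_F}(\gamma,\eta)<\kappa_F(\gamma(k-1))$ forces $\gamma\land\eta$ to lie at level $\ge k$, hence $\eta\in[v]$, so $[v]$ is metric-open ($[o]=\partial\Tt_F$ trivially so); conversely, given $\gamma$ and $R>0$, choose $N$ with $\kappa_F(\gamma(N))<R$ (possible since $\kappa_F(\gamma(N))\to0$), whence $[\gamma(N)]\subseteq B_{d_{\Tt_F}}(\gamma,R)$, so every metric ball is product-open. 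Finally, for (iv): $\partial\Tt_F$ with the product topology is compact (recalled just above the proposition; it is the inverse limit of the finite sets of length-$k$ paths from $o$), so by (iii) $(\partial\Tt_F,d_{\Tt_F})$ is a compact metric space. The hard part will be the bookkeeping in step (i) --- in particular pinning down via Lemma~\ref{lem-FareyNgb} that splitting a vertex labelled $(r,r^*)$ of weight-index $m$ produces interval-children of weight-index strictly larger than $m$; this strict increase is the engine driving both the weight property and the compatibility of the two topologies.
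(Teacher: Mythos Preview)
Your proposal is correct and follows essentially the same route as the paper: the core inductive argument in your step~(i) --- showing simultaneously that $\kappa_F$ strictly decreases along edges and that $\kappa_F(v)\le\tfrac1{k+1}$ on level $k$, using Lemma~\ref{lem-FareyNgb} to control the weight-index of the interval children --- is exactly the paper's proof. The only difference is cosmetic: you spell out the ultrametric inequality and the equivalence of topologies in steps~(ii)--(iii), whereas the paper delegates both to \cite[Proof of prop.~6]{PB09}.
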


\begin{proof}

We prove inductively over the level $k\in\N$ that $\kappa_F(v)\leq \frac{1}{k+1}$ for all $v\in\Vv_F(k)$ and if $(u,v)\in\Ee_F$, then $\kappa_F(u)<\kappa_F(v)$.
For the induction base $k\in\{0,1\}$, observe that $\kappa_F(o)=1$ and $\kappa_F(u)=\frac{1}{2}$ for $u\in\Vv_F(1)$.
Now suppose the statement is true for $k\in\N$. 
Let $v\in \Vv_F(k+1)$. 
By construction of the interval-Farey tree, there is a unique $u\in \Vv_F(k)$ such that $(u,v)\in\Ee_F$.
By the induction hypothesis, we have $\kappa(u)\leq \frac{1}{k+1}$.
If $L(u)=\{r\}$, then $L(v)=\{r\}$ and 
$$
\kappa_F(v)=\frac{1}{2}\kappa_F(u)< \min\left\{\frac{1}{k+2},\kappa_F(u)\right\}.
$$  
The case $L(u)=(r,r^*)$ for $r\in F_m$ and $s:=r\oplus r^*\in F_{m+1}\setminus F_m$ still needs to be treated. 
Since then $\frac{1}{m+1}=\kappa_F(u)\leq \frac{1}{k+1}$, we have $m\geq k$.
Furthermore, $u$ is connected to exactly three vertices $v_1,v_2,v_3\in\Vv_F(k+1)$ such that $L(v_1):=(r,s)\in\Ll$, $L(v_2):=\{s\}\in\Ll$ and $L(v_3):=(s,r^*)\in\Ll$. 
Thus, $v\in\{v_1,v_2,v_3\}$.
By construction, we have $\kappa_F(v_j)\leq\frac{1}{m+2}<\kappa_F(u)$ for $j\in\{1,3\}$ and $\kappa_F(v_2)=\frac{1}{2}\kappa_F(u)<\min\{\frac{1}{m+2},\kappa_F(u)\}$. 
Thus, the induction step is finished by using $m\geq k$.


The previous considerations imply that $\kappa_F$ is a weight. 
Hence, $d_{\Tt_F}$ defines an ultra metric on $\partial\Tt_F$ inducing the product topology, see \cite[Proof of prop.~6]{PB09}.
Since $\partial\Tt_F$ is compact in the product topology, we conclude $(\partial\Tt_F,d_{\Tt_F})$ is a compact metric space.

\end{proof}

\begin{lemma}
\label{lem-ExMap_Tree-Farey}
Let $\gamma\in\partial\Tt_F$. Then there is a unique $x(\gamma)\in\ol{[0,1]}_F$ such that for any sequence $(\alpha_k)_{k\in\N_0}\subseteq[0,1]_F$ with $\alpha_k\in L(\gamma(k))$ for all $k\in\N_0$, we have $\lim_F\alpha_k =x(\gamma)$.
\end{lemma}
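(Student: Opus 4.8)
The plan is to extract $x(\gamma)$ as a limit in the complete space $\ol{[0,1]}_F$ and then check it is independent of the chosen sequence. First I would fix, for each $k\in\N_0$, an arbitrary point $\beta_k\in L(\gamma(k))\cap[0,1]$ (this is nonempty since every label is either a singleton rational or a nonempty open interval with rational endpoints). By Lemma~\ref{lem-BasicFareyTree}~(a), since $\gamma(k)\to\gamma(\ell)$ for $k\le \ell$, we have $L(\gamma(\ell))\subseteq L(\gamma(k))$, so $\beta_\ell\in L(\gamma(k))$ for all $\ell\ge k$. Now I claim $(\beta_k)_{k\in\N_0}$ is Cauchy with respect to $d_F$: given $k$, if $L(\gamma(k))=(r,r^*)$ with $r\in F_m$, then all $\beta_\ell$ with $\ell\ge k$ lie in $(r,r^*)$, hence $d_F(\beta_\ell,\beta_{\ell'})\le \frac{1}{m+1}$ for $\ell,\ell'\ge k$ by the definition of $d_F$; and Proposition~\ref{prop-BoundaryMetric} gives $\kappa_F(\gamma(k))\le\frac{1}{k+1}$, which forces the relevant $m$ to grow with $k$ (more precisely $\frac{1}{m+1}=\kappa_F(\gamma(k))\le\frac1{k+1}$ once we are past the root), so the $d_F$-diameter of the tails tends to $0$. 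If instead $L(\gamma(k))=\{r\}$ for some $k$, then by the construction of the tree $L(\gamma(\ell))=\{r\}$ for all $\ell\ge k$, so the sequence is eventually constant. In both cases $(\beta_k)$ is $d_F$-Cauchy, so it has a limit $x(\gamma)\in\ol{[0,1]}_F$ by completeness.

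Next I would show this limit does not depend on the choices. Let $(\alpha_k)_{k\in\N_0}$ be any sequence with $\alpha_k\in L(\gamma(k))$. By the same nesting argument, for each fixed $k$ both $\alpha_\ell$ and $\beta_\ell$ lie in $L(\gamma(k))$ for all $\ell\ge k$, so $d_F(\alpha_\ell,\beta_\ell)\le \kappa_F(\gamma(k))\le\frac1{k+1}$ in the interval case (and $=0$ eventually in the singleton case). Hence $d_F(\alpha_\ell,\beta_\ell)\to 0$, so $(\alpha_k)$ is Cauchy and converges to the same limit $x(\gamma)$. This proves $\lim_F\alpha_k=x(\gamma)$ for every admissible sequence, and uniqueness of $x(\gamma)$ is immediate since a limit in a metric space is unique.

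The one point that needs a little care — and which I expect to be the main obstacle — is the quantitative estimate that the $d_F$-diameter of $L(\gamma(k))$ goes to zero as $k\to\infty$. This is exactly what the weight $\kappa_F$ encodes: by Lemma~\ref{lem-BasicFareyTree}~(a) and the definition of $d_F$, for two points $\alpha,\beta\in L(\gamma(k))=(r,r^*)$ one has $d_F(\alpha,\beta)\le\frac{1}{m+1}=\kappa_F(\gamma(k))$ (one must note $r\oplus r^*\in F_{m+1}$ by Lemma~\ref{lem-FareyNgb}~(b) so that the label really sits at ``level parameter'' $m$, matching the definition of $\kappa_F$), and Proposition~\ref{prop-BoundaryMetric} then gives $\kappa_F(\gamma(k))\le\frac1{k+1}\to 0$. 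For the singleton labels the bound is trivially $0$ from some $k$ on. Assembling these observations yields both the Cauchy property and the independence of the limit, completing the proof.
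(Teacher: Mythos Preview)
Your proposal is correct and follows essentially the same approach as the paper: both use the nesting of labels (Lemma~\ref{lem-BasicFareyTree}~(a)) together with the weight bound $\kappa_F(\gamma(k))\le\frac{1}{k+1}$ from Proposition~\ref{prop-BoundaryMetric} to show any admissible sequence is $d_F$-Cauchy, and then compare two such sequences to obtain independence of the limit. Your treatment is slightly more explicit about the singleton-label case and about matching $m$ to the definition of $\kappa_F$, but the argument is the same.
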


\begin{proof}
Let $\gamma\in\partial\Tt_F$ and $\alpha_k\in L(\gamma(k))$  for all $k\in\N_0$. For $k\geq 1$, we either have $L(\gamma(k))=\{r_k\}$ or $L(\gamma(k))=(r_k,r_k^*)$ with $r_k\in F_{m_k}$ and $m_k\in\N$. 
Then $\gamma(k)\in\Vv_F(k)$ implies $\kappa\big(\gamma(k)\big)\leq \frac{1}{k+1}$ by Proposition~\ref{prop-BoundaryMetric}. 
Hence, $m_k\geq k$ follows by definition of $\kappa$. 
Moreover, Lemma~\ref{lem-BasicFareyTree} states $L(\gamma(m))\subseteq L(\gamma(k))$ if $m\geq k$. Thus, $d_F(\alpha_k,\alpha_m)\leq \frac{1}{k_0+1}$ follows for all $k_0\in\N$ and $k,m\geq k_0$. Thus, $(\alpha_k)_{k\in\N_0}$ is a Cauchy sequence with respect to $d_F$ and so the limit $\lim_F \alpha_k=:x(\gamma)$ exists. 

Let $(\beta_k)_{k\in\N_0}$ be another sequence satisfying $\beta_k\in L(\gamma(k))$  for all $k\in\N_0$. Then the previous considerations imply $\alpha_k,\beta_k\in L(\gamma(k))\subseteq L(\gamma(k_0))$ and so $d_F(\alpha_k,\beta_k)\leq \frac{1}{k_0+1}$ for all $k_0\in\N$ and $k\geq k_0$. Hence, $\lim_F\beta_k=\lim_F\alpha_k=x(\gamma)$ is concluded.
\end{proof}

The previous lemma allows us to define a map $\Psi:\partial\Tt_F\to\ol{[0,1]}_F, \, \gamma\mapsto x(\gamma)$.

\begin{theo}
\label{theo-TreeIsoHomeo}
The map $\Psi: \partial\Tt_F \to \ol{[0,1]}_F, \, \gamma\mapsto x(\gamma),$ is a surjective isometry. 
In particular, the metric spaces $(\partial\Tt_F, d_{\Tt_F})$ and $(\ol{[0,1]}_F,d_F)$ are isometric isomorphic.
\end{theo}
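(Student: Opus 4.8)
The plan is to show that $\Psi:\partial\Tt_F\to\ol{[0,1]}_F$ is an isometry and surjective; since $\ol{[0,1]}_F$ is complete and $\partial\Tt_F$ is compact (Proposition~\ref{prop-BoundaryMetric}), an isometry onto a dense set would already suffice, but in fact I expect surjectivity to come out directly. First I would prove the isometry property. Fix $\gamma\neq\eta\in\partial\Tt_F$ and let $v=\gamma\wedge\eta=\gamma(n)=\eta(n)$ be the branching vertex, so that $\gamma(n+1)\neq\eta(n+1)$. By the remark preceding Proposition~\ref{prop-BoundaryMetric}, either $L(v)=[0,1]$ or $L(v)=(r,r^*)$ for some $r\in F_m$; in the first case $d_{\Tt_F}(\gamma,\eta)=\kappa_F(o)=1$, and in the second $d_{\Tt_F}(\gamma,\eta)=\kappa_F(v)=\frac{1}{m+1}$ where $m$ is the largest integer with $r,r^*$ consecutive in $F_m$. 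Now pick sequences $(\alpha_k)$, $(\beta_k)$ with $\alpha_k\in L(\gamma(k))$, $\beta_k\in L(\eta(k))$; by Lemma~\ref{lem-ExMap_Tree-Farey} they converge in $d_F$ to $\Psi(\gamma)$ and $\Psi(\eta)$. For $k\geq n+1$ the children of $v$ have disjoint labels (Lemma~\ref{lem-BasicFareyTree}~(b) applied within $L(v)$), so $\alpha_k$ and $\beta_k$ lie in the two distinct sub-intervals $(r,s)$ and $(s,r^*)$ determined by $s=r\oplus r^*$ (or one of them equals $\{s\}$), hence $\alpha_k\leq s\leq\beta_k$ with $s\in F_{m+1}$, giving $d_F(\alpha_k,\beta_k)>\frac{1}{m+2}$ by Proposition~\ref{prop-dF_metric}~\eqref{enu:dF_lowerBound}, and therefore $d_F(\alpha_k,\beta_k)\geq\frac{1}{m+1}$ by the range formula \eqref{eq:FareyMetric_range}. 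On the other hand $\alpha_k,\beta_k\in L(v)=(r,r^*)$ for all $k\geq n$, so $d_F(\alpha_k,\beta_k)\leq\frac{1}{m+1}$. Passing to the limit (the Farey metric is continuous, being an ultrametric, and $d_F(\alpha_k,\beta_k)$ is eventually constant equal to $\frac{1}{m+1}$) yields $d_F(\Psi(\gamma),\Psi(\eta))=\frac{1}{m+1}=d_{\Tt_F}(\gamma,\eta)$; the case $L(v)=[0,1]$ is handled the same way with value $1$.

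Next I would establish surjectivity. Given $x\in\ol{[0,1]}_F$, choose a Cauchy sequence $(\alpha_j)\subseteq[0,1]_F$ with $\lim_F\alpha_j=x$. By Lemma~\ref{lem-BasicFareyTree}~(b), for each level $k$ there is a unique vertex $w_k\in\Vv_F(k)$ whose label contains $\alpha_j$ for all large $j$ — more carefully, using Proposition~\ref{prop-convergence_in_farey_topo} one sees the $\alpha_j$ are eventually trapped in a single label at each level, and these labels are nested, so they define a ray $\gamma\in\partial\Tt_F$ with $\gamma(k)=w_k$. Then $\alpha_j\in L(\gamma(k))$ for $j$ large, so by Lemma~\ref{lem-ExMap_Tree-Farey} and uniqueness of limits $\Psi(\gamma)=\lim_F\alpha_j=x$. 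Alternatively, and perhaps more cleanly, I would note that $\Psi$ has image containing all of $[0,1]$ (the ray through the labels $\{\alpha\}$ or nested intervals shrinking to an irrational $\alpha$), hence a dense subset of $\ol{[0,1]}_F$; since $\partial\Tt_F$ is compact and $\Psi$ is an isometry, the image is compact, hence closed, hence all of $\ol{[0,1]}_F$.

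The last sentence of the statement — that the two spaces are isometrically isomorphic — is then immediate, since an isometric surjection between metric spaces is automatically injective and thus a bijective isometry. I expect the main obstacle to be the bookkeeping in surjectivity: carefully matching a given Cauchy sequence in $(\,[0,1],d_F)$ to a well-defined ray in the tree, which requires the trichotomy of Proposition~\ref{prop-convergence_in_farey_topo} to guarantee the sequence eventually stabilizes inside one label at each level (in particular, that it does not oscillate across a rational splitting point). The isometry computation itself is essentially forced by the two inequalities $d_F(\alpha_k,\beta_k)\leq\frac{1}{m+1}$ and $d_F(\alpha_k,\beta_k)\geq\frac{1}{m+1}$ together with the explicit definition of $\kappa_F$, which was precisely engineered (via Lemma~\ref{lem-FareyNgb}~(b)) to make the weight of a vertex labelled $(r,r^*)$ equal to the Farey distance between points sitting on opposite sides of its mediant child.
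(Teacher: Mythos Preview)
Your proposal is correct and follows essentially the same approach as the paper: the isometry argument via sandwiching $d_F(\alpha_k,\beta_k)$ between $\frac{1}{m+1}$ and $\frac{1}{m+1}$ using Proposition~\ref{prop-dF_metric} and the mediant $s=r\oplus r^*$ is exactly the paper's computation, and your ``alternative'' surjectivity argument (image contains $[0,1]$, hence dense, hence equal to $\ol{[0,1]}_F$ by compactness of $\partial\Tt_F$) is a mild streamlining of the paper's density-plus-closedness argument.
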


\begin{proof}
We first prove that $\Psi$ is isometric.
Let $\gamma, \eta \in \partial\Tt_F$ be different and $v:=\gamma\land\eta=\gamma(n)\in\Vv_F$ for some $n\in\N_0$. 
Choose $(\alpha_k)_{k\in\N_0},(\beta_k)_{k\in\N_0}\subseteq[0,1]_F$ such that $\alpha_k\in L\big(\gamma(k)\big)$ and $\beta_k\in L\big(\eta(k)\big)$ for $k\in\N_0$. 

Suppose first $n=0$, i.e. $\gamma(n)=o$. Then $d_{\Tt_F}(\gamma,\eta)=\kappa_F(v)=1$ holds and $\gamma(1)\neq\eta(1)$. 
Lemma~\ref{lem-BasicFareyTree}~(b) implies $L\big(\gamma(1)\big)\cap L\big(\eta(1)\big)=\emptyset$. 
Thus, $\alpha_k\in\{0,1\}$ or $\beta_k\in\{0,1\}$ follows for $k\geq 1$. 
Hence, $d_F(\alpha_k,\beta_k)=1=d_{\Tt_F}(\gamma,\eta)$ is concluded.

Suppose now $n\geq 1$. 
Recall that $L(v)=(r,r^*)$ follows for some $r\in F_m$ and $m\in\N$ since $\gamma(n)\neq o$ and $v=\gamma \wedge \eta$ with $\gamma\neq \eta$.
By Lemma~\ref{lem-FareyNgb}~(b), there is no loss in assuming that $m\in\N$ is chosen such that $r\oplus r^*\in F_{m+1}\setminus F_m$. 
From this $d_{\Tt_F}(\gamma,\eta)=\frac{1}{m+1}$ is concluded. 
Since $L\big(\gamma(n+1)\big)\cap L\big(\eta(n+1)\big)=\emptyset$, Lemma~\ref{lem-BasicFareyTree}~(a) yields that either $\alpha_k<\beta_k$ for all $k\geq n+1$ or $\alpha_k>\beta_k$ for all $k\geq n+1$. 
Without loss of generality assume $\alpha_k<\beta_k$ for all $k\geq n+1$. 
Then the choice of $m$ and Lemma~\ref{lem-FareyNgb}~(b) imply $r < \alpha_k\leq r\oplus r^* \leq \beta_k < r^*$ for $k\geq n+1$. 
Hence, $d_F(\alpha_k,\beta_k)=\frac{1}{m+1}$ follows for $k\geq n+1$. 
Since $\lim_F\alpha_k=x(\gamma)$ and $\lim_F\beta_k=x(\eta)$ holds by Lemma~\ref{lem-ExMap_Tree-Farey}, we conclude
$$
d_F(x(\gamma),x(\eta)) = \lim_{k\to\infty}d_F(\alpha_k,\beta_k)=\frac{1}{m+1} = d_{\Tt_F}(\gamma,\eta).
$$
Next, we show that $\Psi$ is surjective. 
Note that by the previous considerations $\Psi$ is continuous and $(\partial\Tt_F,d_{\Tt_F})$ is compact by Proposition~\ref{prop-BoundaryMetric}. 
Thus, the image $\Psi\big(\partial\Tt_F\big) \subseteq \ol{[0,1]}_F$ is compact in $(\ol{[0,1]}_F,d_F)$ and so in particular it is closed. 
Therefore, it suffices to show that the image is dense in $\ol{[0,1]}_F$. 
Let $y\in \ol{[0,1]}_F$ and $\varepsilon>0$. 
Then there is an $\alpha\in[0,1]_F$ such that $d_F(y,\alpha)<\varepsilon$. 
Let $k_0\in\N$ be such that $\frac{1}{k_0+1}<\varepsilon$ and choose $\gamma\in\partial\Tt_F$ such that $\alpha\in L(\gamma(k_0))$, which exists by Lemma~\ref{lem-BasicFareyTree}~(b). 
Let $\beta_k\in L(\gamma(k))$ for all $k\in\N_0$. 
Then $\lim_F\beta_k=x(\gamma)$. 
Furthermore, for $k\geq k_0$, we have $\alpha,\beta_k\in L(\gamma(k_0))$ by Lemma~\ref{lem-BasicFareyTree}~(a). 
Since $\kappa\big(\gamma(k_0)\big)\leq \frac{1}{k_0+1}$ by Proposition~\ref{prop-BoundaryMetric}, we conclude $d_F(\alpha,\beta_k)\leq \frac{1}{k_0+1}<\varepsilon$ for all $k\geq k_0$. 
Hence, $d_F(\alpha,x(\gamma))=\lim_{k\to\infty}d_F(\alpha,\beta_k)<\varepsilon$. 
Since $d_F$ is an ultra metric (Proposition~\ref{prop-dF_metric}), the estimate $d_F(y,x(\gamma))<\varepsilon$ follows. 
This proves that the image $\Psi\big(\partial\Tt_F\big) \subseteq \ol{[0,1]}_F$ is dense and closed and so it coincides with $\ol{[0,1]}_F$.
\end{proof}

%
%


\section{Defects}
\label{sect-defects}

In \cite{BIT91}, spectral defects were discovered in the Kohmoto butterfly at all rational points described via localized impurities of the corresponding media. 
In this work, we can identify these localized impurities explicitly on the level of the configurations.
They are described in terms of the corresponding Farey neighbors through the previously developed theory.

Thanks to Theorem~\ref{theo-Homeo_Farey-Subshifts}, we have a unique isometric surjetive map $\Phi:\ol{[0,1]}_F\to\ol{\QS}$ such that $\Phi(\alpha)=\Omega_\alpha=\ol{\Orb(\omega_\alpha)}$ for all $\alpha\in[0,1]$, where $\omega_\alpha\in\Aa^\ZM$ is the mechanical word associated with $\alpha$. 
Recall that $\Aa=\{0,1\}$. 
In this section, we will show that the subshift $\Phi(x)$ for $x\in\ol{[0,1]}_F\setminus [0,1]$ is also topological transitive, namely that there is a $\omega_x\in\Aa^\ZM$ such that $\Phi(x)=\ol{\Orb(\omega_x)}$. 

Due to Proposition~\ref{prop-convergence_in_farey_topo}, there is a unique rational number $r=\frac{p}{q}\in[0,1]$ (with $p$ and $q$ coprime) for each $x\in\ol{[0,1]}_F\setminus [0,1]$ such that $x=r_-$ or $x=r_+$.
We prove in this section that the configuration $\omega_x$ is then uniquely determined by $r$ and one of its $q$-Farey neighbors $r_*$ or $r^*$.

These Farey neighbors can be described using the concept of continued fraction expansion. 
A short reminder is presented here and more details can be found in \cite{S81}. 
Every rational number $r\in \QM$ can be represented by a finite string of integers, namely
$$
r = a_0 + \frac{1}{a_1 + \frac{1}{a_2 + \ldots \underset{+\frac{1}{a_n}}{\,}}} =: [0,a_0,a_1,a_2,\ldots,a_n]
$$
with $a_0 \in \ZM , a_i\in\N\setminus\{0\}$.
Note that the additional ``$0$'' at the beginning of the string is artificial following the convention used in \cite{BBL24,BaBeBiRaTh24}, which will be usefull in Sections~\ref{sect-SpectralApplic} and \ref{sect-SpectralMapContOpt}.
This representation is not unique since the strings $[0,a_0, \dots a_{n}, 1]$ and $[0,a_0, \dots a_{n}+1]$ define the same rational number. 



Specifically, given a rational point $r \in (0,1)\cap\QM $, there are precisely two associated tuples
$$
\ca(r) := [0,a_0 , \dots, a_n + 1] \quad \text{and} \quad \cb(r) := [0,a_0, \dots, a_n, 1],
$$   
such that their \textit{evaluation} equals $r$, that is
$$
r = a_0 + \frac{1}{a_1 + \frac{1}{a_2 + \ldots \underset{+\frac{1}{a_n+1}}{\,}}}  = a_0 + \frac{1}{a_1 + \frac{1}{a_2 + \ldots \underset{+\frac{1}{a_n +\frac{1}{1}}}{\,}}} . 
$$
As a consequence, we always have $a_0 = 0$ for both $\ca(r)$ and $\cb(r)$.
We call $\ca(r)$ the associated {\em short-} and $\cb(r)$ the {\em long continued fraction} of the rational $r\in(0,1)\cap\QM$.

The points $r=0$ or $r=1$ are different in the sense that they only have one $1$-Farey neighbor, i.e. $1_*=0$ and $0^*=1$. 
Therefore these points usually need a special treatment.
We associate the string $\ca = [0,0]$ with $r=0$ and the string $\ca = [0,0,1]$ with $r=1$.

The Farey neighbors are precisely described through the continued fraction.

\begin{lemma}[{\cite[lem.~2]{S81}}]
\label{lem-FarNghCF}
Let $r=\frac{p}{q}\in (0,1)\cap \QM$ be such that $p$ and $q$ are coprime. If $\cb(r)=[0,a_0, a_1,\ldots,a_n,1]$, then the $q$-Farey-neighbors $r_*,r^*\in F_q$ of $r$ satisfy the following.
\begin{itemize}
\item[(a)] If $n\in\N$ is odd, then $r_*$ is the evaluation of $[0,a_0,a_1,\ldots, a_{n-1}]$ and $r^*$ is the evaluation of  $[0,a_0, a_1,\ldots, a_n]$.
\item[(b)] If $n\in\N$ is even, then $r_*$  is the evaluation $[0,a_0,a_1,\ldots, a_n]$ and $r^*$ is the evaluation of $[0,a_0, a_1,\ldots, a_{n-1}]$.
\end{itemize}
\end{lemma}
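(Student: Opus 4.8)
The plan is to prove the statement by induction on the length $n+1$ of the long continued fraction $\cb(r)=[0,a_0,a_1,\ldots,a_n,1]$, using the recursive description of Farey neighbors provided by Lemma~\ref{lem-FareyNgb}. The organizing principle is that the mediant operation $\oplus$ on fractions corresponds precisely to the step of \emph{incrementing the last partial quotient} of a continued fraction, and that passing from $[0,a_0,\ldots,a_{k}]$ to $[0,a_0,\ldots,a_{k},1]=[0,a_0,\ldots,a_{k}+1]$ is exactly one such mediant step. Concretely, if $p_k/q_k$ denotes the $k$-th convergent of $r$ (in lowest terms), the standard recursions $p_k=a_kp_{k-1}+p_{k-2}$, $q_k=a_kq_{k-1}+q_{k-2}$ and the identity $p_kq_{k-1}-p_{k-1}q_k=(-1)^{k-1}$ are the workhorses. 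I would first record that two reduced fractions $\frac{p}{q}<\frac{p'}{q'}$ are Farey neighbors in $F_{\max\{q,q'\}}$ if and only if $p'q-pq'=1$; this is the criterion already used in the proof of Lemma~\ref{lem-FareyDist}, combined with Lemma~\ref{lem-FareyNgb}(b), which guarantees that the next fraction to appear between two Farey neighbors is their mediant.

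First I would set up the base case. The shortest nontrivial long continued fractions are $[0,0,a_1,1]$ (so $n=1$, odd) with $r=\frac{1}{a_1+1}$; here $[0,a_0,a_1,\ldots,a_{n-1}]=[0,0]=\frac{0}{1}$ and $[0,a_0,\ldots,a_n]=[0,0,a_1]=\frac{1}{a_1}$, and one checks directly that $\frac{0}{1}<\frac{1}{a_1+1}<\frac{1}{a_1}$ with $1\cdot 1-0\cdot(a_1+1)=1$ and $1\cdot(a_1+1)-a_1\cdot 1 = 1$, so these are indeed the $q$-Farey neighbors with $q=a_1+1$; the even base case is handled symmetrically (or reduces to the odd one via the duality $r\mapsto 1-r$). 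For the inductive step I would observe that $r=[0,a_0,\ldots,a_n,1]$ has convergents whose last two are $r^*$ and $r_*$ in \emph{some} order determined by the parity of $n$ — this is exactly the content of the claim — and that $r$ itself is the mediant $r_*\oplus r^*$ by Lemma~\ref{lem-FareyNgb}(a). The key reduction: writing $r' := [0,a_0,\ldots,a_{n-1}+1] = [0,a_0,\ldots,a_{n-1},1]$, the fraction $r$ is obtained from $r'$ by one further mediant step, so the Farey-neighbor pair of $r'$ (known by the induction hypothesis, with roles swapped because the parity of the last index has flipped) produces the Farey-neighbor pair of $r$ via $\{r_*,r^*\}$ consisting of $r'$ together with the appropriate one of $r'$'s own neighbors. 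Tracking which neighbor survives and on which side is precisely where the parity bookkeeping of (a) versus (b) enters, and it falls out of the sign $(-1)^{k-1}$ in the determinant identity: the neighbor lying \emph{below} $r$ is the convergent $p_k/q_k$ with $p_kq_{k-1}-p_{k-1}q_k$ of the correct sign, which alternates with $k$.

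I would also need to address the non-uniqueness of the continued fraction expansion, i.e. that the statement is phrased in terms of $\cb(r)=[0,a_0,\ldots,a_n,1]$ with the trailing $1$; this is a normalization that fixes the parity of the length and thereby makes the odd/even dichotomy well-defined, and I would note that the two convergents $[0,a_0,\ldots,a_{n-1}]$ and $[0,a_0,\ldots,a_n]$ have denominators strictly less than $q$, so that indeed $r_*,r^*\in F_q$ as claimed, with no smaller Farey level separating them from $r$ (again by Lemma~\ref{lem-FareyNgb}(b)). The main obstacle I anticipate is not any single hard estimate but rather the careful sign/parity accounting in the inductive step — making sure that ``the last partial quotient equals $1$'' is used consistently, that the swap of $r_*$ and $r^*$ between consecutive inductive levels is tracked correctly, and that the edge cases $r_*=\frac{0}{1}$ or $r^*=\frac{1}{1}$ (where one ``neighbor'' is an endpoint of $[0,1]$) are not mishandled. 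An alternative, essentially equivalent route that avoids some of this bookkeeping is to cite the classical theory of the Stern–Brocot tree directly: the path from the root to $r$ in the Stern–Brocot tree is encoded by the continued fraction of $r$, and the two ancestors bracketing $r$ at the moment it is inserted are exactly $r_*$ and $r^*$; this is essentially \cite[lem.~2]{S81}, and I would fall back on it if the hands-on induction becomes unwieldy. Since the paper already attributes the lemma to \cite{S81}, in practice the proof here can be kept brief, sketching the induction and deferring the classical parts to the reference.
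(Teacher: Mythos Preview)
The paper does not actually prove this lemma: it is stated with the attribution \cite[lem.~2]{S81} and no proof is given, the text moving directly to an illustrative example. So there is no ``paper's own proof'' to compare against; your plan is an original proof where the paper simply imports the result.

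That said, your approach is sound and is essentially the classical argument. One remark: the induction scaffolding you describe is more elaborate than necessary. Once you have the convergent recursions and the determinant identity $p_k q_{k-1}-p_{k-1}q_k=(-1)^{k-1}$, you can argue directly without induction on $n$: since $\cb(r)=[0,a_0,\ldots,a_n,1]$, the last partial quotient is $1$, so $r=\frac{p_{n+1}}{q_{n+1}}=\frac{p_n+p_{n-1}}{q_n+q_{n-1}}$ is literally the mediant of its last two convergents $\frac{p_{n-1}}{q_{n-1}}$ and $\frac{p_n}{q_n}$. The determinant identities $|p_{n+1}q_n-p_nq_{n+1}|=1$ and $|p_{n+1}q_{n-1}-p_{n-1}q_{n+1}|=1$ (the latter from the three-term recursion with $a_{n+1}=1$) show both convergents are Farey neighbors of $r$ in $F_q$, and the sign $(-1)^{n-1}$ fixes which lies above and which below. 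This bypasses the inductive bookkeeping you flag as the main obstacle. Your fallback to the Stern--Brocot picture is also perfectly adequate and is morally the same argument.
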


\begin{exam}
The short and long continued fraction of $r = \frac{1}{2}$ are given by $\ca(r) = [0,0,2]$ and $\cb(r) = [0,0,1,1]$. 
Thus, its Farey neighbors in $F_2$ are given by $r_* = 0$ (the evaluation of $[0,0]$) and $r^* = 1$ (the evaluation of $[0,0,1]$). 
For $r = \frac{2}{3}$, we have $\ca(r) = [0,0,1,2]$ and $\cb(r) = [0,0,1,1,1]$. 
Hence, its Farey neighbors in $F_3$ are given by $r_* = \frac{1}{2}$ (the evaluation of $[0,0,1,1]$) and $r^* = 1$ (the evaluation of $[0,0,1]$).
\end{exam}

In the following, we use the notation 
$$
\omega|_{ \{n \in \ZM: n < 0\} } \cdot \omega|_{ \{n \in \ZM: n \geq 0\} } = \omega \in \csp
$$ 
where the dot indicates the origin.
For $s=s(1)\ldots s(n) \in \Aa^n$ and $m \in \N$, we denote its $m$-fold repetition as $s^m \in \Aa^{nm}$ and define its {\em two-sided periodic repetition} $s^\infty$ by 
$$ 
s^\infty := \dots sss\cdot sss\dots.
$$
In particular, we have $s^\infty(0) = s(1)$. 
For $v \in \Aa^\ast := \bigcup_{n \in \N_0 } \Aa^n$, we  define $s^\infty v \cdot s^\infty$ by
$$
s^\infty v \cdot s^\infty := \dots sss v \cdot sss \dots ~. 
$$
We call $s^\infty v \cdot s^\infty$ a {\em finite defect of $s^\infty$ with impurity $v$}.

\begin{lemma}
\label{lem:HausdorffConv_Impurity}
For $u,v\in\Aa^\ast\setminus \{ \o\}$, let $\omega_m:=(u^m v)^\infty$. Then the limit $\lim_{m\to\infty}\Orb(\omega_m)$ exists in $(\Inv,d_\Inv)$ and equals to $\overline{\Orb(u^\infty v\cdot u^\infty)}$.
\end{lemma}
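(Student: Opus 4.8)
The plan is to use the characterization of convergence in $(\Inv,d_\Inv)$ from Proposition~\ref{prop-TopHau}: the sequence $\overline{\Orb(\omega_m)}$ converges to a subshift $\Omega$ if and only if for every $n\in\N$ the dictionaries eventually agree on words of length $n$, i.e. $\dic(\Orb(\omega_m))\cap\Aa^n = \dic(\Omega)\cap\Aa^n$ for all $m$ large enough. So I would fix $n\in\N$ and compare the length-$n$ subwords of $\omega_m = (u^m v)^\infty$ with those of $\zeta := u^\infty v\cdot u^\infty$, showing they coincide once $m$ is large relative to $n$, $|u|$ and $|v|$.

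First I would record that $\dic(\Orb(\omega_m)) = \dic(\omega_m)$ and $\dic(\overline{\Orb(\zeta)}) = \dic(\zeta)$, which is the remark following the definition of the dictionary. Next, the inclusion $\dic(\omega_m)\cap\Aa^n \subseteq \dic(\zeta)\cap\Aa^n$ for $m$ large: any length-$n$ window in $(u^m v)^\infty$ either lies entirely inside a block $u^m$ (hence inside $u^\infty$, which occurs in $\zeta$ provided $m|u|\geq n$, actually already for $m$ with $(m-1)|u| \ge n$ it sits in a pure-$u$ stretch away from $v$), or it straddles one of the junctions $\ldots u u v u u \ldots$; but in $\omega_m$ every occurrence of $v$ is flanked on both sides by arbitrarily long runs of $u$'s once $m$ is large, so such a window equals a window of $u^\infty v u^\infty = \zeta$ centered near its unique impurity $v$. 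Concretely, if $m|u| \geq n$, then every length-$n$ subword of $\omega_m$ that meets a copy of $v$ is already a subword of $\zeta$, and every length-$n$ subword avoiding all copies of $v$ is a subword of $u^\infty \subseteq \dic(\zeta)$. Conversely, for $\dic(\zeta)\cap\Aa^n \subseteq \dic(\omega_m)\cap\Aa^n$: a length-$n$ word of $\zeta$ either sits in $u^\infty$ — hence appears in the block $u^m$ of $\omega_m$ as soon as $m|u|\ge n$ — or it meets the single impurity $v$, in which case (again for $m|u|\geq n$) the surrounding run of $u$'s on each side in $\zeta$ has an exact counterpart in $\omega_m$ around any copy of $v$, so the word appears in $\omega_m$. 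Thus for $m \geq n$ (say) we get $\dic(\omega_m)\cap\Aa^n = \dic(\zeta)\cap\Aa^n$, and Proposition~\ref{prop-TopHau} gives $d_\Inv(\overline{\Orb(\omega_m)}, \overline{\Orb(\zeta)}) \leq \frac{1}{n+1} \to 0$.

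The main obstacle — really the only place needing care — is the bookkeeping at the junctions: one must verify that a length-$n$ window straddling the pattern $u v u$ (or $u v$, or $v u$) in $\zeta$ genuinely reappears verbatim in $\omega_m$, i.e. that no window is ``long enough to see two copies of $v$'' in $\omega_m$ while seeing only one in $\zeta$, and vice versa that no window in $\omega_m$ that sees two consecutive $v$'s (impossible once $m\geq 1$ since consecutive $v$'s are separated by $u^m$, forcing $n > m|u|$) is wrongly excluded. Choosing the threshold $m_0(n)$ so that $m_0|u| \geq n$ rules this out: any length-$n$ window sees at most one full copy of $v$ together with a $u$-run on each side, and such a configuration is by construction common to $\omega_m$ and $\zeta$. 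I expect the clean statement of the threshold to be ``for all $m$ with $m|u| \geq n$'', which suffices for the eventual equality of dictionaries and hence for convergence.
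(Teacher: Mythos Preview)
Your proposal is correct and follows exactly the approach the paper indicates: the paper's proof is a one-line remark that the equality ``follows by a straightforward computation using Proposition~\ref{prop-TopHau} and $\lim_{m\to\infty}|u^m|=\infty$'', and you have supplied precisely that computation. The only cosmetic point is that the exact threshold ensuring every length-$n$ subword of $u^\infty$ already occurs inside $u^m$ is $(m-1)|u|\geq n-1$ rather than $m|u|\geq n$, but this does not affect the argument since any $m$-dependent bound growing to infinity suffices.
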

\begin{proof}
Observe $\Orb(\omega_m)=\overline{\Orb(\omega_m)}\in\Inv$. Then the equality $\lim_{m\to\infty}\Orb(\omega_m)=\overline{\Orb(u^\infty v\cdot u^\infty)}$ follows by a straightforward computation using Proposition~\ref{prop-TopHau} and $\lim\limits_{m\to\infty}|u^m|=\infty$, where $|u^m|$ denotes the length of the word $u^m$.
\end{proof}

Given the finite continued fraction $ \c =  [0,a_0, a_1, \dots , a_n]$, we recursively define $s_k = s_k[\c] \in \Aa^\ast$  by 
$$
s_{-1} := 1, \quad  s_0 := 0, \quad s_1 := s_0^{a_1-1}s_{-1}, \quad s_k := s_{k-1}^{a_k}s_{k-2} ~ \text{ for }  2 \leq k \leq n.
$$
We omit the $c$ dependence of $s_k$ whenever it is clear from context.
Observe that the word $s_k[\c]$ depends only on the first $k$ digits of the continued fraction $\c$.
The following well-known facts can for instance be found in \cite[lem.~2.4]{BaBeBiRaTh24} and references therein.

\begin{lemma} \label{lem-RecWord}
Let $r=\frac{p}{q} \in (0,1) \cap \QM$ with $p$ and $q$ coprime and $n\in\N_0$ be such that $\ca(r)$ is an $n$-tuple.
Then the following assertions hold.
\begin{enumerate}[(a)]
\item We have $s_n[\ca(r)],s_{n+1}[\cb(r)] \in\Aa^q$.
\item The configuration $s_n[\ca(r)]^\infty$ is a shift of $s_{n+1}[\cb(r)]^\infty$. 
\item We have $\Omega_r = \Orb(s_n[\ca(r)]^\infty)= \Orb(s_{n+1}[\cb(r)]^\infty)$.
\end{enumerate}
\end{lemma}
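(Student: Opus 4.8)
The plan is to establish the three assertions by induction on the length of the continued fraction, exploiting the standard recursive structure of Sturmian (mechanical) words. First I would recall the basic length recursion: if $\ca(r)=[0,a_0,a_1,\dots,a_n]$ then the convergents $\frac{p_k}{q_k}$ of $r$ satisfy $q_k=a_kq_{k-1}+q_{k-2}$, and correspondingly $|s_k|=a_k|s_{k-1}|+|s_{k-2}|$ with $|s_{-1}|=|s_0|=1$ (after the initial adjustment $s_1=s_0^{a_1-1}s_{-1}$, which accounts for the artificial leading $a_0=0$). Tracking these lengths through the recursion $s_k:=s_{k-1}^{a_k}s_{k-2}$ gives $|s_n[\ca(r)]|=q$, which is part (a); the parallel computation for $\cb(r)=[0,a_0,\dots,a_n,1]$ has one extra step $s_{n+1}=s_n^{1}s_{n-1}=s_ns_{n-1}$, and since the evaluations of $\ca(r)$ and $\cb(r)$ agree, the associated denominators agree, so $|s_{n+1}[\cb(r)]|=q$ as well. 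This is where I would be most careful about the off-by-one bookkeeping caused by the $[0,0,\dots]$ convention, but it is purely arithmetic.

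For part (b), I would show that $s_{n+1}[\cb(r)] = s_n[\cb(r)]\,s_{n-1}[\cb(r)]$ and that $s_k[\cb(r)]=s_k[\ca(r)]$ for $k\le n-1$, while $s_n[\cb(r)]$ and $s_n[\ca(r)]$ differ only by their last two letters — this is the classical fact that the two "standard words" associated to a rational are conjugate (cyclic shifts) of one another; concretely, $s_{n-1}s_n$ and $s_ns_{n-1}$ differ by swapping the final two letters, hence $s_n^\infty$ and $(s_ns_{n-1})^\infty$ differ by a shift. Alternatively, and perhaps more cleanly, I would cite that $s_n[\ca(r)]$ and $s_{n+1}[\cb(r)]$ are both primitive words of length $q$ generating the same periodic mechanical word $\omega_r$ up to translation; the reference \cite[lem.~2.4]{BaBeBiRaTh24} already packages exactly this, so part (b) follows from the stated well-known facts together with the length count in (a).

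For part (c), recall from Section~\ref{ssect-subshifts} that $\Omega_r=\overline{\Orb(\omega_r)}$ where $\omega_r$ is the mechanical word of $r$, and that since $r$ is rational with denominator $q$, $\omega_r$ is $q$-periodic; hence $\overline{\Orb(\omega_r)}=\Orb(\omega_r)$ is a finite set of exactly $q$ points. It then suffices to identify $s_n[\ca(r)]^\infty$ with some shift of $\omega_r$. This is again the standard description of Sturmian words of rational slope: the standard word $s_n[\ca(r)]$ of length $q$ is precisely one period of $\omega_r$ (up to the choice of starting point), a fact recorded in \cite{Fog02} and in \cite[lem.~2.4]{BaBeBiRaTh24}. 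Combining this with part (b) yields $\Omega_r=\Orb(s_n[\ca(r)]^\infty)=\Orb(s_{n+1}[\cb(r)]^\infty)$. The only genuine obstacle is matching conventions — whether the standard word starts with $0$ or $1$, and how the leading $0$ in the continued fraction interacts with the definition $\omega_\alpha(n)=\chi_{[1-\alpha,1)}(n\alpha\bmod 1)$ — but since we only claim equality of the \emph{orbits} $\Orb(\cdot)$, any such discrepancy is a shift and therefore harmless. I would phrase the proof as: "(a) follows by a direct induction on $n$ using the denominator recursion; (b) and (c) are the content of \cite[lem.~2.4]{BaBeBiRaTh24} (see also \cite{Fog02}), combined with (a)," and include the one-line length induction explicitly.
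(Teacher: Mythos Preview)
Your plan for (a) and (c) matches the paper's: both are referred to \cite[sec.~2.2]{BaBeBiRaTh24} as well-known, and your length induction for (a) is the right supporting computation.

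For (b), however, your first argument contains a slip: you write that ``$s_n[\cb(r)]$ and $s_n[\ca(r)]$ differ only by their last two letters,'' but these two words do not even have the same length. With $\ca(r)=[0,a_0,\dots,a_n+1]$ and $\cb(r)=[0,a_0,\dots,a_n,1]$ one has $s_n[\ca(r)]=s_{n-1}^{a_n+1}s_{n-2}$ of length $q$, while $s_n[\cb(r)]=s_{n-1}^{a_n}s_{n-2}$ is strictly shorter. The classical ``swap the last two letters'' fact applies to $s_{n-1}s_{n-2}$ versus $s_{n-2}s_{n-1}$, not to the pair you named, so the conjugacy argument as stated does not go through without repair. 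Your fallback (cite the reference) would of course work.

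The paper instead proves (b) by a one-line direct computation that avoids any appeal to the standard-word conjugacy theory: since $s_k[\ca(r)]=s_k[\cb(r)]=:s_k$ for $k\le n-1$, one has $s_n[\ca(r)]=s_{n-1}^{a_n+1}s_{n-2}$ and $s_{n+1}[\cb(r)]=s_{n-1}^{a_n}s_{n-2}s_{n-1}$, and then
\[
s_n[\ca(r)]^\infty
= \big(s_{n-1}^{a_n+1}s_{n-2}\big)^\infty
= s_{n+1}[\cb(r)]^\infty\, s_{n-1}^{a_n}s_{n-2}\cdot s_{n-1}\, s_{n+1}[\cb(r)]^\infty
= \tra^{-|s_{n-1}|}\big(s_{n+1}[\cb(r)]^\infty\big).
\]
This is more elementary than invoking conjugacy of standard words and sidesteps the bookkeeping you flagged as the ``genuine obstacle.''
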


\begin{proof}
Statement (a) and (c) are well-known, see e.g. \cite[sec.~2.2]{BaBeBiRaTh24}. 
For (b), observe that $s_k:=s_{k}[\ca(r)]=s_{k}[\cb(r)]$ for $k\leq n-1$. 
Thus, $s_n[\ca(r)] = s_{n-1}^{a_n+1} s_{n-2}$ and 
$s_{n+1}[\cb(r)] = s_{n-1}^{a_n} s_{n-2}s_{n-1}$ hold using the recursive definition. Hence, 
\begin{align*}
s_n[\ca(r)]^\infty 
	= \big( s_{n-1}^{a_n+1} s_{n-2}\big)^\infty \cdot \big( s_{n-1}^{a_n+1} s_{n-2}\big)^\infty
	= &s_{n+1}[\cb(r)]^\infty s_{n-1}^{a_n} s_{n-2} \cdot s_{n-1} s_{n+1}[\cb(r)]^\infty \\
	= &\tra^{-m}\big( s_{n+1}[\cb(r)] ^\infty \big)
\end{align*}

follows where $m\in\N_0$ is the length of the word $s_{n-1}$.
\end{proof}


With this at hand, we can describe the image $\Phi(r_\pm)$ for $r\in[0,1]\cap\QM$ and $r_\pm=\lim_F r\pm \frac{1}{n}\in\ol{[0,1]}_F$ where $\Phi:\ol{[0,1]}_F\to\ol{\QS}$ is the surjective isometry defined in Theorem~\ref{theo-Homeo_Farey-Subshifts}.
We first treat the special cases $r=0$ and $r=1$ as a warmup in the following example.

\begin{exam}
\label{exam:Defects_0_&_1}
For technical reasons we excluded the case $r=0$ with continued fraction expansion $\ca(r)=[0,0]$ and $s=1$ with continued fraction expansion $\ca(s)=[0,0,1]$. 
We discuss them here.
First note that $\omega_r=0^\infty$ and $\omega_s=1^\infty$.
Furthermore, $r,s$ are $1$-Farey neighbors, namely $r^\ast=s$ and $s_\ast=r$.
Define 
\[
\omega_{0_+} := 0^\infty 1 \cdot 0^\infty
\qquad\textrm{and}\qquad
\omega_{1_-} := 1^\infty 0 \cdot 1^\infty\,.
\]
Let $r_m$ be the evaluation of $\c_m:=[0,0,m]$ and $s_m$ be the evaluation of $\d_m=[0,0,1,m]$. 
Then $0<r_m$ and $s_m<1$ for $m\in\N$, $\lim_F r_m=0_+$ and $\lim_F s_m=1_-$.
In addition, $s_1[\c_m]=0^{m-1}1$ and $s_2[\d_m]=1^m 0$ hold for $m\in\N$.
With this at hand, Lemma~\ref{lem:HausdorffConv_Impurity} leads to 
$$
\Phi(0_+)=\ol{\Orb(\omega_{0_+})}
\qquad\textrm{and}\qquad
\Phi(1_+)=\ol{\Orb(\omega_{1_+})}.
$$
\end{exam}

\begin{proposi}
\label{prop-Defects}
Let $r = \frac{p}{q} \in (0,1)\cap \QM$ with $p$ and $q$ are coprime and consider the $q$-Farey neighbors $r_*<r<r^*$. Then there are finite defects $\omega_{r_-}, \omega_{r_+} \in \csp$ of $\omega_r$ whose impurities depend only on $r_* \in F_q$  and $r^* \in F_q$ such that $\Omega_{r_\pm} := \Phi(r_\pm)= \ol{\Orb(\omega_{r_{\pm}})}$. 
In particular, the following assertions hold.
\begin{enumerate}[(a)]
\item The subshift $\Omega_{r_\pm}$ is topological transitive but not minimal. In addition, $\Omega_r\subsetneq \Omega_{r_\pm}$.
\item For $\ca(r)=[0,a_0,\ldots,a_n+1]$ and $\cb(r)=[0,a_0,\ldots,a_n,1]$, define
\begin{align*}
\omega_s &:= (s_n[0,a_0, \dots, a_n +1])^\infty  s_{n-1}[0,a_0, \dots, a_{n-1}] \cdot (s_n[0,a_0, \dots, a_n +1])^\infty,\\
\omega_l &:= (s_{n+1}[0,a_0, \dots, a_n ,1])^\infty  s_{n}[0,a_0, \dots, a_{n}] \cdot (s_{n+1}[0,a_0, \dots, a_n, 1])^\infty.
\end{align*}
Then we have $\omega_{r_+} = \begin{cases} \omega_s, & n \text{ even,} \\ \omega_l,  & n \text{ odd,}  \end{cases}$ and $\omega_{r_-} = \begin{cases} \omega_l, & n \text{ even,} \\ \omega_s,  & n \text{ odd.}  \end{cases}$
\end{enumerate}
\end{proposi}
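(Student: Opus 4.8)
The plan is to realise $\Phi(r_\pm)$ as a Hausdorff limit in $(\Inv,d_\Inv)$ of the subshifts $\Omega_{\alpha_m}$ of carefully chosen rational approximants $\alpha_m$ of $r$, and then to identify that limit explicitly through the combinatorics of continued fractions. Fix the two continued fractions $\ca(r)=[0,a_0,\dots,a_n+1]$ and $\cb(r)=[0,a_0,\dots,a_n,1]$ of $r$. For $m\in\N$ let $\alpha_m^{s}$ be the evaluation of $[0,a_0,\dots,a_n+1,m]$ and $\alpha_m^{l}$ the evaluation of $[0,a_0,\dots,a_n,1,m]$. Increasing the appended partial quotient $m$ moves the value monotonically towards $r$, and the iterated mediant description from Lemma~\ref{lem-FareyNgb} shows that $\alpha^{s}_{m+1}$ lies strictly between $r$ and $\alpha^{s}_m$, and similarly for $\alpha^{l}_m$ on the opposite side of $r$. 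Together with the nesting of the corresponding intervals, Proposition~\ref{prop-dF_metric} then shows that both sequences are $d_F$-Cauchy, and Proposition~\ref{prop-convergence_in_farey_topo} identifies their $d_F$-limits as $r_+$ and $r_-$ (one each), the side of approach being governed by the parity of $n$. Comparing with Lemma~\ref{lem-FarNghCF} -- which distributes the truncations $[0,a_0,\dots,a_{n-1}]$ and $[0,a_0,\dots,a_n]$ among $r_*$ and $r^*$ according to the parity of $n$ -- yields the assignment $\omega_{r_+}=\omega_s$, $\omega_{r_-}=\omega_l$ for $n$ even, and the reverse for $n$ odd.

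To compute the limits, recall $s_k=s_{k-1}^{a_k}s_{k-2}$ and that $s_k$ depends only on the first $k$ digits of the continued fraction. Appending $m$ therefore yields
\[
s_{n+1}[0,a_0,\dots,a_n+1,m]=\big(s_n[\ca(r)]\big)^{m}\,s_{n-1}[0,a_0,\dots,a_{n-1}],
\qquad
s_{n+2}[0,a_0,\dots,a_n,1,m]=\big(s_{n+1}[\cb(r)]\big)^{m}\,s_{n}[0,a_0,\dots,a_{n}].
\]
By Lemma~\ref{lem-RecWord}(c) this gives $\Omega_{\alpha_m^{s}}=\Orb\big((u^m v)^\infty\big)$ with $u=s_n[\ca(r)]$, $v=s_{n-1}[0,a_0,\dots,a_{n-1}]$, and $\Omega_{\alpha_m^{l}}=\Orb\big(((u')^{m}v')^\infty\big)$ with $u'=s_{n+1}[\cb(r)]$, $v'=s_n[0,a_0,\dots,a_n]$. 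Since $|u^m|\to\infty$ and $|(u')^m|\to\infty$, Lemma~\ref{lem:HausdorffConv_Impurity} yields $\lim_m\Omega_{\alpha_m^{s}}=\ol{\Orb(u^\infty v\cdot u^\infty)}=\ol{\Orb(\omega_s)}$ and $\lim_m\Omega_{\alpha_m^{l}}=\ol{\Orb(\omega_l)}$. As $\Phi$ is a surjective isometry, hence continuous, by Theorem~\ref{theo-Homeo_Farey-Subshifts}, and (say) $\alpha^{s}_m\to r_+$ in $d_F$, uniqueness of limits forces $\Phi(r_+)=\ol{\Orb(\omega_s)}$, and likewise $\Phi(r_-)=\ol{\Orb(\omega_l)}$; the parity-dependent relabelling from the first paragraph converts this into $\Omega_{r_\pm}=\Phi(r_\pm)=\ol{\Orb(\omega_{r_\pm})}$ with $\omega_{r_\pm}$ as in (b). Moreover $\omega_s,\omega_l$ are genuine finite defects of $\omega_r$: by Lemma~\ref{lem-RecWord}(c) applied to $r$ itself, $u^\infty$ and $(u')^\infty$ are shifts of $\omega_r$; and by Lemma~\ref{lem-RecWord}(c) applied to the rationals with continued fractions $[0,a_0,\dots,a_{n-1}]$ and $[0,a_0,\dots,a_n]$, together with Lemma~\ref{lem-FarNghCF}, the impurities $v$ and $v'$ are periods of $\omega_{r_*}$ and $\omega_{r^*}$, so they depend only on $r_*,r^*\in F_q$.

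It remains to prove (a). Topological transitivity of $\Omega_{r_\pm}=\ol{\Orb(\omega_{r_\pm})}$ holds by construction. Write $\omega_{r_\pm}=w^\infty \tilde v\cdot w^\infty$ with $\Orb(w^\infty)=\Omega_r$ (Lemma~\ref{lem-RecWord}(c)); then $\tra^k(\omega_{r_\pm})$ converges in $d$ to a shift of $w^\infty$ as $k\to+\infty$, because the impurity $\tilde v$ is pushed off to infinity, so $w^\infty\in\Omega_{r_\pm}$ and hence $\Omega_r=\ol{\Orb(w^\infty)}\subseteq\Omega_{r_\pm}$. For strictness and non-minimality, note $r_\pm\notin[0,1]\cap\QM$, so $\scf(j,\Omega_{r_\pm})=j+1$ for all $j\in\N$ by Corollary~\ref{cor:complexity}(a), while $\scf(j,\Omega_r)=q$ for $j\geq q$ by Lemma~\ref{lem:Complexity}(b). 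Hence $\dic(\Omega_r)\cap\Aa^j\subsetneq\dic(\Omega_{r_\pm})\cap\Aa^j$ for $j\geq q$, so $\Omega_r\subsetneq\Omega_{r_\pm}$, and since $\Omega_r$ is a non-empty closed $\tra$-invariant proper subset of $\Omega_{r_\pm}$, the latter is not minimal.

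The main obstacle is the bookkeeping in the first paragraph: pinning down the exact side (above or below $r$) from which $\alpha_m^{s}$ and $\alpha_m^{l}$ approach $r$, upgrading Euclidean convergence to $d_F$-convergence to the correct one of $r_+,r_-$ via the nested-interval structure and Proposition~\ref{prop-convergence_in_farey_topo}, and aligning all of this with the parity of $n$ through Lemma~\ref{lem-FarNghCF} so that the case distinction in (b) comes out exactly as stated. Once the approximating sequences are fixed, the remaining steps follow mechanically from Lemmas~\ref{lem-RecWord}, \ref{lem:HausdorffConv_Impurity} and \ref{lem-FarNghCF}.
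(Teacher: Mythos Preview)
Your proof is correct and follows essentially the same route as the paper: choose rational approximants by appending a growing partial quotient to $\ca(r)$ respectively $\cb(r)$, identify the $d_F$-limit as $r_+$ or $r_-$ via the parity of $n$ and Lemma~\ref{lem-FarNghCF}, compute the period word via the recursion for $s_k$, and apply Lemma~\ref{lem:HausdorffConv_Impurity} together with the isometry $\Phi$. Your treatment of part~(a) is in fact more complete than the paper's, which only verifies $\Omega_r\subseteq\Omega_{r_\pm}$; your use of Corollary~\ref{cor:complexity} to obtain the strict inclusion and hence non-minimality is a clean way to close that gap.
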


\begin{rem}
Let $r \in (0,1)\cap \QM$ with $\ca(r)=[0,a_0,\ldots,a_n+1]$ and $\cb(r)=[0,a_0,\ldots,a_n,1]$. Then 
\begin{itemize}
\item $\omega_s$ is a finite defect of $\omega_r$ with impurity $s_{n-1}[0,a_0, \dots, a_{n-1}]$, and
\item $\omega_l$ is a finite defect of $\omega_r$ with impurity $s_{n}[0,a_0, \dots, a_{n}]$.
\end{itemize}
\end{rem}

\begin{proof}
Let $\ca(r) = [0,a_0, \dots, a_n + 1]$ and $\cb(r) = [0,a_0, \dots, a_n , 1]$ be the short and long continued fractions associated with $r$.
We only treat the case if $n$ is even. 
The odd case follows similarly. Note that the switch in statement (b) between the even and odd case comes from the switch in Lemma~\ref{lem-FarNghCF}. 

For $r_+$ and $m\in\N$, let $r_m$ be the evaluation of $\c_m:=[0,a_0, \dots, a_n + 1, m + 1]$. 
Observe $\lim_E r_m = r$ and $r< r_m$ holds by Lemma~\ref{lem-FarNghCF} using that $n$ is even.
Thus, $\lim_F r_m = r_+$ follows from Proposition~\ref{prop-convergence_in_farey_topo}. 
Then Theorem~\ref{theo-Homeo_Farey-Subshifts} implies $\lim_{m\to\infty} \Omega_{r_m} = \Omega_{r_+}$.
Lemma~\ref{lem-RecWord} asserts $\Omega_{r_m} = \ol{\Orb(s_{n+1}[\c_m]^\infty)}$ where $s_{n+1}[\c_m] = s_{n}[\c_m]^{m+1}s_{n-1}[\c_m]$. 
By the recursive definition, $s_{n}[\c_m]$ depends only on the first $n$ digits of $\c_m$, which happens to be the continued fraction $\ca(r)$.
Thus, $s_{n}[\c_m] = s_{n}[\ca(r)]$.
Since $n$ was even, Lemma~\ref{lem-FarNghCF} implies that $r^*$ is the evaluation of $\c^* := [0,a_0, \dots, a_{n-1}]$ and by a similar argument as before we have $s_{n-1}[\c_m] = s_{n-1}[\c^*]$.
Therefore we conclude
$$
s_{n+1}[\c_m] =s_{n}[\c_m]^{m+1}s_{n-1}[\c_m]= s_{n}[\ca(r)]^{m+1}s_{n-1}[\c^*].
$$
Taking periodic repetitions of this word and sending $m$ to infinity then yields
$$
\lim_{m\to \infty} s_{n+1}[\c_m]^\infty = s_{n}[\ca(r)]^{\infty} s_{n-1}[\c^*] \cdot s_{n}[\ca(r)]^{\infty} =: \omega_{r_+}
$$
Thus, Lemma~\ref{lem:HausdorffConv_Impurity} and Lemma~\ref{lem-RecWord}~(c) imply
$$
\ol{\Orb(\omega_{r_+})}
	= \lim_{m \to \infty} \ol{\Orb(s_{n+1}[\c_m]^\infty)} 
	= \lim_{m \to \infty} \Omega_{r_m}
	= \Omega_{r_+}.
$$
Note that $s_{n}[\ca(r)]\in\Aa^q$ by Lemma~\ref{lem-RecWord} and so $\lim_{n\to\infty}\tra^{qn}(\omega_{r_+})=\omega_r$. 
Hence, $\Omega_r\subseteq\Omega_{r_+}$ follows. 

\medskip

For $r_-$ and $m\in\N$, we proceed analogously. Let $r_m$ be the evaluation of the tuple $\c_m:=[0,a_0, \dots, a_n, 1, m+1]$.
Then $\lim_F r_m = r_-$ holds since $n$ is even. 
Moreover, $n$ even implies that $r_*$ is the evaluation of $\c_* = [a_0, \dots, a_n]$ by Lemma~\ref{lem-FarNghCF}.
Thus, $\c_m$ and $\c_*$ coincide on the first $n$ digits yielding $s_n[\c_m] = s_n[\c_*]$.
Similarly, we have $s_{n+1}[\c_m]=s_{n+1}[\cb(r)]$.
Hence, we get
\begin{align*}
s_{n+2}[\c_m] 
	&= s_{n+1}[\c_m]^{m+1} s_{n}[\c_m] 
	= s_{n+1}[\cb(r)]^{m+1} s_n[\c_*].
\end{align*}
Taking periodic repetitions of this word and sending $m$ to infinity leads to 
$$
\lim_{m\to \infty}s_{n+2}[\c_m]^\infty = s_{n+1}[\cb(r)]^\infty s_n[\c_*] \cdot s_{n+1}[\cb(r)]^\infty =: \omega_{r_-}.
$$
As before, Lemma~\ref{lem:HausdorffConv_Impurity} and Lemma~\ref{lem-RecWord}~(c) imply
$$
\ol{\Orb(\omega_{r_-})}
	= \lim_{m \to \infty} \ol{\Orb(s_{n+2}[\c_m]^\infty)} 
	= \lim_{m \to \infty} \Omega_{r_m}
	= \Omega_{r_-}
$$ 
and $\Omega_r\subseteq\Omega_{r_-}$ follows.
\end{proof}


For convenience of the reader, let us provide some further examples in the following table. 

\medskip

{
\begin{center}
\def\arraystretch{1.2}
\begin{tabular}{ |c | c | c | c| }
\hline
 $r$ & $\frac{1}{2}$ & $\frac{2}{3}$ & $\frac{7}{9}$\\ 
 $\ca(r)$ & $[0,0,2]$ & $[0,0,1,2]$ & $[0,0,1,3,2]$ \\  
 $n$ & $1$ & $2$ & $3$ \\  \hline
 $s_{n-1}(\ca(r))$  & $0$ & $1$ & $1110$\\
 $s_n(\ca(r))$ & $01$ & $110$ & $111011101$\\
 $s_{n}(\cb(r))$ & $1$ & $10$ & $11101$\\
 $s_{n+1}(\cb(r))$ & $10$ & $101$ & $111011110$\\
 \hline
 $\omega_s$ & $(01)^\infty 0 \cdot (01)^\infty$ & $(101)^\infty 10 \cdot (101)^\infty$ &$(111011101)^\infty 1110 \cdot (111011101)^\infty$ \\
 $\omega_l$ & $(10)^\infty 1 \cdot (10)^\infty$ & $(110)^\infty 1 \cdot (110)^\infty$  &$(111011110)^\infty 11101 \cdot (111011110)^\infty$ \\
 \hline
\end{tabular}
\end{center}
}

\medskip

\begin{defini}
\label{def:omega_x}
For $x\in\ol{[0,1]}_F$, define $\omega_x\in\Aa^\ZM$ by
$$
\omega_x(n) :=
	\begin{cases}
		\chi_{[1-\alpha,1)}(n\alpha\mod\, 1), \qquad &x\in[0,1],\\
		\omega_{r_+}(n), \qquad &x=r_+ \textrm{ for some } r\in[0,1]\cap\QM,\\
		\omega_{r_-}(n), \qquad &x=r_- \textrm{ for some } r\in[0,1]\cap\QM,
	\end{cases}
	\qquad n\in\N.
$$
\end{defini}

\begin{coro}
\label{cor:BijectionFarey_Subshifts}
The map
$$
\Phi:\ol{[0,1]}_F\to \ol{\QS},\quad 
	x\mapsto \ol{\Orb(\omega_x)}
$$
is an isometric surjective map.
\end{coro}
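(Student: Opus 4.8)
The plan is to deduce the corollary directly from Theorem~\ref{theo-Homeo_Farey-Subshifts} together with Proposition~\ref{prop-Defects} and Example~\ref{exam:Defects_0_&_1}. Theorem~\ref{theo-Homeo_Farey-Subshifts} already provides a unique surjective isometry $\Phi:(\ol{[0,1]}_F,d_F)\to(\ol{\QS},d_\Inv)$ with $\Phi(\alpha)=\Omega_\alpha$ for $\alpha\in[0,1]$. So the only thing left to check is that this same $\Phi$ is given by the explicit formula $x\mapsto\ol{\Orb(\omega_x)}$, where $\omega_x$ is the configuration from Definition~\ref{def:omega_x}. Since isometries are in particular injective and we already know $\Phi$ is surjective, once the formula is verified the claimed properties (isometric, surjective) are immediate.

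First I would recall that by Corollary~\ref{cor:Completion_FareySpace} every point $x\in\ol{[0,1]}_F$ is either a point $\alpha\in[0,1]$, or of the form $r_+$ with $r\in\QM\cap[0,1)$, or of the form $r_-$ with $r\in\QM\cap(0,1]$. For $x=\alpha\in[0,1]$ the formula reads $\Phi(\alpha)=\ol{\Orb(\omega_\alpha)}=\Omega_\alpha$, which is exactly the defining property of $\Phi$ in Theorem~\ref{theo-Homeo_Farey-Subshifts}, and $\omega_\alpha$ is indeed the mechanical word, so this case needs no further argument. For $x=r_\pm$ with $r\in(0,1)\cap\QM$, Proposition~\ref{prop-Defects} establishes precisely that $\Phi(r_\pm)=\ol{\Orb(\omega_{r_\pm})}$ with $\omega_{r_\pm}$ the finite defects described there, which are the configurations appearing in Definition~\ref{def:omega_x}. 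For the two boundary cases $x=0_+$ and $x=1_-$, Example~\ref{exam:Defects_0_&_1} shows $\Phi(0_+)=\ol{\Orb(\omega_{0_+})}$ and $\Phi(1_-)=\ol{\Orb(\omega_{1_-})}$; one should note that Definition~\ref{def:omega_x} is consistent with these choices (i.e. $\omega_{0_+}=0^\infty 1\cdot 0^\infty$ and $\omega_{1_-}=1^\infty 0\cdot 1^\infty$ are exactly the defect words from that example). Assembling these three cases shows that the map $x\mapsto\ol{\Orb(\omega_x)}$ coincides on all of $\ol{[0,1]}_F$ with the isometry $\Phi$ of Theorem~\ref{theo-Homeo_Farey-Subshifts}.

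Since the two maps agree pointwise, $x\mapsto\ol{\Orb(\omega_x)}$ inherits from Theorem~\ref{theo-Homeo_Farey-Subshifts} the properties of being an isometry (in particular $d_F(x,y)=d_\Inv(\ol{\Orb(\omega_x)},\ol{\Orb(\omega_y)})$ for all $x,y\in\ol{[0,1]}_F$) and of being surjective onto $\ol{\QS}$. This completes the proof.

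I do not anticipate a genuine obstacle here, as the corollary is essentially a bookkeeping consequence of the preceding results; the only point requiring a moment of care is the compatibility of the piecewise Definition~\ref{def:omega_x} with the defect words constructed in Proposition~\ref{prop-Defects} and Example~\ref{exam:Defects_0_&_1}, i.e. checking that no point of $\ol{[0,1]}_F$ receives two conflicting definitions of $\omega_x$. This follows from the trichotomy in Corollary~\ref{cor:Completion_FareySpace}, which guarantees that the labels $\alpha$, $r_+$, $r_-$ are mutually exclusive and exhaustive.
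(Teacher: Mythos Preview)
Your proposal is correct and follows essentially the same approach as the paper, which simply cites Corollary~\ref{cor:Completion_FareySpace}, Theorem~\ref{theo-Homeo_Farey-Subshifts} and Proposition~\ref{prop-Defects}. You are in fact slightly more careful than the paper's one-line proof, since you explicitly invoke Example~\ref{exam:Defects_0_&_1} to cover the boundary cases $0_+$ and $1_-$, which Proposition~\ref{prop-Defects} (stated only for $r\in(0,1)\cap\QM$) does not handle.
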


\begin{proof}
This follows immediately by Corollary~\ref{cor:Completion_FareySpace}, Theorem~\ref{theo-Homeo_Farey-Subshifts} and Proposition~\ref{prop-Defects}.
\end{proof}


\section{Spectral applications}
\label{sect-SpectralApplic}

\noindent We prove Theorem~\ref{theo-SpeCon} on the Lipschitz continuity of the spectral map for the class of strongly pattern equivariant Hamiltonian. As a consequence of the classification of the defects, we provide a different proof of the result obtained in \cite{BIT91}.

Consider a family of continuous functions $t_k:\Aa^\ZM\to\CM$ for $k\in K$ where $K\subseteq \ZM$ is finite. With such a (finite) family of continuous functions, we associate the operator family $H=(H_\omega)_{\omega\in\Aa^\ZM}$ of linear bounded operators $H_\omega:\ell^2(\ZM)\to\ell^2(\ZM)$ defined by
\begin{equation}
\label{eq:Hamil}
(H_\omega \psi\big)(n) :=
	\sum_{k\in K} t_k\big(\tra^{-n}(\omega)\big) \ \psi(n+k) 
	\,,\qquad \psi\in\ell^2(\ZM)\,,\; n\in\ZM\,.
\end{equation}
We call $H = (H_\omega)_{\omega \in \Omega}$ a \textit{finite range Hamiltonian} since $K$ is finite.
A function $t:\Aa^\ZM\to\CM$ is called {\em strongly pattern equivariant} if there exists an $r\in\N$ such that $\omega|_{[-r,r]}=\rho|_{[-r,r]}$ implies $t(\omega)=t(\rho)$. 
Note that $t:\Aa^\ZM\to\CM$ is strongly pattern equivariant if and only if it is continuous and it has finite range. 
Then $H = (H_\omega)_{\omega \in \Omega}$ is called a {\em strongly pattern equivariant Hamiltonian} if $H_\omega$ is self-adjoint for each $\omega\in\Aa^\ZM$ and $t_k$ is strongly pattern equivariant for all $k\in K$. 
Note that $H_\omega$ is self-adjoint if $K=-K$ and $t_{-k}(\omega)=\overline{t_{k}\big(\tra^k(\omega)\big)}$ for all $k\in K$ and $\omega\in\Aa^\ZM$.

An example of a strongly pattern equivariant Hamiltonian is the Kohmoto model -- the main object of interest in this work. Specifically, for fixed $V\neq 0$ and  $\Aa=\{0,1\}$, let $K=\{-1,0,1\}$ and define 
\[
t_{\pm 1}:\Aa^\ZM\to\RM,\quad t_{\pm 1}(\omega)=1,
\qquad\textrm{and}\qquad
t_0:\Aa^\ZM\to\RM,\quad t_0(\omega):= V \omega(0).
\]
Clearly, these functions are strongly pattern equivariant (taking only the values $0,1$ and $V$). Observe that $H_\alpha$ defined in Equation~\eqref{eq:KohmOperat} equals to $H_{\omega_\alpha}$ in this case. Here $\omega_\alpha\in\Aa^\ZM$ is the configuration $\omega_\alpha(n)=\chi_{[1-\alpha,1)}(n\alpha\mod\, 1)$ for $n\in\ZM$.

We call an operator family $\Hh = (\Hh_\omega)_{\omega \in \Omega}$ a \textit{Hamiltonian of infinite range} if it is self-adjoint and for each $\varepsilon > 0$, there is a finite range Hamiltonian $H = (H_\omega)_{\omega \in \Omega}$ such that
$$ 
\|\Hh - H\| := \sup_{\omega \in \Omega} \|\Hh_\omega - H_\omega\| \leq \varepsilon.
$$
This class of operators resembles elements of the Roe algebras \cite{Ro88,Ro93}. 
Such operators are in particularly given by operators of the form \eqref{eq:Hamil} where $K$ is allowed to be infinite but where the coefficients $t_k$ decay sufficiently fast if $|k|\to\infty$.
Note that every Hamiltonian of finite range is also a Hamiltonian of infinite range.

For $m\in\ZM$, consider the unitary operator $U_m\in\Ll(\ell^2(\ZM)), \big(U_m\psi\big)(n)=\psi(n-m)$, the shift operator. 
Let $H=(H_\omega)_{\omega\in\csp}$ be a Hamiltonian of finite or infinite range. 
Then this operator family is $\ZM$-covariant, i.e. $U_m H_\omega U_{-m}=H_{\tra^m(\omega)}$ for $m\in\ZM$ and $\omega\in\Aa^\ZM$. 
If $H$ is of finite range, then $\csp\ni \omega\mapsto H_\omega$ is strongly continuous as the coefficients $t_k$ of $H$ are continuous. 
If $\Hh$ is of infinite range, then strong continuity follows also by standard arguments and that the operators are approximated in operator norm by Hamiltonians of finite range.

For $\Omega\in\Inv$, the operator family $\Hh_\Omega:= (\Hh_\omega)_{\omega\in\Omega}$ is a {\em Hamiltonian associated with $\Omega$} and the spectrum of $\Hh_\Omega$ is defined by $\sigma(\Hh_\Omega):=\overline{\bigcup_{\omega\in\Omega}\sigma(\Hh_\omega)}$.
Since the operators $\Hh_\omega$ are self-adjoint, the strong continuity and the covariance imply $\sigma(\Hh_\Omega)=\sigma(\Hh_\omega)$ if $\Omega = \ol{\Orb(\omega)}$ for $\omega\in\csp$. 

For $x\in\ol{[0,1]}_F$, let $\omega_x\in\Aa^\ZM$ be defined as in Definition~\ref{def:omega_x}.
As a consequence, we obtain the following.

\begin{theo}
\label{theo-SpeCon_PatternEquiv}
Let $H$ be a strongly pattern equivariant Hamiltonian. Then the spectral map
$$\Sigma:\overline{[0,1]}_F\to\Kk(\RM)
	\,,\quad x\mapsto\sigma\big(H_{\omega_x}\big)\,,
$$
is Lipschitz-continuous. 
In particular there is some $C > 0$ such that 
$$
d_H(\sigma(H_{\omega_x}), \sigma(H_{\omega_y}))  \leq C \cdot d_F(x,y),
	\qquad x,y \in \ol{[0,1]}_F.
$$
\end{theo}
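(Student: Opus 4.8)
The plan is to transfer the problem through the isometry $\Phi:(\ol{[0,1]}_F, d_F)\to(\ol{\QS}, d_\Inv)$ of Theorem~\ref{theo-Homeo_Farey-Subshifts} (equivalently Corollary~\ref{cor:BijectionFarey_Subshifts}) and then invoke the known Lipschitz-type estimate relating Hausdorff-convergence of subshifts to Hausdorff-convergence of spectra for strongly pattern equivariant Hamiltonians. Concretely, for $x,y\in\ol{[0,1]}_F$ we have $d_F(x,y)=d_\Inv(\Phi(x),\Phi(y))$, and $\Phi(x)=\ol{\Orb(\omega_x)}$, so it suffices to produce a constant $C>0$, depending only on $H$, with
$$
d_H\big(\sigma(H_{\omega_x}),\sigma(H_{\omega_y})\big)\leq C\cdot d_\Inv\big(\ol{\Orb(\omega_x)},\ol{\Orb(\omega_y)}\big).
$$
By Proposition~\ref{prop-TopHau}, $d_\Inv(\Omega_1,\Omega_2)=\frac{1}{m+1}$ precisely when $m$ is the largest integer with $\dic(\Omega_1)\cap\Aa^m=\dic(\Omega_2)\cap\Aa^m$. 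So the heart of the matter is the quantitative statement: if two subshifts share all subwords up to length $m$, their associated strongly-pattern-equivariant spectra are within $O(1/m)$ in Hausdorff distance.

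**Key steps.** First I would fix a strongly pattern equivariant Hamiltonian $H$ with range radius $r$, i.e.\ $\omega|_{[-r,r]}=\rho|_{[-r,r]}$ forces $t_k(\omega)=t_k(\rho)$ for all $k\in K\subseteq[-R,R]$; set $\rho_0:=R+r$ so that the entire local action of $H_\omega$ near a site $n$ is determined by $\omega|_{[n-\rho_0,n+\rho_0]}$. Second, I would record the standard observation that if $\dic(\Omega_1)\cap\Aa^{m}=\dic(\Omega_2)\cap\Aa^{m}$ for some large $m$, then every finite patch of length $\le m$ occurring in $\Omega_1$ also occurs in $\Omega_2$ and vice versa; by a gluing/finite-dimensional-truncation argument (compare \cite{BBC18,BT21}), for each $\omega\in\Omega_1$ and each $\lambda\in\sigma(H_\omega)$ one can build an approximate eigenvector supported in a window of width $\sim m$, transplant the corresponding patch into some $\rho\in\Omega_2$, and obtain $\dist(\lambda,\sigma(H_\rho))\le c/m$ where $c$ depends only on $\|H\|$ and $\rho_0$ — and symmetrically. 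This yields $d_H(\sigma(H_{\Omega_1}),\sigma(H_{\Omega_2}))\le c/m$, hence $\le 2c\cdot d_\Inv(\Omega_1,\Omega_2)$ since $\frac1m\le\frac{2}{m+1}$. Third, I would note this is exactly the Lipschitz continuity of the spectral map $\Omega\mapsto\sigma(H_\Omega)$ on $(\ol{\QS},d_\Inv)$; precomposing with the isometry $\Phi$ of Corollary~\ref{cor:BijectionFarey_Subshifts} and using $\sigma(H_{\Omega_x})=\sigma(H_{\omega_x})$ (valid since $\Omega_x=\ol{\Orb(\omega_x)}$ by Proposition~\ref{prop-Defects} and Example~\ref{exam:Defects_0_&_1}, and the spectrum is constant along orbit closures) gives the claim with $C=2c$.

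**Main obstacle.** The genuine work is the quantitative patch-transplantation estimate in the second step: turning "shared dictionary up to length $m$" into "spectra $O(1/m)$-close". Qualitatively this is the content of the continuity results of \cite{BBdN18,BBC18,BT21}, but the Lipschitz rate requires being careful that the approximate eigenvectors can be chosen with support controlled linearly in $m$ and with the spectral error decaying like $1/m$ rather than merely tending to $0$; this in turn uses that for strongly pattern equivariant (finite range, hence bounded) Hamiltonians a Weyl-sequence living in a box of size $m$ witnesses approximate membership in the spectrum with error $O(\|H\|/m)$. Since the paper states Theorem~\ref{theo-SpeCon_PatternEquiv} is where this is proved, I would expect the authors to isolate precisely this estimate — most cleanly as a lemma of the form "$d_H(\sigma(H_{\Omega_1}),\sigma(H_{\Omega_2}))\le C\,d_\Inv(\Omega_1,\Omega_2)$ for strongly pattern equivariant $H$" — and then deduce Theorem~\ref{theo-SpeCon_PatternEquiv} (and Theorem~\ref{theo-SpeCon} as the special case of the Kohmoto Hamiltonian) in two lines via the isometry $\Phi$. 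Everything after that lemma is bookkeeping with the already-established isometries.
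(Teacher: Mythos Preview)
Your proposal is correct and follows essentially the same route as the paper: the paper's proof invokes \cite[prop.~1.1]{BBC18} and \cite[thm.~1.3~(b)]{BT21} as a black box for the Lipschitz estimate $d_H(\sigma(H_{\Omega_x}),\sigma(H_{\Omega_y}))\le C\,d_\Inv(\Omega_x,\Omega_y)$, then applies the isometry from Corollary~\ref{cor:BijectionFarey_Subshifts} together with $\sigma(H_{\Omega_x})=\sigma(H_{\omega_x})$ to conclude. Your second step sketches the content of those cited results rather than just citing them, but the architecture is identical.
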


\begin{proof}
For $\Omega_x:=\ol{\Orb(\omega_x)}$, we have $\sigma(H_{\Omega_x})=\sigma(H_{\omega_x})$.
Then \cite[prop.~1.1]{BBC18} respectively \cite[thm.~1.3~(b)]{BT21}, imply that there exists a $C > 0 $ (depending also on the coefficients of $H$) such that
\[
d_H(\sigma(H_{\omega_x}), \sigma(H_{\omega_y}))
	= d_H(\sigma(H_{\Omega_x}), \sigma(H_{\Omega_y}))
	\leq C\cdot d_\Inv(\Omega_x, \Omega_y).
\]
Thus, Corollary~\ref{cor:BijectionFarey_Subshifts} immediately yields the claimed statement. 
\end{proof}

Note that the restriction to strongly pattern equivariant Hamiltonians is just for simplicity while it covers the Kohmoto model. Similar results hold as long as the operators are normal, the off-diagonals decay fast enough and the coefficients are regular enough, see \cite{BT21} for details.

\begin{coro}
\label{cor-LipCon_PatternEquiv}
Let $H$ be a strongly pattern equivariant Hamiltonian (of finite range). 
{Then there is a constant $C>0$ such that following holds.}
\begin{itemize} 
\item[(a)] Let $r\in F_m$ and $x,y\in\overline{(r,r^*)}_F$, then
$
d_H\big(\sigma(H_{\omega_x}),\sigma(H_{\omega_y})\big) 
	\leq \frac{C}{m+1}
$.
\item[(b)] Let $\alpha,\frac{p}{q}\in[0,1]$ be such that $0<| \alpha-\frac{p}{q} | < \frac{1}{q^2}$, then 
$
d_H\big(\sigma(H_\alpha),\sigma(H_{\frac{p}{q}})\big) 
	\leq  \frac{C}{q}
$.
\end{itemize}
\end{coro}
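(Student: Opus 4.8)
The plan is to deduce both estimates from Theorem~\ref{theo-SpeCon_PatternEquiv}, which supplies a constant $C=C(H)>0$ with $d_H(\sigma(H_{\omega_x}),\sigma(H_{\omega_y}))\leq C\cdot d_F(x,y)$ for all $x,y\in\ol{[0,1]}_F$; the same $C$ will serve in (a) and in (b). Everything therefore reduces to estimating the Farey distance.

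For part (a), I would first observe that if $x,y\in(r,r^*)\subseteq[0,1]$ with $r\in F_m$, then the number $\frac{1}{m+1}$ belongs to the set over which the minimum defining $d_F(x,y)$ is taken (note $(r,r^*)\subseteq(0,1)$, so the third branch of the definition of $d_F$ is the relevant one), whence $d_F(x,y)\leq\frac{1}{m+1}$. To pass to arbitrary $x,y\in\ol{(r,r^*)}_F$, I would choose sequences $(x_n)_{n},(y_n)_{n}\subseteq(r,r^*)$ with $\lim_F x_n=x$ and $\lim_F y_n=y$ and use that $d_F$ is continuous on $\ol{[0,1]}_F\times\ol{[0,1]}_F$ to obtain $d_F(x,y)=\lim_{n\to\infty}d_F(x_n,y_n)\leq\frac{1}{m+1}$. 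Combined with Theorem~\ref{theo-SpeCon_PatternEquiv} this gives the bound $\frac{C}{m+1}$.

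For part (b), the key point is that $0<|\alpha-\frac{p}{q}|<\frac{1}{q^2}$ forces $\alpha$ and $\frac{p}{q}$ to lie in one common interval $(s,s^*)$ with $s\in F_{q-1}$. If $q=1$ there is nothing to show, since $d_F\leq 1=\frac{1}{q}$ always. For $q\geq 2$ the hypothesis first forces $\alpha,\frac{p}{q}\in(0,1)$: a reduced fraction $\frac{p}{q}\in F_q\setminus\{0,1\}$ has Euclidean distance at least $\frac{1}{q}>\frac{1}{q^2}$ from $\{0,1\}$. Since $\gcd(p,q)=1$, any $\frac{a}{b}\in F_{q-1}$ satisfies $|\frac{a}{b}-\frac{p}{q}|=\frac{|aq-bp|}{bq}\geq\frac{1}{(q-1)q}>\frac{1}{q^2}$, so the closed interval with endpoints $\alpha$ and $\frac{p}{q}$ (of length $<\frac{1}{q^2}$ and having $\frac{p}{q}$ as one endpoint) contains no element of $F_{q-1}$. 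As $\frac{p}{q}\notin F_{q-1}$, there is a unique $s\in F_{q-1}$ with $\frac{p}{q}\in(s,s^*)$, and by the preceding remark $\alpha$ lies in $(s,s^*)$ as well; hence $d_F(\alpha,\frac{p}{q})\leq\frac{1}{(q-1)+1}=\frac{1}{q}$. Theorem~\ref{theo-SpeCon_PatternEquiv} then yields the bound $\frac{C}{q}$.

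The whole argument is essentially bookkeeping on top of Theorem~\ref{theo-SpeCon_PatternEquiv}; the only step requiring some care — and what I would regard as the main (albeit minor) obstacle — is the elementary Farey estimate in (b): verifying that no fraction of denominator at most $q-1$ can enter the Euclidean $\frac{1}{q^2}$-neighbourhood of $\frac{p}{q}$, and that this neighbourhood sits inside a single $F_{q-1}$-interval $(s,s^*)$.
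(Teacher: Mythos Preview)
Your proof is correct and follows essentially the same approach as the paper: both parts reduce to bounding $d_F$ and then invoking Theorem~\ref{theo-SpeCon_PatternEquiv}. In (b) the paper appeals to Lemma~\ref{lem-FareyDist} to place $\alpha$ between the $q$-Farey neighbours of $\frac{p}{q}$ and reads off $d_F(\alpha,\frac{p}{q})=\frac{1}{q}$, whereas you compute directly that every $\frac{a}{b}\in F_{q-1}$ sits at Euclidean distance $\geq\frac{1}{(q-1)q}>\frac{1}{q^2}$ from $\frac{p}{q}$; this is the same arithmetic, and your explicit treatment of $q=1$ is in fact a bit more careful than the paper's.
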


\begin{proof}
(a)  This is a direct consequence of Theorem~\ref{theo-SpeCon_PatternEquiv} and the fact that $d_F(x,y)\leq \frac{1}{m+1}$ if $x,y\in\overline{(r,r^*)}_F$ for $r\in F_m$, see Section~\ref{ssect-FareyTopoRep}.

\medskip

(b)  Clearly, $r:=\frac{p}{q}\in(0,1)$ is a $q$-Farey number. Since $0<|\alpha-r|<\frac{1}{q^2}$, $\alpha$ is not a $q$-Farey number by Lemma~\ref{lem-FareyDist}. Hence, either $\alpha\in(r_*,r)$ or $\alpha\in(r,r^*)$ follows. Since $r\not\in F_{q-1}$, we conclude $d_F(r,\alpha)=\frac{1}{q}$ from the definition of the Farey metric $d_F$, see Section~\ref{ssect-FareyTopoRep}. Then Theorem~\ref{theo-SpeCon_PatternEquiv} leads to the desired estimate.
\end{proof}

%
%

Standard arguments imply that the essential spectrum $\sigma_{\mathrm{ess}}(\Hh_\omega)$ of an infinite range Hamiltonian $\Hh$ does not change if $\omega$ is replaced by a finite defect of it. Combined with Proposition~\ref{prop-Defects}, we obtain the analog of \cite[thm.~4, thm.~6]{BIT91}. 
%

\begin{proposi}
\label{prop:EssentialSpectrum}
Let $\Hh$ be a Hamiltonian of infinite range.
For every $u,v \in \Aa^\ast \setminus \{\o \}$ we get
$$
\sigma(\Hh_{u^\infty}) 
	= \sigma_{\mathrm{ess}}(\Hh_{u^\infty}) 
	= \sigma_{\mathrm{ess}}(\Hh_{u^\infty v\cdot u^\infty}). 
$$
In particular, for every $r \in [0,1]\cap\QM$, we have $\sigma(\Hh_{\omega_r})=\sigma_{\mathrm{ess}}(\Hh_{\omega_r}) = \sigma_{\mathrm{ess}}(\Hh_{\omega_{r_\pm}}).$
\end{proposi}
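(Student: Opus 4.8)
The plan is to prove the two displayed equalities and then read off the ``in particular'' statement from Proposition~\ref{prop-Defects}. Throughout, recall that every $\Hh_\omega$ is self-adjoint (so the Weyl criterion for the essential spectrum applies), that $\Hh$ is a norm-limit of finite-range Hamiltonians, and that by covariance $\Hh_{\tra^j\omega}=U_j\Hh_\omega U_{-j}$, so $\sigma$ and $\sigma_{\mathrm{ess}}$ are unchanged under $\omega\mapsto\tra^j\omega$.

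\emph{Step 1: $\sigma(\Hh_{u^\infty})=\sigma_{\mathrm{ess}}(\Hh_{u^\infty})$.} Put $p:=|u|\geq 1$. Since $\tra^p(u^\infty)=u^\infty$, covariance gives $U_p\Hh_{u^\infty}U_{-p}=\Hh_{u^\infty}$, so $U_p$ commutes with every spectral projection of $\Hh_{u^\infty}$. If $\lambda$ belonged to the discrete spectrum, its eigenspace would be a nonzero finite-dimensional subspace invariant under the unitary $U_p$, hence would contain an eigenvector of $U_p$; but an eigenvector $\psi$ of the shift $U_p$ on $\ell^2(\ZM)$ satisfies $|\psi(n+kp)|=|\psi(n)|$ for all $k\in\ZM$ (as the eigenvalue has modulus $1$) and so cannot lie in $\ell^2(\ZM)$ unless $\psi=0$. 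Thus $\sigma_{\mathrm{disc}}(\Hh_{u^\infty})=\emptyset$, which is the claim. The same argument together with covariance shows $\sigma_{\mathrm{ess}}(\Hh_{\tra^j u^\infty})=\sigma_{\mathrm{ess}}(\Hh_{u^\infty})$ for all $j\in\ZM$, which is used below.

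\emph{Step 2: $\sigma_{\mathrm{ess}}(\Hh_{u^\infty})=\sigma_{\mathrm{ess}}(\Hh_\omega)$ for $\omega:=u^\infty v\cdot u^\infty$.} Let $\ell:=|v|$. By construction $\omega$ agrees with $u^\infty$ on $\{n\geq 0\}$ and with $\tra^j(u^\infty)$ on $\{n\leq -\ell-1\}$ for a suitable $j\in\ZM$ (depending only on $\ell$ and $p$); only the finite window $[-\ell,-1]$ carries the impurity. I would use the Weyl criterion. For ``$\supseteq$'', take $\lambda\in\sigma_{\mathrm{ess}}(\Hh_{u^\infty})=\sigma(\Hh_{u^\infty})$ and a normalized, finitely supported approximate eigensequence $(\varphi_n)$ for $\Hh_{u^\infty}$; translating $\varphi_n$ by a large multiple of $p$ to the right gives, by covariance, another approximate eigensequence supported deep inside $\{n\geq 0\}$, and since $\omega=u^\infty$ there and the hopping coefficients of $\Hh$ decay, replacing $\Hh_{u^\infty}$ by $\Hh_\omega$ perturbs all relevant norms by $o(1)$; a diagonal choice of translations then produces a weakly null singular Weyl sequence for $\Hh_\omega$, so $\lambda\in\sigma_{\mathrm{ess}}(\Hh_\omega)$. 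For ``$\subseteq$'', take $\lambda\in\sigma_{\mathrm{ess}}(\Hh_\omega)$ with a weakly null singular Weyl sequence $(\psi_n)$; since finite-rank projections are compact, the mass of $\psi_n$ on any fixed finite window tends to $0$, so after cutting off and renormalizing one may assume $\psi_n=\psi_n^-+\psi_n^+$ with $\psi_n^\pm$ supported in $\{\pm n\geq R_n\}$ for some $R_n\to\infty$. Because the two pieces sit arbitrarily far apart and the hopping decays, $\|(\Hh_\omega-\lambda)\psi_n^\pm\|\to 0$, and at least one of $\|\psi_n^\pm\|$ stays bounded below along a subsequence; on its support the relevant piece sees only $u^\infty$ (resp.\ $\tra^j(u^\infty)$), so after normalization it is a weakly null singular Weyl sequence for $\Hh_{u^\infty}$ (resp.\ $\Hh_{\tra^j(u^\infty)}$), and in either case $\lambda\in\sigma_{\mathrm{ess}}(\Hh_{u^\infty})$ by Step~1. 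I expect the only genuine work here to be the routine bookkeeping of the decaying off-diagonal tails that reach across the origin (handled by first replacing $\Hh$ by a finite-range approximant and letting the approximation error tend to zero) and of the shift $j$ in the left tail; alternatively, one could simply invoke the limit-operator description of the essential spectrum of band-dominated operators (the class considered here lies inside the Roe algebra), under which $\sigma_{\mathrm{ess}}(\Hh_\omega)=\overline{\bigcup_\rho\sigma(\Hh_\rho)}$ over the translation limit points $\rho$ of $\omega$, which here are precisely the finitely many shifts of $u^\infty$.

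\emph{Step 3: the ``in particular'' statement.} Fix $r\in[0,1]\cap\QM$. By Lemma~\ref{lem-RecWord} (and Example~\ref{exam:Defects_0_&_1} when $r\in\{0,1\}$), $\omega_r$ is a shift of $u^\infty$ with $u=s_n[\ca(r)]$, so Step~1 and covariance give $\sigma(\Hh_{\omega_r})=\sigma(\Hh_{u^\infty})=\sigma_{\mathrm{ess}}(\Hh_{u^\infty})=\sigma_{\mathrm{ess}}(\Hh_{\omega_r})$. By Proposition~\ref{prop-Defects} (using $u=s_n[\ca(r)]$ in the even case and $u=s_{n+1}[\cb(r)]$ in the odd case, these two periodic words being shifts of one another), $\omega_{r_\pm}$ is a finite defect $u^\infty v'\cdot u^\infty$ of $\omega_r$ with some impurity $v'$; applying Steps~1--2 to this $u$ and this $v'$ then yields $\sigma_{\mathrm{ess}}(\Hh_{\omega_{r_\pm}})=\sigma_{\mathrm{ess}}(\Hh_{u^\infty})=\sigma_{\mathrm{ess}}(\Hh_{\omega_r})=\sigma(\Hh_{\omega_r})$, as claimed.
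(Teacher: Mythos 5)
Your proposal is correct, and its overall skeleton (establish $\sigma=\sigma_{\mathrm{ess}}$ for the periodic operator, show the impurity does not change the essential spectrum, then feed in Proposition~\ref{prop-Defects} and Lemma~\ref{lem-RecWord} for the ``in particular'' part) matches the paper's. The difference lies in how the operator-theoretic core is justified: the paper outsources it, citing the limit-operator characterization of the essential spectrum of band-dominated operators ([LiSe14, cor.~12], [BLLS18, prop.~1.2]) to get $\sigma_{\mathrm{ess}}(\Hh_{u^\infty v\cdot u^\infty})=\sigma_{\mathrm{ess}}(\Hh_{u^\infty})=\sigma(\Hh_{u^\infty})$, with the last equality attributed to ``standard minimality arguments,'' whereas you prove both equalities by hand: the periodic case via the commuting shift unitary $U_p$ (no $\ell^2$-eigenvectors of the shift, hence empty discrete spectrum), and the defect case via an explicit two-sided Weyl-sequence argument that correctly accounts for the fact that the left tail of $u^\infty v\cdot u^\infty$ is a \emph{shifted} copy of $u^\infty$ -- so the impurity is not literally a relatively compact perturbation of $\Hh_{u^\infty}$ itself, a point the limit-operator theorem absorbs silently and your argument handles via covariance. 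Your version is more elementary and self-contained (you even note the limit-operator shortcut as an alternative, which is exactly the paper's route), at the cost of some routine bookkeeping you only sketch: in the ``$\subseteq$'' direction the sharp cutoffs must be placed where $\psi_n$ carries little local mass (weak nullity only controls fixed finite windows), which requires the usual adaptive choice of cut position or a diagonal argument together with the finite-range approximation of $\Hh$; this is standard and does not affect correctness.
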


\begin{proof}
This result follows by standard arguments using that the impurity $v$ in $u^\infty$ defines a compact perturbation on the level of the operators. 
Then the essential spectrum is characterized by Weyl sequences. 
For convenience, of the reader, we provide some more details.

Note that our infinite range Hamiltonians form a subclass of the \textit{band-dominated operator} in \cite{BLLS18}.
Therefore \cite[cor.~12]{LiSe14} (see also \cite[prop.~1.2]{BLLS18}) applied to the operator $\Hh_{u^\infty v\cdot u^\infty}$ yields
\[
\sigma_{\mathrm{ess}}(\Hh_{u^\infty v\cdot u^\infty}) 
	= \bigcup_{m\in \N} \sigma_{\mathrm{ess}}(\tra^m\Hh_{u^\infty}) 
	= \sigma_{\mathrm{ess}}(\Hh_{u^\infty})
	= \sigma(\Hh_{u^\infty}) ,
\]
where the last equality follows from standard minimality arguments of $\ol{\Orb(u^\infty)}$.

Finally, Proposition~\ref{prop-Defects} asserts that the configuration $\omega_{r_\pm}$ is a defect with some impurity $v_\pm\in \Aa^\ast$ of $\omega_r$. Thus, the previous considerations yield the claimed identity.
\end{proof}

We finish this section by making the previous statement more precise in the Kohmoto model and explaining the relative positions of the spectral defects and spectral bands seen in Figure~\ref{fig-Kohmoto}. 
For convenience, we use the notation
$$
\sigma_x:=\sigma_{x}(V):=\sigma(H_{\omega_x,V}), \qquad x\in\ol{[0,1]}_F, V\neq 0,
$$
for the spectrum of the self-adjoint operator $H_{\omega_x,V}$ defined in Equation~\eqref{eq:SchroedOper}. 
We show that the difference $\sigma_{r_\pm}(V)\setminus\sigma_{r}(V)$  contains exactly $q$ points if $V\neq 0$ and $r=\frac{p}{q}$ with $p$ and $q$ coprime.
Moreover, we can determine in which gaps these eigenvalues appear proving the numerical observations discussed in the last remark in \cite{BIT91}. 
If $V>4$, this follows quickly from our considerations and \cite{Raym95}, see also \cite{BaBeBiRaTh24} a recent review of this work. 
In order to show the result in the small-coupling regime, we need the new ideas developed in \cite{BBL24}. 
Specifically, the following lemma is motivated by \cite[lem.~7.16]{BBL24}. 

We try to suspress most of the terminology used there and the reader is referred to \cite{BBL24,BaBeBiRaTh24} for a detailed discussion. 
A connected component $I=[a,b]$ of $\sigma_r$ with $r\in[0,1]\cap\QM$ and $a<b$ is called a {\em spectral band} and $a,b$ are called the associated spectral band edges. 
We say an interval $[a,b]$ {\em is to the left of an interval} $[c,d]$ if $a<c$ and $b<d$.
In this case, we write $[a,b]\prec[c,d]$ and we include the case of intervals being singletons $[a,a]=\{a\}$.

\begin{lemma}
\label{lem:Spectrum_event_disjoint}
Let $V\neq 0$, $r=\frac{p}{q}\in[0,1]\cap\QM$ with $p$ and $q$ coprime and $c_r:=[0,a_0, a_1, \dots , a_{n}]$ be its (long or short) continued fraction. 
For $k\in\N$, let $r_k\in[0,1]$ be the unique rational number defined by the continued fraction $d_k:=[0,a_0, a_1, \dots , a_{n},k]$. Then the following assertions hold.
\begin{enumerate}[(a)]
\item The spectrum $\sigma_r(V)$ equals to the union of $q$ disjoint spectral bands $J_1\prec J_2 \prec \ldots\prec J_q$.
\item For each $k\in\N$, there are exactly $q$-spectral bands $I_1^k,I_2^k,\ldots,I_q^k$ of $\sigma_{r_k}(V)$ such that 
$$
I_j^{k+1}\subseteq I_j^k
\qquad\textrm{and}\qquad
I_j^k\not\subseteq \sigma_r(V),
\qquad 1\leq j\leq q,\; k\in\N.
$$
\item For each $k\in\N$, we have
$$
\qquad\quad n\in\N\textrm{ odd} 
	\qquad\Longrightarrow\qquad
I_1^k \prec J_1 \prec I_2^k \prec J_2 \prec \ldots \prec I_q^k \prec J_q
$$
and
$$
n\in\N\textrm{ even or } n=0 
	\qquad\Longrightarrow\qquad
J_1 \prec I_1^k \prec J_2 \prec \ldots \prec I_{q-1}^k \prec J_q \prec I_q^k.
$$
\item There exists a $k_0\in\N$ such that for all $k\geq k_0$, and each $1\leq j\leq q$, we have $I_j^k\cap \sigma_r(V)=\emptyset$.
\end{enumerate}
\end{lemma}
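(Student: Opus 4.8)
The plan is to deduce the four assertions from the hierarchical band structure of the Kohmoto model, from \cite{Raym95} (for $V>4$) and \cite{BBL24} (for all $V\neq 0$; see also \cite{BaBeBiRaTh24}), after translating into the continued fraction picture. For (a): $H_{\omega_r,V}$ is $q$-periodic, so Floquet--Bloch theory gives $\sigma_r(V)=\{E\in\RM:|\Delta_r(E)|\leq 2\}$ with $\Delta_r$ a real polynomial of degree $q$, hence a union of at most $q$ closed intervals; that for $V\neq 0$ these are exactly $q$ pairwise disjoint bands is part of that hierarchical description, and is also recorded in \cite{BIT91}. Denote them $J_1\prec\cdots\prec J_q$. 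The key elementary point for the remaining parts is that $d_k=[0,a_0,\dots,a_n,k]$ arises from $c_r=[0,a_0,\dots,a_n]$ by appending one digit, so (in standard convergent notation) $r_k=\frac{kp_n+p_{n-1}}{kq_n+q_{n-1}}$ with $p_n/q_n=r$ and $p_{n-1}/q_{n-1}$ the previous convergent, whence $r_{k+1}=r_k\oplus r$ and the $r_k$ converge monotonically to $r$, from a side determined by the parity of $n$ (alternation of convergents, cf.\ Lemma~\ref{lem-FarNghCF}). By Proposition~\ref{prop-convergence_in_farey_topo} this gives $\lim_F r_k=r_+$ or $\lim_F r_k=r_-$ accordingly, by Theorem~\ref{theo-SpeCon_PatternEquiv} (or Corollary~\ref{cor-LipCon_PatternEquiv}) $\sigma_{r_k}(V)\to\sigma_{r_\pm}(V)$ in the Hausdorff metric, and by Proposition~\ref{prop-Defects} $\Omega_{r_\pm}\supsetneq\Omega_r$, so $\sigma_{r_\pm}(V)\supseteq\sigma_r(V)$.

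For (b) and (c) I would invoke the band combinatorics of \cite{BBL24}, in the form of the analogue of \cite[lem.~7.16]{BBL24}: it expresses $\sigma_{r_{k+1}}(V)$ through $\sigma_{r_k}(V)$ and $\sigma_r(V)$, the $q_k$ bands of $\sigma_{r_k}(V)$ splitting into a family contained in $\sigma_r(V)$ that is further subdivided at level $k+1$, together with a distinguished family of exactly $q$ bands $I_1^k\prec\cdots\prec I_q^k$ that are \emph{not} contained in $\sigma_r(V)$ and satisfy $I_j^{k+1}\subseteq I_j^k$. The same description places each $I_j^k$ in the spectral gap of $\sigma_r(V)$ adjacent to $J_j$, on the side recorded in (c) (left of $J_j$ in the odd case, right in the even case, with the surplus band in the unbounded gap below $J_1$, respectively above $J_q$); matching this side with the parity of $n$ is where the combinatorics of \cite{BBL24} must be read off carefully (in particular, the $V<0$ case is handled there via the symmetry $\alpha\mapsto 1-\alpha$, $V\mapsto -V$, which flips the parity).

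For (d): fix $j$. Since $(I_j^k)_k$ is a decreasing sequence of non-empty compact intervals, $I_j^\infty:=\bigcap_k I_j^k$ is a compact interval, and from $I_j^k\subseteq\sigma_{r_k}(V)$ together with the Hausdorff convergence $\sigma_{r_k}(V)\to\sigma_{r_\pm}(V)$ we get $I_j^\infty\subseteq\sigma_{r_\pm}(V)$. By (c) the interval $I_j^\infty$ lies in the closure $\overline{G}$ of a single spectral gap $G$ of $\sigma_r(V)$, while Proposition~\ref{prop:EssentialSpectrum} gives $\sigma_{\mathrm{ess}}(H_{\omega_{r_\pm},V})=\sigma_r(V)$; hence $\sigma_{r_\pm}(V)\cap\overline{G}$ is the union of the (at most two) band edges $\partial G$ with a discrete set of eigenvalues in $G$, which contains no non-degenerate interval, so $I_j^\infty=\{e_j\}$ is a single point. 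It remains to exclude $e_j\in\partial G$; for this I would appeal to the hierarchical estimates of \cite{Raym95,BBL24} bounding from below, uniformly in $k$, the length of the gap of $\sigma_{r_k}(V)$ that separates $I_j^k$ from the subdivided bands, which forces $e_j$ into the open gap $G$. Granting this, since $I_j^k\downarrow\{e_j\}$ with $e_j\in G$ open and each $I_j^k\subseteq\overline{G}$, there is $k_0(j)$ with $I_j^k\subseteq G$, i.e.\ $I_j^k\cap\sigma_r(V)=\emptyset$, for all $k\geq k_0(j)$; taking $k_0:=\max_{1\leq j\leq q}k_0(j)$ proves (d).

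The main obstacle is this last point together with the precise band bookkeeping behind (b)--(c): everything else is soft (periodicity and Floquet theory for (a), Hausdorff continuity from Theorem~\ref{theo-SpeCon_PatternEquiv}, and the characterisation of the essential spectrum from Proposition~\ref{prop:EssentialSpectrum}), so the proof reduces to transcribing and adapting the quantitative hierarchical analysis of \cite{Raym95} and \cite{BBL24} — in particular establishing the uniform separation of the ``type-$I$'' bands $I_j^k$ from the coarse band edges of $\sigma_r(V)$, and matching our sequence $(r_k)_k$ with the combinatorial framework of those works.
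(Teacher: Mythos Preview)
Your treatment of (a)--(c) matches the paper's: both appeal to Floquet--Bloch theory for (a) and to the hierarchical band combinatorics of \cite{BBL24} for (b) and (c) (the precise citations in the paper are \cite[lem.~7.3~(b)]{BBL24} and Property~(B1) for (b), and the spectral $\alpha$-tree of \cite[def.~1.4, thm.~1.7]{BBL24} for (c); lem.~7.16 there is closer in spirit to part (d)).

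For (d) your route diverges from the paper's, and the gap you flag is real and not closed by the fix you propose. The paper does \emph{not} pass through the limit $I_j^\infty=\bigcap_k I_j^k$ and Proposition~\ref{prop:EssentialSpectrum}; instead it shows directly that every band edge $E$ of $\sigma_r(V)$ eventually leaves $\sigma_{r_k}(V)$. Using the trace recursion one obtains, whenever $t_{c_r}(E)=\pm 2$,
\[
\big|t_{d_{k+1}}(E)\big| \;=\; k\left|\Big(1+\tfrac{1}{k}\Big)\,t_{d_1}(E)\ \mp\ t_{d_0}(E)\right|,
\]
which tends to infinity provided $t_{d_1}(E)\neq \pm t_{d_0}(E)$; this non-degeneracy is exactly what the Fricke--Voigt invariant forces, via a short continuity-in-$V$ argument. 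Your soft argument \emph{does} go through for $V>4$, since then $e_j\in\partial G$ is ruled out by the three-set disjointness $\sigma_r\cap\sigma_{r_k}\cap\sigma_{r_{k+1}}=\emptyset$ of \cite[prop.~3.1~(iii)]{Raym95} (as $e_j\in I_j^k\cap I_j^{k+1}$ for every $k$). But for $0<|V|\leq 4$ that disjointness can fail, and the ``uniform separation of the $I_j^k$ from the band edges'' you invoke is precisely the content of the Fricke--Voigt computation --- so the step you defer is exactly where the analytic work of (d) lives.
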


\begin{proof}
This proof is heavily based on the structure of the spectrum proven in \cite{BBL24} (respectively \cite{Raym95} for $V>4$). 
We borrow some terminologies introduced in the work, see also \cite{BaBeBiRaTh24}.
As discussed in \cite[thm.~1.9~(e)]{BBL24}, we can restrict to the case $V>0$.

\medskip

(a) This is standard consequence of the Floquet-Bloch theory, see e.g. \cite[prop.~3.1~(i)]{Raym95} as well as \cite[prop.~4.1]{BaBeBiRaTh24}.

\medskip

(b) This is proven in \cite[lem.~7.3~(b)]{BBL24}. 
In fact, it asserts that for each $k\in\N$, there are exactly $q$ spectral bands $I_1^k,\ldots,I_q^k$ in $\sigma_{r_k}(V)$ of type $B$ implying $I_j^k\not\subseteq \sigma_r(V)$ and all other spectral bands $I$ in $\sigma_{r_k}(V)$ are of type $A$ implying $I\subseteq\sigma_r(V)$. 
The inclusion $I_j^{(k+1)}\subseteq I_j^k$ for all $1\leq j\leq q$ is a consequence of the tower property for type $B$ bands \cite[Property~(B1)]{BBL24}.

\medskip

(c) Define the infinite continued fraction expansion $[0,a_0,a_1,\ldots,a_n,k,1,1,1,\ldots]$ with unique associated irrational number $\alpha$ and its spectral $\alpha$-tree with partial order $\prec$ on the vertex set defined in \cite[def.~1.4]{BBL24}.

Let $n\in\N$ be odd and $c_r\neq [0,0,1]$. Then a short induction using the construction of the spectral $\alpha$-tree implies that there are $q$ vertices $u_1,\ldots,u_q$ in level $n$ of the tree and $q$ vertices $v_1, \ldots, v_q$ in level $n+1$ such that 
\begin{itemize}
\item $u_1$ has label $A$ (see \cite{BBL24} for details), 
\item $v_1,\ldots,v_q$ have label $B$ and
\item $v_1\prec u_1\prec v_2 \prec \ldots \prec v_q\prec u_q$.
\end{itemize}
Using the map $\Psi$ defined in \cite[thm.~1.7]{BBL24}, we get $\Psi(u_j)=J_j$ and $\Psi(v_j)=I_j^k$ for all $1\leq j\leq q$ such that $I_j^k\prec J_j \prec I_{j+1}^k$. 

The case $n\in\N$ even and $c_r\neq [0,0]$ is treated similarly but $u_1$ has label $B$ and $u_1\prec v_1 \prec u_2 \prec \ldots \prec u_q\prec v_q$. For the cases $c_r=[0,0]$ and $c_r=[0,0,1]$, (c) is proven in \cite[lem.~5.3]{BBL24}. 

\medskip

(d) 
The spectrum of these operators can be described by certain polynomials, which are traces of so-called transfer matrices. 
The reader is referred to \cite[sec.~3, app.~II]{BaBeBiRaTh24} for a detailed elaboration. 
We only summarize the necessary properties needed here.
Let $c$ be a finite continued fraction expansion and $s\in[0,1]\cap\QM$ be its evaluation. 
Then there is polynomial $t_c:\RM\to\RM$ (depending on $V$, which we suppress) such that $\sigma_s(V)=\{E\in\RM\,|\, t_c(E)\in[-2,2]\}$, see e.g. \cite[prop.~4.10]{BBL24}. 
Furthermore, $E\in\sigma_c(V)$ is a spectral band edge if and only if $t_c(E)\in\{-2,+2\}$. 
First note that if $V>4$, then $I_j^k\cap \sigma_r(V)=\emptyset$ for all $k\in\N$ by \cite[prop.~4.7, thm.~4.22]{BaBeBiRaTh24}. 
Therefore there is no loss of generality in assuming $V\leq 4$.

Let $E:=E(V)\in\sigma_{r}(V)$ be a spectral band edge of $\sigma_{r}(V)$, i.e. $|t_{c_r}(E)|= 2$. 
We will show that there is a $k_E\in\N$ such that 
$$
\big|t_{d_k}(E)\big|>2, 
	\qquad k\geq k_E.
$$
Suppose for a moment we have proven this. 
Define 
$$
k_0:= \max\{ k_E \,|\, E \textrm{ spectral band edge of } \sigma_{r}(V)\},
$$ 
which is finite since we only have finitely many spectral band edges in $\sigma_{r}(V)$ by (a). 
Thus, for $k\geq k_0$, $|t_{d_{k}}(E)|>2$ follows for all spectral band edges $E$ in $\sigma_{r}(V)$ and so $E\not\in \sigma_{r_{k}}(V)$. 
Hence, (b) yields $I_j^k\cap \sigma_{r}(V)=\emptyset$ for all $k\geq k_0$ finishing the proof.

We proceed proving the claimed estimate for a fixed spectral band edge $E:=E(V)\in\sigma_{r}(V)$.
Recall that $c_r:=[0,a_0, a_1, \dots , a_{n}]$ is the long or short continued fraction expansion of $r$. Define the string 
$$
d_0:= 
	\begin{cases}
		[0],\qquad &\textrm{ if } c_r=[0,0],\\
		[0,a_0, a_1, \dots , a_{n-1}], \qquad &\textrm{ else},
	\end{cases}
$$
with evaluation $r_0$ and spectrum $\sigma_{r_0}(V)$. 
Note that if $d_0=[0]$, we set $r_0=\infty$ and $\sigma_{r_0}(V)=\RM$ and $t_{d_0}(E)=2$, see a discussion in \cite[sec.~2.2]{BBL24} and in \cite[exam.~3.4]{BaBeBiRaTh24}.


Since $|t_{c_r}(E)|= 2$, the statements \cite[lem.~II.1~(b), prop.~II.2~(c)]{BBL24} (see also \cite[lem.~III.2~(b), lem.~3.9]{BaBeBiRaTh24}) imply
\begin{equation}
\label{eq:TraceEstimate_+2}
t_{c_r}(E)=\pm2 
\qquad\Rightarrow\qquad
\big|t_{d_{k+1}}(E)\big| 
	= k \left| \left(1+\frac{1}{k}\right) t_{d_1}(E) \mp t_{d_0}(E) \right|.
\end{equation}

\underline{Case 1:} We treat the case $t_{c_r}(E)=2$. We wish to show $t_{d_1}(E(V))\neq t_{d_0}(E(V))$ for all $V\in(0,4]$.
Since the maps
$$
V\mapsto E(V), \qquad
V\mapsto t_{d_1}(E(V)), \qquad\textrm{and}\qquad
V\mapsto t_{d_0}(E(V))
$$
are continuous \cite[cor.~3.2]{BBL24}, assume towards contradiction that there is a $V_0>0$ such that 
$$
C_0:=t_{d_1}(E(V_0))=t_{d_0}(E(V_0)) 
\qquad\textrm{and}\qquad
t_{d_1}(E(V)) \neq t_{d_0}(E(V)) \textrm{ for all } V>V_0.
$$ 
Then the Fricke-Voigt invariant \cite[prop.~2.3]{Raym95} (see also \cite[prop.~II.2~(b)]{BBL24} and \cite[prop.~3.13]{BaBeBiRaTh24}) and $t_{c_r}(E)=2$ lead to
$$
4+ V^2
	= 4 + t_{d_1}(E(V))^2 + t_{d_0}(E(V))^2 - 2t_{d_1}(E(V))t_{d_0}(E(V)).
$$
Note that $C_0=0$ would imply $V_0=0$ and so $C_0\neq 0$ can be assumed. 
Since $t_{d_1}(E(V_0))=t_{d_0}(E(V_0))$ and the maps are continuous in $V$, there is an $\varepsilon>0$ such that for all $V_0<V<V_0+\varepsilon$, the sign of $t_{d_1}(E(V))$ and $t_{d_0}(E(V))$ agree. 
Thus, the Fricke-Voigt invariant yields for $V_0<V<V_0+\varepsilon$ that
$$
V^2 = |t_{d_1}(E(V))|^2 + |t_{d_0}(E(V))|^2 - 2|t_{d_1}(E(V))| \ |t_{d_0}(E(V))|
	= \big( |t_{d_1}(E(V))| - |t_{d_0}(E(V))|\big)^2.
$$
Sending $V$ from above to $V_0$ implies that $V_0=0$, a contradiction. Hence, $t_{d_1}(E(V))\neq t_{d_0}(E(V))$ follows for all $V\in(0,4]$.
Thus, Equation~\eqref{eq:TraceEstimate_+2} implies that there is $k_E\in\N$ such that 
$$
\big|t_{d_{k+1}}(E)\big| \geq k \left| \left(1+\frac{1}{k}\right) t_{d_1}(E) - t_{d_0}(E) \right| > 2, \qquad k\geq k_E.
$$

\underline{Case 2:} The case $t_{c_r}(E)=-2$ is treated similarly by first proving that $t_{d_1}(E(V))\neq-t_{d_0}(E(V))$ for all $V\in(0,4]$. 
Then, Equation~\eqref{eq:TraceEstimate_+2} implies that there is $k_E\in\N$ such that 
$$
\big|t_{d_{k+1}}(E)\big| \geq k \left| \left(1+\frac{1}{k}\right) t_{d_1}(E) + t_{d_0}(E) \right| > 2, \qquad k\geq k_E.
$$
This finishes the proof.
\end{proof}

\begin{rem}
For a finite string $c$ and its evaluation $s\in[0,1]$, define $\sigma_{c}(V):=\sigma_{s}(V)$ for $V\in\RM$. If $c=[0]$, then set $\sigma_{[0]}(V)=\RM$  as described in \cite{BBL24,BaBeBiRaTh24}.
Let $c_1:=[a_{-1},a_0,\ldots,a_{n-1}]$, $c_2:=[a_{-1},a_0,\ldots,a_n]$ and $c_k:=[a_{-1},a_0,\ldots,a_n,k]$ for $k,n\in\N$.
We actually proved in Lemma~\ref{lem:Spectrum_event_disjoint}~(d) that for all $V\neq 0$, there exists a $k_0\in\N$ such that 
$$
\sigma_{c_1}(V)\cap \sigma_{c_2}(V) \cap \sigma_{c_k}(V) = \emptyset \quad \textrm{ for all } k\geq k_0.
$$
This generalizes \cite[prop.~3.1~(iii)]{Raym95}, confer also \cite[prop.~4.7]{BaBeBiRaTh24}. The result should also be compared with \cite[lem.~7.16]{BBL24} from which our proof is motivated. 
\end{rem}

With this at hand, we can specify Proposition~\ref{prop:EssentialSpectrum} for the operators defined in Equation~\eqref{eq:SchroedOper}. 


\begin{proposi}
\label{prop:DiscreteSpectrum=q_points}
Let $V\neq 0$, $r=\frac{p}{q}\in[0,1]\cap\QM$ with $p$ and $q$ coprime and $r_-,r_+\in\ol{[0,1]}_F$ be the left and right limit point as defined in Corollary~\ref{cor:Completion_FareySpace}. 
Consider the $q$ disjoint spectral bands $J_1\prec J_2\prec \ldots \prec J_q$ with $\sigma_{r}(V)=\bigsqcup_{j=1}^q J_j$.
\begin{enumerate}[(a)]
\item If $r=0=\frac{0}{1}$, then there is an $E^+=E^+(V)\in\RM$ such that 
$$
J_1\prec \{E^+\} \qquad \textrm{and} \qquad \sigma_{0_+}(V)=\sigma_{0}(V)\cup\{E^+(V)\}.
$$
\item If $r=1=\frac{1}{1}$, then there is an $E^-=E^-(V)\in\RM$ such that 
$$
\{E^-\} \prec J_1
 \qquad \textrm{and}\qquad 
 \sigma_{1_-}(V)=\sigma_{1}(V)\cup\{E^-(V)\}.
$$
\item If $r\in(0,1)$ with continued fraction expansions $\ca(r)=[0,0,a_1,\ldots,a_n+1]$ and $\cb(r)=[0,0,a_1,\ldots,a_n,1]$, then there are $E_1^{\pm}=E_1^{\pm}(V), \ldots, E_q^{\pm}=E_q^{\pm}(V)\in\RM$ such that\\ 
for $n\in\N$ even,
\begin{align*}
\{E_1^-\} \prec J_1\prec \{E_2^-\} \prec \ldots \prec \{E_q^-\}\prec J_q 
	\quad &\textrm{ and }  \quad
	\sigma_{r_-}(V)=\sigma_{r}(V)\cup\{E_1^-(V),\ldots,E_q^-(V)\},\\[0.1cm]
 J_1\prec \{E_1^+\} \prec J_2 \prec \ldots \prec J_q\prec \{E_q^+\} 
 	\quad  &\textrm{ and }  \quad
 	\sigma_{r_+}(V)=\sigma_{r}(V)\cup\{E_1^+(V),\ldots,E_q^+(V)\},
\end{align*}
and for $n\in\N$ odd,
\begin{align*}
 J_1\prec \{E_1^-\} \prec J_2 \prec \ldots \prec J_q\prec \{E_q^-\}  
 	 \quad &\textrm{ and }  \quad 
 	 \sigma_{r_-}(V)=\sigma_{r}(V)\cup\{E_1^-(V),\ldots,E_q^-(V)\},\\[0.1cm]
\{E_1^+\} \prec J_1\prec \{E_2^+\} \prec \ldots \prec \{E_q^+\}\prec J_q  
	\quad &\textrm{ and }  \quad 
	\sigma_{r_+}(V)=\sigma_{r}(V)\cup\{E_1^+(V),\ldots,E_q^+(V)\}.
\end{align*}
\item We have $\mu(\sigma_r(V))=\mu(\sigma_{r_-}(V))=\mu(\sigma_{r_+}(V))$ where $\mu$ is the Lebesgue measure.
\item The maps of these eigenvalues $(0,\infty)\ni V\mapsto E_j^\pm(V)$ are Lipschitz continuous.
\end{enumerate}
\end{proposi}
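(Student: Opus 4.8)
The plan is to reduce all of (a)--(d) to a single statement about the spectra $\sigma_{r_k}(V)$ of the rational approximants used in the proof of Proposition~\ref{prop-Defects}, and then to quote Lemma~\ref{lem:Spectrum_event_disjoint}. First, conjugating with the unitary $(U\psi)(n)=(-1)^n\psi(n)$ gives $UH_{\omega,V}U^{-1}=-H_{\omega,-V}$ for every $\omega\in\Aa^\ZM$, hence $\sigma_x(-V)=-\sigma_x(V)$ for all $x\in\ol{[0,1]}_F$; so it suffices to prove (a)--(c) for $V>0$, the case $V<0$ following after the reflection $E\mapsto -E$ (which reverses $\prec$, and is also why the orderings in (a)--(c) are stated the way they are). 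Then (d) is immediate: by (a)--(c), $\sigma_{r_\pm}(V)$ differs from $\sigma_r(V)$ by finitely many points, which carry no Lebesgue mass. For (e) I argue separately below.

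For (a)--(c), fix $r=\tfrac pq$ and a sign. From the proof of Proposition~\ref{prop-Defects} (and Example~\ref{exam:Defects_0_&_1} when $r\in\{0,1\}$) there is a sequence $(r_k)_k\subset[0,1]\cap\QM$ with $\lim_F r_k=r_\pm$ whose continued fractions have the form $[0,a_0,\dots,a_N,k]$, the base string $c_r$ being $\ca(r)$ or $\cb(r)$ according to the sign and the parity of $n$; since these two strings differ in length by one, the parity of the ``$n$'' appearing in Lemma~\ref{lem:Spectrum_event_disjoint} toggles between $r_+$ and $r_-$. Theorem~\ref{theo-SpeCon} yields $\sigma_{r_k}(V)\to\sigma_{r_\pm}(V)$ in the Hausdorff metric (the convergence $\lim_F r_k=r_\pm$ being provided by Proposition~\ref{prop-convergence_in_farey_topo}). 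Applying Lemma~\ref{lem:Spectrum_event_disjoint} to the base string $c_r$, one gets that $\sigma_{r_k}(V)$ is the disjoint union of $q$ bands $I_1^k,\dots,I_q^k$ (labelled so that they interleave with the bands $J_1\prec\dots\prec J_q$ of $\sigma_r(V)$ exactly as in part~(c) of that lemma) together with ``type-$A$'' bands, each contained in $\sigma_r(V)$, and moreover $I_j^{k+1}\subseteq I_j^k$ with $I_j^k\cap\sigma_r(V)=\emptyset$ for $k$ large; for $r\in\{0,1\}$ (i.e. $c_r=[0,0]$ or $c_r=[0,0,1]$) the ordering is the one recorded there for those two strings. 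Matching the parity of $c_r$ against the parity of $n$ in $\ca(r)=[0,0,a_1,\dots,a_n+1]$ reproduces precisely the interleaving patterns claimed in (a), (b), (c).

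It remains to see that each $I_j^k$ shrinks to a point and that the count is exactly $q$. Put $K_j:=\bigcap_k I_j^k$, a nonempty compact interval; by Hausdorff convergence $K_j\subset\sigma_{r_\pm}(V)$. By the ``eventually disjoint'' clause, $K_j\cap\sigma_r(V)=\emptyset$, and since $\sigma_r(V)=\sigma_{\mathrm{ess}}(H_{\omega_{r_\pm},V})$ by Proposition~\ref{prop:EssentialSpectrum} while $\sigma_{r_\pm}(V)\setminus\sigma_{\mathrm{ess}}(H_{\omega_{r_\pm},V})$ is discrete, the interval $K_j$ must be a single point $E_j^\pm(V)$, lying in one of the $q-1$ bounded gaps or one of the two unbounded gaps of $\sigma_r(V)$; the $q$ points are pairwise distinct because the bands $J_i$ separate them. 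Finally, $\sigma_r(V)\subseteq\sigma_{r_\pm}(V)$ holds by $\Omega_r\subseteq\Omega_{r_\pm}$ (Proposition~\ref{prop-Defects}~(a)), while the sets $\sigma_r(V)\cup\bigcup_j I_j^k$ form a decreasing sequence of compacta with intersection $\sigma_r(V)\cup\{E_1^\pm,\dots,E_q^\pm\}$ and $\sigma_{r_k}(V)\subseteq\sigma_r(V)\cup\bigcup_j I_j^k$; since a Hausdorff limit of sets lying in a decreasing sequence of compacta lies in the intersection, this forces $\sigma_{r_\pm}(V)=\sigma_r(V)\cup\{E_1^\pm,\dots,E_q^\pm\}$, completing (a)--(c).

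For (e), fix $V_0>0$ and set $E:=E_j^\pm(V_0)$. Any $\ell^2(\ZM)$-eigenfunction of the whole-line operator $H_{\omega_{r_\pm},V}$ decays at $\pm\infty$, so its Wronskian with a second eigenfunction for the same energy vanishes; hence all eigenvalues of $H_{\omega_{r_\pm},V}$ are simple. The map $V\mapsto H_{\omega_{r_\pm},V}$ is norm-Lipschitz with constant $1$ (the difference being $(V-V')\,\mathrm{diag}(\omega_{r_\pm})$), and by Proposition~\ref{prop:EssentialSpectrum} its essential spectrum $\sigma_r(V)$ is $1$-Lipschitz in $V$ as well, so $E$ stays at positive distance from the essential spectrum for $V$ near $V_0$; as $E$ is the only spectral point of $H_{\omega_{r_\pm},V_0}$ in a small ball $B(E,\rho)$, the Riesz projection along $\partial B(E,\rho)$ stays of rank one for $V$ near $V_0$, so there is a unique (simple) eigenvalue $E_j^\pm(V)\in B(E,\rho)$, and first-order perturbation theory gives $\frac{d}{dV}E_j^\pm(V)=\langle\psi_V,\mathrm{diag}(\omega_{r_\pm})\psi_V\rangle/\|\psi_V\|^2\in[0,1]$, with $\psi_V$ the eigenfunction. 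Hence $E_j^\pm$ is locally $1$-Lipschitz on $(0,\infty)$, and therefore Lipschitz continuous. The one genuinely delicate point in all of this is the parity bookkeeping tying the two continued-fraction expansions $\ca(r),\cb(r)$ to the two labellings in Lemma~\ref{lem:Spectrum_event_disjoint}~(c); everything else is a routine combination of that lemma with the essential-spectrum stability of Proposition~\ref{prop:EssentialSpectrum}.
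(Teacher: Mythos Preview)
Your proof is correct and, for parts (a)--(d), follows essentially the same route as the paper: identify the approximating rationals $r_k$ via Proposition~\ref{prop-Defects}, feed them into Lemma~\ref{lem:Spectrum_event_disjoint}, take Hausdorff limits of the type-$B$ bands $I_j^k$, and invoke Proposition~\ref{prop:EssentialSpectrum} to force each limit to be a singleton. Your argument for the equality $\sigma_{r_\pm}(V)=\sigma_r(V)\cup\{E_1^\pm,\dots,E_q^\pm\}$ via the decreasing compacta $\sigma_r(V)\cup\bigcup_j I_j^k$ is a bit more explicit than the paper's ``by construction'' line, but the content is the same. The reduction to $V>0$ via $UH_{\omega,V}U^{-1}=-H_{\omega,-V}$ is also what the paper uses (it is hidden in the proof of Lemma~\ref{lem:Spectrum_event_disjoint} through the citation of \cite[thm.~1.9~(e)]{BBL24}).

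For (e) you take a genuinely different route. The paper simply quotes that $V\mapsto\sigma(H_{\omega,V})$ is Lipschitz in the Hausdorff metric (\cite[lem.~3.1]{BBL24}) and reads off the Lipschitz continuity of each isolated eigenvalue from that, using the interleaving in (c) to keep the labelling consistent. You instead argue via simplicity of the discrete eigenvalues (Wronskian) and rank-one Riesz projections to obtain the explicit derivative bound $\tfrac{d}{dV}E_j^\pm(V)\in[0,1]$. Both arguments are valid; yours yields the sharper constant $1$ but needs the observation---which you correctly supply---that (a)--(c) guarantee the eigenvalues stay isolated from $\sigma_{\mathrm{ess}}=\sigma_r(V)$ for \emph{every} $V>0$, so the local perturbation argument patches to a global one. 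The paper's approach is shorter because it offloads the analytic work to the cited lemma.
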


According to the previous proposition, $\sigma_{r_{\pm}}(V)$ has exactly one eigenvalue in each bounded spectral gap of $\sigma_{r}(V)$ and one eigenvalue below the infimum or above the supremum of the spectrum $\sigma_{r}(V)$.
The reader is invited to compare this result with Figure~\ref{fig-Kohmoto} where $r_+$ and $r_-$ for $r=\frac{2}{3}$ have exactly three points in the spectral gaps. 
Similarly, $s_+$ and $s_-$ for $s=\frac{1}{4}$ admit exactly four points in the spectral gaps.
Specifically, Proposition~\ref{prop:DiscreteSpectrum=q_points} provides a proof of the numerical observations addressed in \cite[rem.~3]{BIT91}.

\begin{proof}
Statements (a) and (b) follow similarly as (c) and so we only treat (c).

(c)
For $k\in\N$, let $r_k^l\in[0,1]$ be the evaluation of $c_k:=[0, a_1, \dots , a_{n},1,k]$ and $r_k^s\in[0,1]$ be the evaluation of $d_k:=[0, a_1, \dots , a_{n}+1,k]$. 
We only treat the case that $n\in\N$ is even. 
The odd case follows similarly.
Since $n\in\N$ is even, we have $r_k^l < r < r_k^s$, $\lim_F r_k^l = r_-$ and $\lim_F r_k^s = r_+$. 
We continue proving the statement for $\sigma_{r_+}(V)$.
The analog statement for $\sigma_{r_-}(V)$ is proven following the same lines.

For $1\leq j\leq q$, let $I_j^k$ be the spectral band of $\sigma_{r_k^s}(V)$ defined in Lemma~\ref{lem:Spectrum_event_disjoint}~(b). 
Moreover, let $k_0\in\N$ be such that for all $k\geq k_0$, and each $1\leq j\leq q$, we have $I_j^k\cap \sigma_r(V)=\emptyset$, which exists by Lemma~\ref{lem:Spectrum_event_disjoint}~(d).
By Theorem~\ref{theo-SpeCon}, $\lim_{k\to\infty} \sigma_{r_k^s}(V)=\sigma_{r_+}(V)$. 
Thus, $I_j^{k+1}\subseteq I_j^k$ implies that the Hausdorff limit $\lim_{k\to\infty} I_j^k=:A_j$ exists and $A_j\cap \sigma_r(V)=\emptyset$. 
Note that $A_j\cap A_i=\emptyset$ for $i\neq j$ since $I_j^k\cap I_i^k=\emptyset$ for all $k\in\N$ (spectral bands do not touch). 
By construction, we have 
$$
\sigma_{r_+}(V)\setminus \sigma_r(V)
	= \bigsqcup_{j=1}^q A_j.
$$ 
Thus, it is left to prove that $A_j=\{E_j^+\}$ is a single point for each $1\leq j\leq q$. 
Note that $E_j^+=E_j^+(V)$ depends on $V$ since $I_j^k$ does so.
Since $A_j$ is the Hausdorff limit of an interval $I_j^k$ it is either a single point or an interval. 
However, $A_j$ cannot be an interval by Proposition~\ref{prop:EssentialSpectrum} proving $A_j=\{E_j^+\}$.
Hence, $\sigma_{r_+}(V)=\sigma_{r}(V)\cup\{E_1^+(V),\ldots,E_q^+(V)\}$ is concluded.
The ordering $J_1\prec \{E_1^+\} \prec J_2 \prec \ldots \prec J_q\prec \{E_q^+\}$ follows from Lemma~5.5~(c) asserting the corresponding ordering for the spectral bands $I_j^k$ for all $k\in\N$.

(d) This is an immediate consequence of (a),(b) and (c) using $\mu(\{E_1^{\pm}, \ldots, E_q^{\pm}\})=0$.

(e) This is straightforward using that $(0,\infty)\ni V\mapsto\sigma(H_{\omega,V})$ is Lipschitz continuous in the Hausdorff metric, see e.g. \cite[lem.~3.1]{BBL24}.
\end{proof}


\section{Optimality in the Kohmoto model}
\label{sect-SpectralMapContOpt}

We show that the spectral estimates obtained in Theorem~\ref{theo-SpeCon} are optimal if $V > 4$, see Theorem~\ref{theo-Optimality}. 
This optimality result  is similar as for the Almost-Mathieu operator, where the spectral map is $\frac{1}{2}-$Hölder-continuous \cite{AMS}.
Also there optimality can be observed at certain rational points where spectral gaps are closing, see \cite{BeRa90,HaKaTa16} as well as a discussion in \cite{BT21}. 

In order to prove Theorem~\ref{theo-Optimality}, we use the links between the spectra at rational points and their continued fractions, which were first observed in \cite{Raym95} and recently further developed \cite{BBL24}. 
For convenience, we use the notation $\sigma_{x}(V) := \sigma(H_{\omega_x,V})$ for $x\in\ol{[0,1]}_F$ and $V\in\RM$ where $H_{\omega_x,V}$ is the self-adjoint operator defined in Equation~\eqref{eq:SchroedOper} and $\omega_x$ is defined in Definition~\ref{def:omega_x}.

Recall the notion of spectral bands introduced in the previous section. The following lemma collects some more structural properties in the spirit of Lemma~\ref{lem:Spectrum_event_disjoint}.

\begin{lemma}
\label{lem:BasicSpectralBands}
Let $V\neq 0$, $r\in[0,1]\cap\QM$ and $[0, a_0, \dots , a_{n}]$ be its (long or short) continued fraction expansion. 
For $k\in\N$, let $r_k\in[0,1]$ be the evaluation of $[0, a_0, \dots , a_{n},k]$. Then the following assertions hold.
\begin{enumerate}[(a)]
\item Each spectral band $I$ of $\sigma_r(V)$ contains at most $k$ spectral bands of $\sigma_{r_k}(V)$.
\item For all $k\in\N$, we have $\sigma_{r_{k+1}}(V)\subseteq \sigma_{r}(V)\cup\sigma_{r_{k}}(V)$. 
\item If $V>4$, then $\sigma_{r}(V)\cap\sigma_{r_k}(V)\cap\sigma_{r_{k+1}}(V)=\emptyset$ for $k\in\N$. 
\end{enumerate}
\end{lemma}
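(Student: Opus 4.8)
The plan is to reduce all three assertions to the trace-polynomial description of the spectra from \cite{Raym95,BBL24} (reviewed in \cite{BaBeBiRaTh24}) plus one algebraic identity. Recall that for a finite continued fraction $c$ with evaluation $s\in[0,1]\cap\QM$ there is a polynomial $t_c:\RM\to\RM$ (depending on $V$) with $\sigma_s(V)=\{E:|t_c(E)|\le 2\}$, whose band edges are the solutions of $t_c(E)\in\{-2,2\}$, and $t_c(E)=\operatorname{tr}M_c(E)$ for the $\mathrm{SL}_2(\RM)$ transfer matrix $M_c(E)$ over the period of $s$. Writing $c_r=[0,a_0,\dots,a_n]$ for the given continued fraction of $r$, $c_{n-1}:=[0,a_0,\dots,a_{n-1}]$, and $d_k:=[0,a_0,\dots,a_n,k]$, the word recursion recalled before Lemma~\ref{lem-RecWord} gives $s_{n+1}[d_k]=s_n[c_r]^k\,s_{n-1}[c_r]$, hence $t_{d_k}(E)=\operatorname{tr}\big(M_{c_r}(E)^k M_{c_{n-1}}(E)\big)$. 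Setting $x_k:=t_{d_k}$ for $k\ge1$ and $x_0:=t_{c_{n-1}}$, the identity $\operatorname{tr}(A^{k+1}B)=\operatorname{tr}(A)\operatorname{tr}(A^{k}B)-\operatorname{tr}(A^{k-1}B)$, valid for $A\in\mathrm{SL}_2$, yields the scalar recursion
\begin{equation}\label{eq:sketch-rec}
x_{k+1}(E)=t_{c_r}(E)\,x_k(E)-x_{k-1}(E),\qquad k\ge 1 .
\end{equation}
A direct computation shows that the Casoratian $x_{k+1}x_{k-1}-x_k^2$ is independent of $k$; evaluating it at $k=1$ (using \eqref{eq:sketch-rec} to write $x_2=t_{c_r}x_1-x_0$) and feeding in the Fricke--Voigt invariant \cite[prop.~2.3]{Raym95} for the pair $M_{c_r},M_{c_{n-1}}$ gives $x_{k+1}x_{k-1}-x_k^2=t_{c_r}^2-4-V^2$, equivalently, after using \eqref{eq:sketch-rec} once more to eliminate $x_{k-1}$,
\begin{equation}\label{eq:sketch-fricke}
t_{c_r}(E)^2+x_k(E)^2+x_{k+1}(E)^2-t_{c_r}(E)\,x_k(E)\,x_{k+1}(E)=4+V^2,\qquad k\ge1,\ E\in\RM .
\end{equation}
In other words, the triple $\big(t_{c_r}(E),t_{d_k}(E),t_{d_{k+1}}(E)\big)$ always lies on the Fricke surface $X^2+Y^2+Z^2-XYZ=4+V^2$.

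With \eqref{eq:sketch-fricke} in hand, part (c) is immediate: if $E\in\sigma_r(V)\cap\sigma_{r_k}(V)\cap\sigma_{r_{k+1}}(V)$, then $\big(t_{c_r}(E),x_k(E),x_{k+1}(E)\big)\in[-2,2]^3$, and a short optimisation over the cube shows $\max_{[-2,2]^3}(X^2+Y^2+Z^2-XYZ)=20$ (attained at corners such as $(2,2,-2)$, since on the face $Z=\pm2$ the function equals $(X\mp Y)^2+4$, and interior critical points give the value $4$). Hence \eqref{eq:sketch-fricke} forces $4+V^2\le20$, i.e.\ $|V|\le4$, so for $V>4$ the triple intersection is empty.

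For part (b) I would not use \eqref{eq:sketch-rec}--\eqref{eq:sketch-fricke} but instead read it off the structure already isolated in Lemma~\ref{lem:Spectrum_event_disjoint}~(b): there $\sigma_{r_k}(V)$ is the disjoint union of $q$ ``type~$B$'' bands $I_1^k\prec\dots\prec I_q^k$, none of which is contained in $\sigma_r(V)$, together with ``type~$A$'' bands, each contained in $\sigma_r(V)$, and the tower inclusions $I_j^{k+1}\subseteq I_j^k$ are recorded there as well. Every band of $\sigma_{r_{k+1}}(V)$ is therefore either of type~$A$, hence contained in $\sigma_r(V)$, or equal to some $I_j^{k+1}\subseteq I_j^k\subseteq\sigma_{r_k}(V)$; this gives $\sigma_{r_{k+1}}(V)\subseteq\sigma_r(V)\cup\sigma_{r_k}(V)$ for every $V\ne0$, matching the generality of the statement.

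Part (a) is where I expect the real work. Fixing a spectral band $I=[a,b]$ of $\sigma_r(V)$, one first notes that $t_{c_r}$ has no critical point in the interior of a band — its critical points lie in the gaps of $\sigma_r(V)$ — so $t_{c_r}$ is strictly monotone on $I$ and maps it bijectively onto $[-2,2]$; parametrising $E\in I$ by $\theta\in[0,\pi]$ through $t_{c_r}(E)=2\cos\theta$ and solving \eqref{eq:sketch-rec} in Chebyshev form gives $x_k(E)=R(\theta)\cos\!\big(k\theta-\delta(\theta)\big)$ with amplitude $R$ and phase $\delta$ built from $x_0(E),x_1(E)$. The bands of $\sigma_{r_k}(V)$ inside $I$ are then the intervals on which $|R(\theta)\cos(k\theta-\delta(\theta))|\le2$, and counting them requires showing that $\theta\mapsto k\theta-\delta(\theta)$ is increasing with total variation on $[0,\pi]$ below $(k+1)\pi$; this phase-variation estimate is the main obstacle. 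Equivalently — and this is the route I would actually take — part (a) is the statement that in the spectral tree of \cite{Raym95} (resp.\ \cite{BBL24}) each band of $\sigma_r(V)$ acquires at most one further band of $\sigma_{r_j}(V)$ each time $j$ increases by one, with at most one band of $\sigma_{r_1}(V)$ inside each band of $\sigma_r(V)$. The crude global band count (the denominator of $r_k$ is less than $(k+1)q$, so $\sigma_{r_k}(V)$ has fewer than $(k+1)q$ bands in total) is far too weak, and the per-band bound has to be imported from the recursive band-splitting bookkeeping of \cite{Raym95,BaBeBiRaTh24}, which is the technical heart of this part.
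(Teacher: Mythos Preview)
Your proposal is correct, and since the paper's own proof consists entirely of citations to \cite{Raym95,BBL24,BaBeBiRaTh24}, your write-up is really an unpacking of what lies behind those citations rather than a different route.

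For (c), your Fricke--Voigt argument is exactly the content of \cite[prop.~3.1~(iii)]{Raym95}: the triple of traces sits on the invariant surface, and the cube maximum $20$ forces $|V|\le 4$. Nothing to add. For (b), your use of the type~$A$/$B$ dichotomy from the proof of Lemma~\ref{lem:Spectrum_event_disjoint}~(b) is a legitimate internal shortcut; it is slightly circuitous in that Lemma~\ref{lem:Spectrum_event_disjoint} itself draws on \cite[lem.~7.3]{BBL24}, whereas the paper cites \cite[prop.~3.1~(ii)]{Raym95} directly, but the logic is sound and there is no circularity. For (a), you correctly diagnose that the Chebyshev/phase-variation approach is where the genuine work would be, and that the clean path is to invoke the band-splitting bookkeeping of the spectral tree; the paper does precisely this via \cite[thm.~1.7]{BBL24}, which records that each band of $\sigma_r(V)$ contains either $k$ or $k-1$ type~$A$ bands of $\sigma_{r_k}(V)$. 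One minor caveat: your word identity $s_{n+1}[d_k]=s_n[c_r]^k s_{n-1}[c_r]$ uses the generic recursion $s_{n+1}=s_n^{a_{n+1}}s_{n-1}$, which in the paper's conventions requires $n\ge 1$; the boundary case $r=0$ (i.e.\ $c_r=[0,0]$) has $s_1[d_k]=0^{k-1}1$ and needs the separate treatment the paper gives it, but this does not affect the trace recursion or the conclusions.
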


\begin{proof}
To prove (a), let $I$ be a spectral band of $\sigma_{r}(V)$. 
Then \cite[thm.~1.7]{BBL24} asserts that $I$ contains either $k$ or $k-1$ spectral bands (of type $A$) in $\sigma_{r_k}(V)$ involving the construction of the so-called spectral $\alpha$-tree by choosing $\alpha$ with infinite continued fraction expansion $[0, a_0, \dots , a_{n},k,1,1,1,\ldots]$. 
For $V>4$, this is also proven in \cite[prop.~3.1]{Raym95} and \cite[lem.~4.14]{BaBeBiRaTh24}.

The statements (b) and (c) are proven in \cite[prop.~3.1~(ii),(iii)]{Raym95}, see also \cite[lem.~4.3, prop.~4.7]{BaBeBiRaTh24}.
\end{proof}

\begin{theo}
\label{theo-Optimality}
Let $V > 4$ and $x\in\ol{[0,1]}_F\setminus [0,1]$ and $(H_{\omega,V})_{\omega\in\Aa^\ZM}$ be as defined in Equation~\eqref{eq:SchroedOper}.
Then there are $C_1,C_2>0$ and a sequences $(\alpha_j)_{j\in \N}\subseteq [0,1] \cap \QM$  with $\lim_F \alpha_j = x$ such that 
$$
C_1 d_F(\alpha_j, x)
	\leq d_H(\sigma(H_{\omega_{\alpha_j},V}), \sigma(H_{\omega_x,V}))
	\leq C_2 d_F(\alpha_j, x), \qquad j\in\N.
$$
\end{theo}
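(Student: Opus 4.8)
The plan is to take the upper bound $d_H(\sigma(H_{\omega_{\alpha_j},V}),\sigma(H_{\omega_x,V}))\le C_2\,d_F(\alpha_j,x)$ for granted: it holds for \emph{every} sequence with $\lim_F\alpha_j=x$ by Theorem~\ref{theo-SpeCon}, with $C_2=C(V)$. So the whole task is to produce a sequence realizing the matching lower bound. By Corollary~\ref{cor:Completion_FareySpace} one has $x=r_+$ or $x=r_-$ for a unique $r=\tfrac pq\in[0,1]\cap\QM$; I would treat $x=r_+$ with $r\in(0,1)$ and $n$ even in $\ca(r)=[0,a_0,\dots,a_n+1]$, the remaining cases ($n$ odd, $x=r_-$, and $r\in\{0,1\}$) being word-for-word the same using the continued fractions isolated in the proof of Proposition~\ref{prop-Defects} and in Example~\ref{exam:Defects_0_&_1}. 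Let $r_m\in(0,1)\cap\QM$ be the evaluation of $[0,a_0,\dots,a_n+1,m+1]$. The proof of Proposition~\ref{prop-Defects} shows $r<r_m$ and $\lim_F r_m=r_+=x$, while a short computation with $d_F$ (Proposition~\ref{prop-dF_metric} and Lemma~\ref{lem-FareyNgb}, or equivalently the tree representation of Theorem~\ref{theo-TreeIsoHomeo}) gives $d_F(r_m,x)=\tfrac1{q_m}$, where $q_m$ is the denominator of $r_m$; since the last partial quotient of $r_m$ equals $m+1$ one has $q_m\ge m+1$, hence $d_F(r_m,x)\le\tfrac1{m+1}$.

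For the lower bound I would recall from Proposition~\ref{prop:DiscreteSpectrum=q_points} that $\sigma_x(V)=\sigma_r(V)\cup\{E_1^+,\dots,E_q^+\}\supseteq\sigma_r(V)$, and from Lemma~\ref{lem:Spectrum_event_disjoint}(a) that $\sigma_r(V)=\bigsqcup_{i=1}^qJ_i$ with $J_1,\dots,J_q$ pairwise disjoint closed intervals of positive length. Fix $J:=J_1$. Here is the crucial use of $V>4$: by Lemma~\ref{lem:BasicSpectralBands}(c) one has $\sigma_r(V)\cap\sigma_{r_m}(V)\cap\sigma_{r_{m+1}}(V)=\emptyset$ for every $m$, so $\sigma_{r_m}(V)\cap J$ and $\sigma_{r_{m+1}}(V)\cap J$ are disjoint subsets of $J$, whence $|\sigma_{r_m}(V)\cap J|+|\sigma_{r_{m+1}}(V)\cap J|\le|J|$ and in particular $\min\{|\sigma_{r_m}(V)\cap J|,|\sigma_{r_{m+1}}(V)\cap J|\}\le\tfrac12|J|$. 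Therefore there is an infinite set $\{k_1<k_2<\cdots\}\subseteq\N$ with $|\sigma_{r_{k_j}}(V)\cap J|\le\tfrac12|J|$ for all $j$; I set $\alpha_j:=r_{k_j}$, so $\lim_F\alpha_j=x$ and $d_F(\alpha_j,x)\le\tfrac1{k_j+1}$.

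What remains is an elementary interval count. By Lemma~\ref{lem:BasicSpectralBands}(a) the band $J$ contains at most $k_j+1$ spectral bands of $\sigma_{\alpha_j}(V)$, and at most two further bands of $\sigma_{\alpha_j}(V)$ can meet $J$ (one through each endpoint of $J$), so $\sigma_{\alpha_j}(V)\cap J$ is a union of at most $k_j+3$ closed intervals of total Lebesgue measure $\le\tfrac12|J|$. Hence $J\setminus\sigma_{\alpha_j}(V)$ has Lebesgue measure $\ge\tfrac12|J|$ and is a union of at most $k_j+4$ open intervals, so one of them, say $(a_j,b_j)$, has length $b_j-a_j\ge\tfrac{|J|}{2(k_j+4)}$. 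Its midpoint $\xi_j$ lies in $J\subseteq\sigma_r(V)\subseteq\sigma_x(V)$, while $\dist(\xi_j,\sigma_{\alpha_j}(V))\ge\tfrac{b_j-a_j}{2}\ge\tfrac{|J|}{4(k_j+4)}$ because $\sigma_{\alpha_j}(V)$ has no point in $(a_j,b_j)$. Consequently $d_H(\sigma_{\alpha_j}(V),\sigma_x(V))\ge\tfrac{|J|}{4(k_j+4)}\ge C_1\,d_F(\alpha_j,x)$ for a suitable $C_1=C_1(|J|)>0$ (using $d_F(\alpha_j,x)\le\tfrac1{k_j+1}$), which together with the Theorem~\ref{theo-SpeCon} estimate finishes the proof.

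The only genuinely delicate ingredient is Lemma~\ref{lem:BasicSpectralBands}(c), and this is precisely where $V>4$ is essential: the disjointness of $\sigma_{r_m}(V)\cap J$ and $\sigma_{r_{m+1}}(V)\cap J$ prevents the increasingly numerous and thin spectral bands of the approximants from asymptotically filling $J$, which in turn forces $\sigma_{\alpha_j}(V)$ to have a gap inside $J$ of length comparable to $\tfrac1{k_j}\asymp d_F(\alpha_j,x)$ — the "closing of spectral gaps" drawn as red lines in Figure~\ref{fig-Kohmoto}. Everything else (the identification $x=r_\pm$, the value of $d_F(r_m,x)$, and the counting) is routine bookkeeping. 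When $V\le4$ this non‑filling can fail, which is why the optimality statement is restricted to $V>4$.
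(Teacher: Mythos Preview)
Your proof is correct and rests on the same core idea as the paper's: use the $V>4$ disjointness $\sigma_r\cap\sigma_{r_m}\cap\sigma_{r_{m+1}}=\emptyset$ (Lemma~\ref{lem:BasicSpectralBands}(c)) together with a Lebesgue-measure/pigeonhole argument and the band count from Lemma~\ref{lem:BasicSpectralBands}(a) to force a gap of size $\asymp 1/k_j$ inside $\sigma_r$. The Farey-distance computation and the upper bound are handled identically.

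Your execution is, however, somewhat more streamlined than the paper's. The paper introduces the auxiliary quantity $D_k:=d_H(\sigma_{r_k}\cap\sigma_r,\sigma_r)$, runs the measure argument on all of $\sigma_r$ simultaneously (Steps~3--4), and then needs a separate Step~2 to transfer the lower bound on $D_k$ to $d_H(\sigma_{r_k},\sigma_{r_\pm})$; this step requires a careful $\varepsilon$-argument to ensure the defect eigenvalues $E_1^\pm,\ldots,E_q^\pm$ do not interfere. You bypass this entirely: by working on a single band $J\subseteq\sigma_r\subseteq\sigma_x$ and directly exhibiting the midpoint $\xi_j\in\sigma_x$ of a large gap of $J\setminus\sigma_{\alpha_j}$, you get the lower bound on $d_H(\sigma_{\alpha_j},\sigma_x)$ in one stroke. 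The price is the small extra bookkeeping of the ``$+2$'' type-$B$ bands that can straddle the endpoints of $J$ (justified by Lemma~\ref{lem:Spectrum_event_disjoint}(b)--(c)), which the paper absorbs globally into its count $\sharp G_k\le q(k+1)$. Both arguments yield constants depending on $\mu(\sigma_r)$ (globally) versus $|J_1|$ (your version).
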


\begin{proof}
Let $V>4$ be fixed and $\sigma_x(V):=\sigma(H_{\omega_x,V})$ for $x\in\ol{[0,1]}_F$.
To simplify reading, we omit the $V$ dependence in the notation.
By Corollary~\ref{cor:Completion_FareySpace}, there exists an $r =\frac{p}{q} \in [0,1]\cap \QM$ with $p$, $q$ coprime such that $x\in\{r_-,r_+\}$.
Let $k\in\N$.
If $r=0$, define $r_k$ as the evalutaion of $[0,0,k]$ satisfying $\lim_F r_k=0_+$.
If $r=1$, define $s_k$ as the evalutaion of $[0,0,1,k]$ satisfying $\lim_F s_k=1_-$.
If $r\in(0,1)$, let $\cb=[0,0, a_1, \dots a_n, 1]$ be the long continued fraction expansion of $r$.
Define $r_k$ as the evaluation $[0,0, a_1, \ldots, a_n, 1, k]$ and $s_k$ as the evaluation of $[0,0, a_1, \ldots, a_n + 1, k]$. 
If $n$ is even, then $r_k < r < s_k$, $\lim_F r_k=r_-$ and $\lim_F s_k=r_+$ hold, see  Propositon~\ref{prop-convergence_in_farey_topo}. 
Similarly, if $n$ is odd, then $s_k < r < r_k$, $\lim_F r_k=r_+$ and $\lim_F s_k=r_-$.
We will show that one can extract a suitable subsequence of $(r_k)_{k\in\N}$ respectively $(s_k)_{k\in\N}$ satisfying the claim of the theorem. 
Since the cases are treated similarly, we only focus on the case that $n$ is even and we only extract a subsequence of $(r_k)_{k\in\N}$ satisfying $\lim_F r_k=r_-$.
Thus, Theorem~\ref{theo-SpeCon} implies $\lim_{k\to\infty} d_H\big(\sigma_{r_k},\sigma_{r_-}\big)=0$.


The proof is divided up in various steps using the notation $D_k:=d_H\big(\sigma_{r_k}\cap\sigma_r, \sigma_r\big)$ for $k\in\N$. 
Moreover, for $a\in\RM$ and $B\subseteq \RM$ compact, we define the distance $\dist(a,B):=\inf_{b\in B} |b-a|$.
With this notation fixed, we express the Hausdorff distance by
$$
d_H(A,B)=\max\left\{ \sup_{y\in A}\dist(y,B), \sup_{z\in B}\dist(z,A)\right\}, \qquad A,B\in \Kk(\RM).
$$

\underline{Step 1:} We show that there is a $q'\in\N_0$ such that $\frac{1}{k(q+q')}\leq d_F(r_k ,r_-) \leq \frac{1}{kq}$ for all $k\geq 2$.

Let $k\geq 2$.
Standard properties of continued fraction expansion (see e.g. \cite[lem.~3A]{Sc80}) assert that $r_k = \frac{p_k}{q_k}$ with $p_k$ and $q_k$ coprime and $q_k=kq +q'$ for some $q'\in\N$. 
Thus, $r_k\in F_{kq+q'} \setminus F_{kq+q'-1}$ holds.
Since $n$ is even, Lemma~\ref{lem-FarNghCF} (applied to $[0,0, a_1, \ldots, a_n, 1, k-1,1]$) asserts that $(r_k)_* = r_{k-1}$ and $(r_k)^* = r$ are the Farey neighbors of $r_k$ in $F_{kq+q'}$.

Thus, $r_k,r_- \in \overline{(r_{k-1},r)}_F$ where $r_{k-1},r\in F_m$ are $m$-Farey neighbors for $m=kq +q'-1$ and $r_k\in F_{m+1}$. 
Hence, 
$$ 
d_F(r_k ,r_- ) 
	= \frac{1}{m+1}
	= \frac{1}{kq + q'} 
	\begin{cases}
		&\leq \frac{1}{kq},\\
		&\geq \frac{1}{k(q+q')},
	\end{cases}
	\qquad k\in\N.
$$

Note that if $r_k$ is defined as the evaluation of $[0,0,k]$ and $r=\frac{0}{1}=\frac{p}{q}$, then $d_F(r_k ,0_+ )= \frac{1}{kq}$ holds since in this case $q_k= (k-1)q+1=k$.
\medskip

\underline{Step 2:} We show that there exists an $l_0\in\N$ such that $D_k\leq d_H\big(\sigma_{r_k},\sigma_{r_-}\big)$ for all $k\geq l_0$.

By Proposition~\ref{prop:DiscreteSpectrum=q_points}, we have $\sigma_{r_-}\setminus\sigma_r = \{E_1,\ldots,E_q\}$. 
Let $k_0\in\N$ be chosen as in Lemma~\ref{lem:Spectrum_event_disjoint}~(d). 
Then $\sigma_{r_{k_0}}\setminus\sigma_r$ is compact by Lemma~\ref{lem:Spectrum_event_disjoint}~(c) and (d).
Thus, 
$$
\varepsilon 
	:= \frac{1}{2} \min\left\{ 
			\min_{1\leq j\leq q} \dist\big(E_j,\sigma_r\big),
			\inf_{E\in\sigma_{r_{k_0}}\setminus\sigma_r} \dist\big(E,\sigma_r\big)
		 \right\}
		 >0.
$$
Choose $l_0\geq \max\{k_0,2\}$ such that $d_H\big(\sigma_{r_k},\sigma_{r_-}\big)<\varepsilon$ for $k\geq l_0$. 
Let $0<\delta<\frac{\varepsilon}{2}$ and $k\geq l_0$. 

If $y\in\sigma_r\subseteq \sigma_{r_-}$ (inclusion holds by Proposition~\ref{prop:EssentialSpectrum}), then there exists a $z\in \sigma_{r_k}$ such that $|y-z|<d_H\big(\sigma_{r_k},\sigma_{r_-}\big)+\delta$ by the definition of the Hausdorff metric. 
Since $k\geq l_0$, we conclude 
$$
|y-z|
	< \frac{3}{4}\inf_{E\in\sigma_{r_{k_0}}\setminus\sigma_r} \dist\big(E,\sigma_r\big).
$$ 
By Lemma~\ref{lem:BasicSpectralBands}~(b), we have $z\in\sigma_{r_k}\subseteq \sigma_r\cup\sigma_{r_{k_0}}$.
Hence, the previous estimate and $y\in\sigma_r$ lead to $z\in \sigma_r$. 
Thus, $z\in\sigma_r\cap\sigma_{r_k}$ and so
$$
\dist\big(y,\sigma_{r_k}\cap\sigma_r\big)
	= \inf_{\tilde{z}\in \sigma_{r_k}\cap\sigma_r} |y-\tilde{z}|
	< d_H\big(\sigma_{r_k},\sigma_{r_-}\big)+\delta,
	\qquad y\in\sigma_r.
$$
If $z\in \sigma_{r_k}\cap\sigma_r$, then there exists a $y\in\sigma_{r_-}$ such that $|y-z|<d_H\big(\sigma_{r_k},\sigma_{r_-}\big)+\delta$. 
Since $k\geq k_0$, we conclude $|y-z|< \frac{3}{4}\min_{1\leq j\leq q} \dist\big(E_j,\sigma_r\big)$ implying $y\neq E_j$ for all $1\leq j\leq q$. 
Thus, $y\in \sigma_r$ follows leading to
$$
\dist\big(z,\sigma_r\big)
	= \inf_{\tilde{y}\in \sigma_r} |\tilde{y}-z|
	< d_H\big(\sigma_{r_k},\sigma_{r_-}\big)+\delta
	,\qquad z\in \sigma_{r_k}\cap\sigma_r.
$$
Since $\delta>0$ was arbitrary, the previous considerations lead to
$$
D_k=d_H\big(\sigma_{r_k}\cap\sigma_r, \sigma_r\big)
	\leq d_H\big(\sigma_{r_k},\sigma_{r_-}\big)
	, \qquad k\geq k_0.
$$

\medskip

\underline{Step 3:} We prove 
$$
\mu\big(\sigma_r\cap\sigma_{r_k}\big) 
	\leq \mu\big(\sigma_r\big) 
	\leq \mu\big(\sigma_r\cap\sigma_{r_k}\big) + 2q(k+1) D_k
	, \qquad k\in\N,
$$
where $\mu(B)$ denotes the Lebesgue measure of $B\subseteq \RM$. 

The first inequality follows trivially by monotonicity of the Lebesgue measure. 
Let $k\in\N$ and define $G_k := \{I : I \text{ is a connected component of } \sigma_r\setminus \sigma_{r_k}\}$. 
By definition of the Hausdorff metric, we conclude $2D_k \geq \max_{I \in G_k} \mu(I)$. 
Then Lemma~\ref{lem:Spectrum_event_disjoint}~(a) and Lemma~\ref{lem:BasicSpectralBands}~(a) imply $\sharp G_k \leq q(k+1)$. 
Hence,
$$ 
\mu\big(\sigma_r \setminus \sigma_{r_k}\big)
	= \sum_{I \in G_k} \mu(I) 
	\leq q(k+1) \max_{I \in G_k} \mu(I)
	\leq 2q(k+1) D_k,
$$
follows and implies
$$ 
\mu\big(\sigma_r\big)
	= \mu\big(\sigma_r \cap \sigma_{r_k}\big) + \mu\big(\sigma_r \setminus \sigma_{r_k}\big)
	\leq \mu\big(\sigma_r \cap \sigma_{r_k}\big) + 2q(k+1)D_k.
$$

\medskip

\underline{Step 4:} Using $V>4$, we prove for all $k\in\N$,
$$
\max\left\{ (k+1)D_k, (k+2)D_{k+1} \right\}
	\geq \frac{\mu(\sigma_r(V))}{4q}>0.
$$

First note that $\frac{\mu(\sigma_r)}{4q}$ is nonzero since the spectrum $\sigma_r$ is the disjoint union of $q$ intervals of positive length, see Lemma~\ref{lem:Spectrum_event_disjoint}~(a).
Since $V>4$, Lemma~\ref{lem:BasicSpectralBands}~(c) implies that the sets $\sigma_r\cap\sigma_{r_k}$ and $\sigma_r\cap\sigma_{r_{k+1}}$ are disjoint subsets of $\sigma_r$. 
Thus, Step 3 leads to
\begin{align*}
4q \max\left\{ (k+1)D_k, (k+2)D_{k+1} \right\} 
	\geq &2q(k+1)D_k + 2q(k+2)D_{k+1}\\
	\geq &2\mu\big(\sigma_r\big) - \mu\big(\sigma_r\cap\sigma_{r_k}\big) - \mu\big(\sigma_r\cap\sigma_{r_{k+1}}\big)\\
	\geq &\mu\big(\sigma_r\big).
\end{align*}

\medskip

\underline{Step 5:} Step 4 implies that there is a subsequence $(k_j)_{j\in\N}$ (i.e. $k_j<k_{j+1}$) such that 
$$
(k_j+1)D_{k_j}
	\geq \frac{\mu(\sigma_r)}{4q}
	>0
	,\qquad j\in\N.
$$ 
Thus, there exists a $j_0\in\N$ such that $k_j\geq \max\{l_0,2\}$ (where $l_0$ is the integer chosen in Step 2) and $k_jD_{k_j}\geq \frac{\mu(\sigma_r)}{8q}>0$ for all $j\geq j_0$. 
Set $C_1:=\frac{\mu(\sigma_r)}{8}$. 
Then Step 1 leads to 
$$
D_{k_j}
	\geq C_1 \frac{1}{q k_j}
	\geq C_1 d_F(r_{k_j},r_-)
	, \qquad j\geq j_0.
$$ 
On the other hand, Theorem~\ref{theo-SpeCon_PatternEquiv} asserts that there is a constant $C_2>0$ such that 
$$
d_H\big(\sigma_{r_{k_j}},\sigma_{r_-}\big) \leq C_2 d_F(r_{k_j},r_-)
$$ 
for $j\in\N$. 
Invoking Step 2 (using $k_j\geq l_0$ for $j\geq j_0$), we  conclude
\begin{equation}
\label{eq:Optimality_SpectralDistance_Estimates}
C_1 d_F(r_{k_j},r_-) 
	\leq D_{k_j} 
	\leq d_H\big(\sigma_{r_{k_j}},\sigma_{r_-}\big) 
	\leq C_2 d_F(r_{k_j},r_-)
	, \qquad j\geq j_0.
\end{equation}
This proves the claimed statement for $\alpha_j:=r_{k_{j+j_0}}$ and $j\in\N$ since $\lim_F\alpha_j=\lim_F r_k=r_-$.
%
%
\end{proof}

\begin{rem}
The assumption $V>4$ is only used to have that
$
\sigma_{r_k}(V)\cap \sigma_{r}(V)
$
and
$
\sigma_{r_{k+1}}(V)\cap\sigma_{r}(V)
$
are disjoint, which is used in Step 4.
\end{rem}

From the previous considerations, we can actually conclude the following observations for the convergence of the Lebesgue measure at the rational points.

\begin{coro}
\label{cor:Measure_Rate_RationalAppr}
Let $V\neq 0$, $r\in[0,1]\cap\QM$ and $[0,a_0, a_1, \dots , a_{n}]$ be its (long or short) continued fraction expansion. For $k\in\N$, let $r_k\in[0,1]$  be the evaluation of $[0,0, a_1, \dots , a_{n},k]$. 
\begin{enumerate}[(a)]
\item If there is a subsequence $(k_j)_j$ such that 
$$\lim_{j\to\infty}k_j d_H\big(\sigma_{r_{k_j}}(V)\cap\sigma_{r}(V),\sigma_{r}(V)\big)=0,
$$ 
then the Lebesgue measure of these sets converge, i.e.
	$\lim\limits_{j\to\infty} \mu\big(\sigma_{r}(V)\cap\sigma_{r_{k_j}}(V)\big)=\mu\big(\sigma_{r}(V)\big)$.
\item If $V>4$, then there is a subsequence $(k_j)_{j\in\N}$ such that 
	$$
	\lim\limits_{j\to\infty} \mu\big(\sigma_{r}(V)\cap\sigma_{r_{k_j}}(V)\big)\leq \frac{1}{2}\mu\big(\sigma_{r}(V)\big).
	$$
\item If $V>4$, then there are $C_1,C_2>$ and a subsequence $(k_j)_{j\in\N}$ such that 
$$
\frac{C_1}{k_j} 
	\leq d_H\big(\sigma_{r_{k_j}}(V)\cap\sigma_{r}(V),\sigma_{r}(V)\big)
	\leq \frac{C_2}{k_j}, \qquad j\in\N.
$$
\item If $V>4$ and there is a subsequence $(k_j)_{j\in\N}$ such that $\lim\limits_{j\to\infty}\mu\big(\sigma_{r}(V)\cap\sigma_{r_{k_j}}(V)\big)=\mu\big(\sigma_{r}(V)\big)$, then 
	$\lim\limits_{j\to\infty}\mu\big(\sigma_{r}(V)\cap\sigma_{r_{k_j+1}}(V)\big)=0$. 
\end{enumerate}
\end{coro}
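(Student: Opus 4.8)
The plan is to reduce all four statements to two inequalities that are essentially already proved. First, Step~3 in the proof of Theorem~\ref{theo-Optimality} establishes, for every $V\neq 0$ and every $k\in\N$, the estimate
\[
0 \;\le\; \mu\big(\sigma_r(V)\big) - \mu\big(\sigma_r(V)\cap\sigma_{r_k}(V)\big) \;\le\; 2q(k+1)D_k,
\qquad D_k := d_H\big(\sigma_{r_k}(V)\cap\sigma_r(V),\,\sigma_r(V)\big);
\]
it is derived there only from Lemma~\ref{lem:Spectrum_event_disjoint}~(a), Lemma~\ref{lem:BasicSpectralBands}~(a) and the definition of the Hausdorff metric, so it applies verbatim to the $r_k$ of the present corollary. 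Second, when $V>4$, Lemma~\ref{lem:BasicSpectralBands}~(c) gives $\sigma_r(V)\cap\sigma_{r_k}(V)\cap\sigma_{r_{k+1}}(V)=\emptyset$, so that $\sigma_r(V)\cap\sigma_{r_k}(V)$ and $\sigma_r(V)\cap\sigma_{r_{k+1}}(V)$ are disjoint subsets of $\sigma_r(V)$ and therefore
\[
\mu\big(\sigma_r(V)\cap\sigma_{r_k}(V)\big) + \mu\big(\sigma_r(V)\cap\sigma_{r_{k+1}}(V)\big) \;\le\; \mu\big(\sigma_r(V)\big), \qquad k\in\N.
\]

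From here (a) and (d) are immediate: for (a), since $k_j\ge1$ the hypothesis $k_jD_{k_j}\to 0$ also gives $D_{k_j}\to 0$, hence $(k_j+1)D_{k_j}\to0$, and the first display forces $\mu(\sigma_r(V)\cap\sigma_{r_{k_j}}(V))\to\mu(\sigma_r(V))$; for (d), the second display gives $0\le\mu(\sigma_r(V)\cap\sigma_{r_{k_j+1}}(V))\le\mu(\sigma_r(V))-\mu(\sigma_r(V)\cap\sigma_{r_{k_j}}(V))\to 0$. For (b) I would argue that the second display forces, for each $k$, at least one of $k,k+1$ into the set $S:=\{l\in\N: \mu(\sigma_r(V)\cap\sigma_{r_l}(V))\le\frac12\mu(\sigma_r(V))\}$, so $S$ is infinite; enumerating $S$ increasingly and then, by Bolzano--Weierstrass on $[0,\mu(\sigma_r(V))]$, passing to a further subsequence $(k_j)$ along which $\mu(\sigma_r(V)\cap\sigma_{r_{k_j}}(V))$ converges, the limit is automatically $\le\frac12\mu(\sigma_r(V))$.

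For (c) I would re-run the scheme of the proof of Theorem~\ref{theo-Optimality}. The lower bound is Steps~4--5 there: using $V>4$ through the second display above one extracts a strictly increasing subsequence $(k_j)$ with $(k_j+1)D_{k_j}\ge\mu(\sigma_r(V))/(4q)$, hence $D_{k_j}\ge\mu(\sigma_r(V))/(8q\,k_j)$, and one sets $C_1:=\mu(\sigma_r(V))/(8q)$, which is positive by Lemma~\ref{lem:Spectrum_event_disjoint}~(a). For the upper bound I would reproduce Steps~1--2: the argument of Step~1 gives $d_F(r_k,r_\pm)\le 1/(kq)\le1/k$, where $r_\pm$ denotes whichever of $r_+,r_-$ is the Farey limit of $(r_k)$ (dictated by the parity of $n$), and the argument of Step~2 gives $D_k\le d_H(\sigma_{r_k}(V),\sigma_{r_\pm}(V))$ for all $k$ past some $l_0$; combined with the Lipschitz bound of Theorem~\ref{theo-SpeCon_PatternEquiv} this yields $D_k\le C\,d_F(r_k,r_\pm)\le C/k$ for $k\ge l_0$, so discarding the finitely many $k_j<l_0$ and setting $C_2:=C$ finishes (c).

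The only genuinely delicate step is the upper bound in (c): one must know that, for $k$ large, the part of $\sigma_{r_k}(V)$ lying outside $\sigma_r(V)$ stays uniformly bounded away from $\sigma_r(V)$, so that the ``restricted'' Hausdorff distance $D_k$ is still controlled by the full distance $d_H(\sigma_{r_k}(V),\sigma_{r_\pm}(V))$; this is precisely what Step~2 of Theorem~\ref{theo-Optimality} does, via Lemma~\ref{lem:Spectrum_event_disjoint}~(c),(d) and Proposition~\ref{prop:DiscreteSpectrum=q_points}. Everything else is bookkeeping with the two displayed inequalities and a single appeal to Bolzano--Weierstrass.
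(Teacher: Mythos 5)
Your proposal is correct and follows essentially the same route as the paper: parts (a), (b), (d) are exactly the paper's argument (Step~3 of Theorem~\ref{theo-Optimality} plus the disjointness from Lemma~\ref{lem:BasicSpectralBands}~(c) and a Bolzano--Weierstrass extraction), and for (c) you simply re-derive Equation~\eqref{eq:Optimality_SpectralDistance_Estimates} together with Step~1, which is precisely what the paper cites.
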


\begin{proof}
For simplification, the $V$ dependence is dropped in the notation and as before we set $D_k:=d_H\big(\sigma_{r_k}\cap\sigma_{r}, \sigma_{r}\big)$ for $k\in\N$.

(a) According to Step 3 in the proof of Theorem~\ref{theo-Optimality}, we have for $k\in\N$,
\begin{align*}
\mu\big(\sigma_{r}\cap\sigma_{r_k}\big) \leq \mu\big(\sigma_{r}\big) 
	\leq \mu\big(\sigma_{r}\cap\sigma_{r_k}\big) + 2q(k+1) D_k
	\leq \mu\big(\sigma_{r}\cap\sigma_{r_k}\big) + 4q \ k D_k.
\end{align*}
This immediately yields the claim in (a).

\medskip

(b) Since $V>4$, Lemma~\ref{lem:BasicSpectralBands}~(c) implies that the sets $\sigma_{r}\cap\sigma_{r_k}$ and $\sigma_{r}\cap\sigma_{r_{k+1}}$ are disjoint subsets of $\sigma_{r}$. Thus, for $k\in\N$,
\begin{align*}
2\min \left\{ \mu\big(\sigma_{r}\cap\sigma_{r_k}\big), \, \mu\big(\sigma_{r}\cap\sigma_{r_{k+1}}\big) \right\}
	\leq \mu\big(\sigma_{r}\cap\sigma_{r_k}\big) + \mu\big(\sigma_{r}\cap\sigma_{r_{k+1}}\big) 
	\leq \mu\big(\sigma_{r}\big).
\end{align*}
Hence, there is at least one subsequence $(k_j)_j$ such that $\mu\big(\sigma_{r}\cap\sigma_{r_{k_j}}\big)\leq \frac{1}{2}\mu\big(\sigma_{r}\big)$ and by passing to another subsequence there is no loss of generality in assuming that the limit 
$$
\lim_{j\to\infty} \mu\big(\sigma_{r}\cap\sigma_{r_{k_j}}\big)\leq \frac{1}{2}\mu\big(\sigma_{r}\big)
$$ 
exists.

\medskip

(c) This follows from Equation~\eqref{eq:Optimality_SpectralDistance_Estimates} and step 1 in the proof of Theorem~\ref{theo-Optimality}.

\medskip

(d) As in (b), $V>4$ implies 
$$
\mu\big(\sigma_{r}\cap\sigma_{r_k}\big) + \mu\big(\sigma_{r}\cap\sigma_{r_{k+1}}\big) 
	\leq \mu\big(\sigma_{r}\big).
$$
Hence, if $(k_j)_j$ satisfies $\lim\limits_{j\to\infty}\mu\big(\sigma_{r}\cap\sigma_{r_{k_j}}\big)=\mu\big(\sigma_{r}\big)$, then $\lim\limits_{j\to\infty}\mu\big(\sigma_{r}\cap\sigma_{r_{k_j+1}}\big)=0$ follows.
\end{proof}

\begin{rem}
Due to \cite{BIST89}, we have $\mu\big( \sigma(H_{\alpha,V})\big)=0$ for all irrational $\alpha\in[0,1]\setminus\QM$. 
On the other hand, $\mu\big( \sigma_x(V) \big)>0$ holds for all $x\in\ol{[0,1]}_F\setminus\big([0,1]\setminus\QM\big)$ using Proposition~\ref{prop:DiscreteSpectrum=q_points}~(d). 
Note that it is straightforward to prove that the Hausdorff convergence $\lim\limits_{n\to\infty} \sigma_{x_n}(V)=\sigma_{x}(V)$ implies
$$
\limsup_{n\to\infty} \mu\big( \sigma_{x_n}(V)\big) \leq \mu \big(\sigma_{x}(V)\big).
$$
Thus, the map $$
\ol{[0,1]}_F\ni x\mapsto \mu\big(\sigma(H_{\omega_x,V})\big)
$$
is continuous at all $x\in[0,1]$ irrational and else discontinuous by Corollary~\ref{cor:Measure_Rate_RationalAppr}~(b) if $V>4$.
\end{rem}

%
%
%

\printbibliography

\end{document}